\def\alphabet{\Sigma}
\def\calB{\mathcal{B}}
\def\Columns{{\mathbb Z}/ 2\size {\mathbb Z}}
\def\Rows{\{0,\ldots, \size\}}
\def\DualColumns{ \{x+\tfrac12 \mid x \in \Columns\}}
\def\DualRows{\{  y+\tfrac12 \mid y \in \{-1,\ldots,\size\}\}}
\def\peq{p^=}
\def\pneq{p^{\neq}}
\def\muall{\mu_{k,d}}
\def\mueven{\muall^0}
\def\muodd{\muall^1}
\def\configs#1{#1\rightarrow \{0,1\}}
\def\terminals{T_{k,d}}
\def\gadget{C_{k,d}}
\def\oddterminal{T^\mathrm{1}}
\def\eventerminal{T^\mathrm{0}}
\def\oddterminals{\oddterminal_{k,d}}
\def\eventerminals{\eventerminal_{k,d}}
\def\prob#1#2#3{\begin{description}\item[\it Name]#1\item[\it Instance]#2\item[\it Output]#3\end{description}}
\def\PlanarTwoSpin{\textsc{DegreeFourPlanarTwoSpin}}
\def\PlanarLogTwoSpin{\textsc{PlanarLogTwoSpin}}
\def\PlanarHardCore{\textsc{DegreeFourPlanarHardCore}}
\def\PlanarCubicIS{\textsc{PlanarCubicIS}}
\def\matching{\mathcal{M}}
\def\size{\nu}
\let\epsilon=\varepsilon
\def\hatbeta{\widehat\beta}
\def\hatgamma{\widehat\gamma}
\def\hatlambda{\widehat\lambda}
\def\hatpeq{\widehat{\peq}}
\def\hatpneq{\widehat{\pneq}}
\def\hatP{\widehat{P}}
\def\hatM{\widehat{M}}
\def\hatW{\widehat{W}}
\def\hatp{\widehat p}
\def\NP{\mathrm{NP}}
\def\BPP{\mathrm{BPP}}
\def\RP{\mathrm{RP}}
\def\hatZ{\widehat{Z}}
\def\bsigma{\boldsymbol{\sigma}}
\def\bsigmasize{\bsigma_\size}
\def\bsigmakd{\bsigma_{k,d}}
\def\bsigmasub#1{\bsigma_{#1}}
\def\sigmarv{\bsigmakd}
\def\bsigmatilde{\tilde\bsigma}
\def\sigmatilde{\tilde\sigma}
\def\bsigmahat{\hat\bsigma}
\def\Sbar{\overline S}
\def\Int{\mathop{\mathrm{Int}}}
\def\ex{\mathop{\mathbb{E}\null}}
\def\Ext{\mathop\mathrm{Ext}}
\def\calBhat{\widehat{\mathcal{B}}}
\newcommand\wt{\operatorname{wt}}
\newtheorem{theorem}{Theorem}
\newtheorem{lemma}[theorem]{Lemma}
\newtheorem{corollary}[theorem]{Corollary}
\newtheorem{proposition}[theorem]{Proposition}
\newtheorem{definition}[theorem]{Definition}
\newtheorem*{mainprop}{Proposition \ref{prop:gadget}}
\newtheorem*{mainthm}{Theorem \ref{thm:main}}
\newtheorem*{thmthree}{Theorem \ref{thm:twospin_log_pras}}
\tikzset{ 
->-/.style={very thick, decoration={
  markings,
  mark=at position 0.6 with {\arrow{>}}},postaction={decorate}},
 shade/.style={pattern color=gray, pattern=dots}
}
\title {Approximating the partition function\\ of planar 
two-state spin systems\thanks{The research leading to these results has received funding from 
EPSRC grant EP/I011935/1 and from
the European Research Council under the European Union's Seventh Framework Programme 
(FP7/2007-2013) ERC grant agreement no.\ 334828. The paper 
reflects only the authors' views and not the views of the ERC or the European Commission. 
The European Union is not liable for any use that may be made of the information contained 
therein.
}}
\author{Leslie Ann Goldberg\thanks{
Department of Computer Science, University of Oxford, Wolfson Building, Parks Road, Oxford, OX1~3QD, UK.  
} \and Mark Jerrum\thanks{
School of Mathematical Sciences, Queen Mary, University of
London, Mile End Road, London, E1~4NS, UK. } 
\and Colin McQuillan\thanks{
Department of Computer Science, University of Liverpool,
Liverpool, L69~3BX, UK.}}
\begin{document}
\maketitle

\begin{abstract}

We consider the problem of approximating the partition function of the
hard-core model on planar graphs of degree at most~$4$.
We show that when the activity~$\lambda$ is sufficiently large,
there is no fully polynomial randomised approximation scheme
for evaluating the partition function unless $\NP=\RP$.
The result extends to a nearby region of the parameter space
in a more general two-state spin system with three parameters.
We also give a polynomial-time randomised approximation scheme
for the logarithm of the partition function.

\end{abstract}
 
 \section{Introduction}
    
A spin system is a model of particle interaction on a graph.   
Every vertex of the graph is assigned a state, called a spin.
A \emph{configuration} assigns a spin to every vertex,
and the weight of the configuration is determined by
interactions of neighbouring spins.
 
In this paper, we 
consider the following two-spin model,
which applies to spin systems on a graph $G=(V,E)$.  
The model has three parameters, $\beta$, $\gamma$ and $\lambda$. 
It is easiest to view these as non-negative rationals for now --- we
will be slightly more general later.
A configuration~$\sigma\colon\configs{V(G)}$ is an assignment of the two spins ``$0$'' and ``$1$''
to the vertices in~$V$.
The configuration~$\sigma$ has a \emph{weight} $w_G(\sigma)$, which
depends upon~$\beta$, $\gamma$ and~$\lambda$. 
Let $b(\sigma)$ denote the number of edges~$(u,v)$ of~$G$
with $\sigma(u)=\sigma(v)=0$,
let $c(\sigma)$ be the number of edges~$(u,v)$ of~$G$
with $\sigma(u)=\sigma(v)=1$
and let $\ell(\sigma)$ be the number of vertices~$u$ of~$G$
with $\sigma(u)=1$.
Then $w_G(\sigma) = \beta^{b(\sigma)} \gamma^{c(\sigma)} \lambda^{\ell(\sigma)}$.
The \emph{partition function}
of the model is given by
$$Z_{\beta,\gamma,\lambda}(G) = \sum_{\sigma:  \configs{V(G)} } w_G(\sigma).$$
 
Two important special cases are  
\begin{itemize}
\item the case $\beta=1$, $\gamma=0$, which is the \emph{hard-core model}, and
\item the case $\beta=\gamma$, which is the \emph{Ising model}. 
\end{itemize}
 
The \emph{hard-core model} \cite{vdBS} is a model of a gas 
in which vertices are 
either occupied by a particle (in which case they have spin~$1$) or unoccupied 
(in which case they have spin~$0$). 
The particles cannot overlap and adjacent vertices are close together, hence~$\gamma=0$. 
The Ising model is a model of ferromagnetism.
In this paper we study the hard-core model and a region of
nearby two-state spin systems.
 
 \subsection{Previous work}

Evaluating $Z_{\beta,\gamma,\lambda}(G)$ is 
a trivial computational problem if $\beta \gamma=1$,
because the partition function factors. In other cases, the complexity of
evaluation has been
studied in detail.  When $\lambda=1$, the problem of computing the
partition function on planar $\Delta$-regular graphs is
called $\operatorname{Pl-Hol}_\Delta(a,b)$ in \cite{CK}, 
where $a$ corresponds to~$\beta$ and $b$ corresponds to~$\gamma$.  
Assume $\Delta\geq3$.  
There is a dichotomy 
\cite[Theorem 1]{CK}: for non-negative $a,b$, the problem
$\operatorname{Pl-Hol}_\Delta(a,b)$ can be computed exactly in polynomial time in the
trivial cases $a b=1$ and $a=b=0$, and in the case of the Ising model
with no external field, $a=b$.  In all other cases, the problem of exactly computing the
partition function is
\#P-hard.

A standard transformation extends this dichotomy to arbitrary~$\lambda>0$. 
Consider a configuration $\sigma\colon \configs{V(G)}$ of a planar $\Delta$-regular graph~$G$.
Counting the number of edges adjacent to a ``$1$'' spin in two ways,
we have
$\Delta \ell(\sigma)=2c(\sigma)+(|E(G)|-b(\sigma)-c(\sigma))$.
Therefore,
$$Z_{\beta,\gamma,\lambda}(G)=\lambda^{|E(G)|/\Delta}Z_{\beta\lambda^{-1/\Delta},\gamma\lambda^{1/\Delta},1}(G),$$  which is as hard to compute as
 $\operatorname{Pl-Hol}_\Delta(\beta\lambda^{-1/\Delta},\gamma\lambda^{1/\Delta})$. 
 Suppose $\beta$ and $\gamma$ are not both~$0$.
 Unless
 $\lambda = 1$, we have either
 $\beta\lambda^{-1/3}\neq\gamma\lambda^{1/3}$ or
 $\beta\lambda^{-1/4}\neq\gamma\lambda^{1/4}$.   
 If $\beta \gamma \neq 1$ 
 then
 in either case,
 we can
 conclude from above that
 evaluating $Z_{\beta,\gamma,\lambda}(G)$ is \#P-hard 
when the input~$G$ is restricted to be a planar graph of degree at most~$4$.
  
Since the complexity of exactly evaluating the partition function is intractable,  much effort has focussed on the difficulty of
\emph{approximately} evaluating the partition function for a given set of parameters~$\beta$, $\gamma$ and 
$\lambda$.  

The complexity of approximating the partition function of
the hard-core model and the Ising model 
in general (not necessarily planar) graphs
is well-understood.
The \emph{Gibbs measure} 
is the distribution on
configurations~$\sigma\colon\configs{V(G)}$ 
in which the probability of configuration~$\sigma$ is proportional to $w_G(\sigma)$.
This notion of Gibbs measure extends to certain infinite graphs, for example infinite 
regular trees, where it may or may not be unique. 
For the hard-core model, 
there is a critical point $\lambda_c(\Delta) = (\Delta-1)^{\Delta-1}/(\Delta-2)^\Delta$
such that the infinite $\Delta$-regular tree has a  
unique Gibbs measure if and only if $\lambda \leq \lambda_c$.
An important result of Weitz~\cite{Weitz} showed that, in 
\emph{every} graph with maximum degree at most~$\Delta$,
the correlations between spins in the hard-core model decay
rapidly with distance as long as $\lambda \leq \lambda_c$.
As a result, he gives~\cite[Corollary 2.8]{Weitz} a fully-polynomial (deterministic) approximation scheme
(FPTAS)
for evaluating the hard-core partition function  
on graphs of degree at most~$\Delta$
for any $\lambda < \lambda_c$.
By contrast,   Sly and Sun~\cite[Theorem 1]{SS12} 
(see also the earlier hardness results
of Sly~\cite{Sly} and Galanis et al.~\cite{GGSVY})
show
that, unless $\NP=\RP$, there is no 
fully-polynomial randomised approximation scheme
(FPRAS) on $\Delta$-regular graphs (for $\Delta\geq 3$)
for any $\lambda> \lambda_c(\Delta)$. Thus, the difficulty of
approximation is resolved, apart from at the boundary $\lambda=\lambda_c(\Delta)$.

We say that the two-spin model is \emph{ferromagnetic} if $\beta \gamma>1$ and
\emph{antiferromagnetic} if $\beta\gamma<1$.
For the antiferromagnetic Ising model, Sinclair et al.~\cite[Corollary 1]{Sinclair}
show that there is a FPTAS for evaluating the Ising partition function
on graphs of degree at most~$\Delta$
for any choice of parameters~$\beta$ and~$\lambda$ which is in the interior of the uniqueness
region of the $\Delta$-ary tree.
By contrast,  Sly and Sun~\cite[Theorem 2]{SS12} show that,
unless $\NP=\RP$, there is no FPRAS on $\Delta$-regular graphs (for $\Delta\geq 3$)
if $\beta$ and $\lambda$ are outside the uniqueness region. (So, once again, the situation is
fully resolved, apart from the boundary.)
The result of Sinclair et al.\ extends
to general anti-ferromagnetic two-state spin systems
in regular graphs, and also in a somewhat wider class of graphs~\cite[Corollary 2]{Sinclair}.

For general anti-ferromagnetic two-state spin systems, the best positive
result that is known is due to Li, Lu, and Yin~\cite{LLY12}. They use a stronger notion of
correlation decay than Weitz, which enables them to obtain  
a PTAS, even for graphs with unbounded degree.
They show~\cite[Theorem 2]{LLY12}
that for any finite $\Delta\geq 3$, or for $\Delta=\infty$, there is an FPTAS for
the partition function of the two-state spin system on graphs of
maximum degree at most~$\Delta$ if the parameters of
the system are antiferromagnetic, and for every $d\leq \Delta$, they lie in the interior of the uniqueness
region of the infinite $d$-regular tree.
By contrast~\cite[Theorem 3]{LLY12},
the results of Sly and Sun imply that, for any finite $\Delta\geq 3$,
or for $\Delta=\infty$, unless $\NP=\RP$, there is no FPRAS for
the partition function of the two-state spin system on graphs
of maximum degree at most~$\Delta$ if
the parameters of the system are antiferromagnetic, and
 for some $d\leq \Delta$,
they lie outside the interior of the uniqueness region of the infinite $d$-regular tree.
Thus, the approximation complexity is resolved in the antiferromagnetic case,
apart from at the boundaries of the uniqueness regions.
Note that the result of Sun and Sly was independently discovered by Galanis, 
\v Stefankovi\v c and Vigoda~\cite{GSV} for the case $\lambda=1$.

The situation is not completely resolved in the ferromagnetic case.
Building on Jerrum and Sinclair's FPRAS for the ferromagnetic Ising model~\cite{JS},
Goldberg, Jerrum and Paterson~\cite{GJP} gave an FPRAS for the ferromagnetic
two-spin model which applies if
$\beta \geq \gamma$ and $\lambda \leq \sqrt{\beta/\gamma}$
(or, equivalently, if
$\beta \leq \gamma$ and $\lambda \geq \sqrt{\beta/\gamma}$).
The approximation applies without these constraints on the parameters if
the input is a regular graph.

For the hard-core model, an important issue which arises
in statistical physics
is  approximating the partition function for planar graphs,
including regular lattices.
While (as far as we know) there were no hardness results
for this problem (until this paper)
the complexity of particular algorithms have been studied.
For example,
Randall~\cite{Randall} showed
that a particular MCMC algorithm provides a bad approximation
on subsets of~$\mathbb{Z}^2$,
because Glauber dynamics mixes slowly when $\lambda \geq 8.066$.
(By contrast, results of Restrepo et al.\ \cite{Restrepo} 
showed that the mixing time is $O(n \log n)$ 
when $\lambda < 2.3883$,
and that  Weitz's algorithm~\cite{Weitz} gives a (deterministic) fully-polynomial-approximation
scheme in this case.)
Recently tree decompositions of planar graphs have been used to give FPTASes for certain partition functions on planar graphs --- see~\cite{YZ12}.

\subsection{Our contribution}

Our objective is to determine whether 
approximating the
partition function of the hard-core model is computationally intractable 
on planar graphs for sufficiently large~$\lambda$.
It turns out that this is so.
Our main result (see Theorem~\ref{thm:main}) is that,
for a wide range of two-spin parameters, 
there is no FPRAS, even for planar graphs with degree at most~$4$.
The applicable range of parameters includes
the hard-core model with $\lambda \geq 312$.
Thus, we show that approximation is difficult for
this problem (see Corollary~\ref{cor:hardcore}).

An interesting difference between the general case and the planar
case is that, in general, it is difficult to approximate the logarithm of the
partition function, a quantity which has physical significance and is called the
\emph{mean free energy}.
Sly and Sun (see the proofs of Theorems~1 and~2 in~\cite{SS12}) showed that there is a fixed $c>1$ such
that no algorithm can approximate $Z_{\beta,\gamma,\lambda}(G)$
within a factor $c^{|V(G)|}$ unless $\NP=\RP$.
By contrast, we show (see Theorem~\ref{thm:twospin_log_pras})
that, in the planar case, there is a polynomial-time approximation scheme
for $\log Z_{\beta,\gamma,\lambda}(G)$.  Note that this result implies that, 
for any $c>1$, the partition function $Z_{\beta,\gamma,\lambda}(G)$
can be approximated within a factor $c^{|V(G)|}$ (since $Z_{\beta,\gamma,\lambda}(G)$
is at most $C^{|V(G)|}$ for a quantity~$C$ which depends only on~$\beta$, $\gamma$ and $\lambda$).
  
At a high level, our hardness result is
a reduction from the optimisation problem of computing a maximum independent set in 
a cubic planar graph~$G$ to the problem of estimating the partition function
of a much larger graph, which is constructed from~$G$.
Each vertex of~$G$ is represented by a 
gadget which is a ``wrapped'' rectangular lattice~$C_\size$
(see Figure~\ref{fig:one}). 
Similar to previous results of Goldberg and Jerrum~\cite{GJPottsconf}, and
Sly~\cite{Sly}, and Sly and Sun~\cite{SS12}, 
we exploit the phase transition of the gadget
to enable a reduction from a hard optimisation problem.
The optimisation problem from which we start
(computing a maximum independent set in a cubic planar graph) plays
a similar role to
that of the maximum cut problem in the reduction of Sly and Sun~\cite{SS12}.
However, there is a key difference.
Since, as we discuss below, 
it turns out that the logarithm of our partition function is
efficiently approximable,
it is therefore necessary that the optimisation problem 
from which we start  is also easy to approximate (otherwise, we would get
a contradiction). This means that our reduction has to be more carefully tuned ---
the approximation of the partition function has to allow us to exactly solve
the optimisation problem.

A key technical challenge in the proof is to characterise the Gibbs distribution of the
two-spin model on the lattice gadget.
We show that the spins of the vertices do exhibit  long-range correlation.  
In fact, the gadget is almost always
in one of two phases.
Each of these phases are equally likely. Also, conditioned on the
phase, the spins of certain vertices along the boundary of the gadget are
nearly independent, and 
their distribution
can be determined. 
Thus, although there is long-range correlation between spins, all of the
correlation is captured by the phase. Conditioned on the phase,
the spins are not very correlated.
The analysis of
the
Gibbs distribution of the gadget uses 
contour arguments adapted from Dobrushin~\cite{Dob68} and Borgs et al.~\cite{Borgs}.
Randall's slow-mixing result is also based on contour arguments. 

In statistical physics  it is sometimes  useful to
approximate the logarithm of the partition function,
even when the  partition function itself cannot be approximated (for example, in the situation of
Theorem~\ref{thm:main}).  
Bandyopadhyay and Gamarnik \cite{Gamarnik} have shown how to estimate the
logarithm of the partition function of the hard-core model when $\lambda$ is small
and the graph is regular, with large girth. They show that, in this case, the approximate value
does not depend on the graph, given its degree and size!
We give (Theorem~\ref{thm:twospin_log_pras})
an
approximation scheme for the logarithm of the partition function
which applies to all planar graphs, for sufficiently large~$\lambda$.
The algorithm is based on the decomposition technique that Baker~\cite{Baker} 
used to give approximation schemes for optimisation problems on planar graphs.
There is a parameter~$k$ which is governed by
desired approximation quality. The graph $G$ is decomposed into pieces 
which are $k$-outerplanar, and therefore have bounded tree-width.
The partition functions of these pieces can be calculated directly using an algorithm
of Yin and Zhang~\cite{YZ12}. These are combined to give the estimate.

\section{Preliminaries and statement of results} 
 
In our main result, we will assume that the parameters  $\beta$, $\gamma$ and $\lambda$ 
satisfy the following conditions.
\begin{equation}  \label{eq:spincond}
\lambda \geq 1,\quad
\beta \geq 1 > \gamma\geq 0,\quad
\beta \gamma<1,\quad
\beta \lambda^{-1/4} \leq 0.238, \mbox{ and }
\gamma \lambda^{3/8} \leq 0.238.
\end{equation}   
Note that these conditions are satisfied by the hard-core model
when $\lambda\geq 312$ (by setting $\beta=1$ and $\gamma=0$).

The notion of a fully polynomial randomised approximation scheme (FPRAS)
is defined in Section~\ref{sec:FPRAS}.   
Following~\cite{GJPotts},    
we  say that a real number~$z$ is \emph{efficiently approximable} if there is an FPRAS
for the problem of computing~$z$. 
For fixed efficiently approximable reals
$\beta$, $\gamma$ and $\lambda$ satisfying
 \eqref{eq:spincond}, we consider
 the problem of (approximately) computing $Z_{\beta,\gamma,\lambda}(G)$,
 given an input graph~$G$.
 In order to make our (negative) result as strong as possible,
 we restrict the input~$G$ to have degree at most~$4$
 as well as being planar. Thus, we study the 
 following computational problem.
 \prob{$\PlanarTwoSpin(\beta,\gamma,\lambda)$.}
 {A planar graph $G$ with maximum degree at most~$4$.}
 {The value $Z_{\beta,\gamma,\lambda}(G)$.}
     
Our main result is the following.
   
\begin{theorem}\label{thm:main} 
Suppose that $\beta$, $\gamma$ and $\lambda$ are efficiently approximable reals
satisfying \eqref{eq:spincond}.
There is no FPRAS for $\PlanarTwoSpin(\beta,\gamma,\lambda)$
unless $\NP=\RP$.
\end{theorem}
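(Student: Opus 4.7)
The plan is to reduce from the NP-hard problem of computing the exact size of a maximum independent set in a cubic planar graph~\cite{gjs}. Given such a graph $G$, I will construct a planar graph $H=H(G)$ of maximum degree~$4$ with the property that a sufficiently accurate multiplicative estimate of $Z_{\beta,\gamma,\lambda}(H)$ determines the maximum independent set size of $G$ exactly. Because $\log Z$ itself admits an approximation scheme on planar graphs (Theorem~\ref{thm:twospin_log_pras}), the reduction cannot succeed by making the maximum IS size merely the leading \emph{exponential} term in $Z(H)$; rather, it must be calibrated so that maximum independent sets dominate $Z(H)$ by a factor that a $1+\varepsilon$ approximation can resolve.

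The construction replaces each vertex $v \in V(G)$ by a copy of the wrapped-lattice gadget $\gadget$ of Proposition~\ref{prop:gadget}, arranged so that three disjoint stretches of its boundary serve as ``ports'' for the three edges incident to~$v$; corresponding ports of adjacent gadgets are wired together in accordance with a planar embedding of $G$. The key input is Proposition~\ref{prop:gadget}, which asserts that the Gibbs distribution $\muall$ on $\gadget$ is, up to exponentially small error, a mixture of an ``even'' phase $\mueven$ and an ``odd'' phase $\muodd$. Conditioned on the phase, the boundary terminals $\eventerminals \cup \oddterminals$ are close to independent: in the even phase, vertices of $\eventerminals$ are almost surely in spin~$0$ and those of $\oddterminals$ almost surely in spin~$1$, and vice versa in the odd phase. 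Writing $W_0, W_1$ for the contributions of the two phases to the gadget partition function, the partition function of $H$ therefore decomposes, up to exponentially small error, as
\[
Z_{\beta,\gamma,\lambda}(H) \;\approx\; \sum_{\phi:V(G)\to\{0,1\}} \prod_{v\in V(G)} W_{\phi(v)} \prod_{(u,v)\in E(G)} M_{\phi(u),\phi(v)},
\]
where $M_{ij}$ is an effective edge weight determined by the phase-conditional statistics of the two glued port sets.

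A short calculation using~\eqref{eq:spincond} --- in particular the bound $\gamma\lambda^{3/8}\le 0.238$ --- then shows that $M_{1,1}$ is exponentially smaller than the other three entries, because gluing two odd-phase boundaries creates many edges with both endpoints in spin~$1$, each contributing the factor $\gamma<1$, which dominates any $\lambda$-bonus. Hence $\phi$ contributes non-negligibly to the sum only when $\phi^{-1}(1)$ is an independent set in $G$, and by tuning the gadget size parameter $\size$ the ratio $W_1/W_0$ can be made large enough that the leading contribution comes from maximum independent sets, with a quantifiable gap to the next-largest term. Standard FPRAS amplification, combined with running the FPRAS for a polynomial number of values of $\size$, then allows one to isolate the leading term and determine the maximum IS size of $G$, which would give a randomised polynomial-time algorithm for an NP-hard problem and hence $\NP=\RP$.

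The main obstacle is establishing Proposition~\ref{prop:gadget} itself: proving the two-phase decomposition of $\muall$ and the near-independence of terminal spins conditioned on phase, for our asymmetric two-parameter spin system on a wrapped lattice. This requires Peierls-style contour arguments in the spirit of Dobrushin~\cite{Dob68} and Borgs et al.~\cite{Borgs}, and the specific bounds~\eqref{eq:spincond} are exactly what makes the contour weights summable in this generalised setting. Once that proposition is in place, the remaining steps --- the decomposition of $Z(H)$, the bounds on the entries of $M$, and the extraction of the optimum via repeated FPRAS queries with carefully chosen $\size$ --- are quantitative but routine.
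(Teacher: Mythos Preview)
Your overall architecture matches the paper, but the mechanism you describe for extracting an independent-set problem from the phase variables is incorrect in two essential ways.

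First, your claim that $M_{1,1}$ is exponentially small is false. You have the phase-to-spin correspondence backwards: in phase~$0$, a parity-$0$ terminal takes spin~$1$ with probability $\peq>\tfrac12$, not spin~$0$. Using the paper's definition $M=P\bigl(\begin{smallmatrix}\beta&1\\1&\gamma\end{smallmatrix}\bigr)P^t$ with $P=\bigl(\begin{smallmatrix}1-\peq&\peq\\1-\pneq&\pneq\end{smallmatrix}\bigr)$, all four entries of~$M$ are positive constants of the same order, and one computes directly (as the paper does) that $M_{1,1}>M_{0,1}>M_{0,0}$: the effective phase interaction is \emph{ferromagnetic}, not hard-core. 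What makes the reduction work is the subtler identity $\det M=(\peq-\pneq)^2(\beta\gamma-1)<0$, hence $M_{0,0}M_{1,1}/M_{0,1}^2<1$. The paper exploits this by connecting each pair of adjacent gadgets by $k_2$ parallel matched edges and taking $\tilde\gamma=(M_{0,0}M_{1,1}/M_{0,1}^2)^{k_2}$; only this powering produces an exponentially small effective ``$1$--$1$'' weight. Your single-port construction cannot achieve this.

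Second, in your construction the per-phase gadget contributions $W_0$ and $W_1$ are equal: rotating the gadget by one lattice unit swaps the two phases and is a weight-preserving automorphism. So your $\prod_v W_{\phi(v)}$ is independent of~$\phi$, and there is no vertex activity favouring $\phi^{-1}(1)$ large; tuning~$\size$ does not help. The paper resolves this by attaching $k_1$ ``bristles'' (degree-$1$ vertices) to parity-$1$ terminals of each gadget; these contribute a factor $W_s$ depending on the phase via the vector $W=P\bigl(\begin{smallmatrix}\beta&1\\1&\gamma\end{smallmatrix}\bigr)\bigl(\begin{smallmatrix}1\\\lambda\end{smallmatrix}\bigr)$, for which $W_1>W_0$. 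The integer $k_1$ is then chosen so that $\tilde\lambda=(W_1/W_0)^{k_1}(M_{0,1}/M_{1,1})^{3k_2}$ lies in a prescribed range, giving the vertex activity you need. Finally, the paper works with the ratio $Z_{\beta,\gamma,\lambda}(J')/Z_{\beta,\gamma,\lambda}(J)$, where $J$ is the disjoint union of gadgets and $J'$ adds the matchings and bristles; this turns your approximate identity for $Z(H)$ into an exact expectation $\ex[F(\bsigma_J(T))]$ and cancels the unknown gadget partition function.
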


Of course, our result has an immediate consequence for the 
problem of approximating the partition function in the hard-core model.
Thus, Theorem~\ref{thm:main} implies Corollary~\ref{cor:hardcore}
for the following computational problem.
\prob{$\PlanarHardCore(\lambda)$.}
{A planar graph $G$ with maximum degree at most~$4$.}
{The  value $Z_{1,0,\lambda}(G)$.}
\begin{corollary} \label{cor:hardcore}
Suppose that
$\lambda\geq  312$ is an efficiently approximable real.
There is no FPRAS for  $\PlanarHardCore(\lambda)$ unless $\NP=\RP$.
\end{corollary}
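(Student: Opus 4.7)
The plan is to derive the corollary directly from Theorem~\ref{thm:main} by specialising the two-spin parameters to those of the hard-core model. First I would observe that $\PlanarHardCore(\lambda)$ is literally the problem $\PlanarTwoSpin(1,0,\lambda)$: substituting $\beta=1$ and $\gamma=0$ into $w_G(\sigma) = \beta^{b(\sigma)}\gamma^{c(\sigma)}\lambda^{\ell(\sigma)}$ assigns weight $\lambda^{\ell(\sigma)}$ to configurations with no monochromatic ``$1$''-edge (i.e., to independent sets) and weight $0$ to all other configurations, which is exactly the hard-core partition function.

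Next I would verify that the hypotheses in \eqref{eq:spincond} are all satisfied for $\beta = 1$, $\gamma = 0$ and $\lambda \geq 312$. The conditions $\lambda \geq 1$, $\beta \geq 1 > \gamma \geq 0$, $\beta\gamma = 0 < 1$, and $\gamma \lambda^{3/8} = 0 \leq 0.238$ all hold trivially. The only condition requiring a calculation is $\beta\lambda^{-1/4} = \lambda^{-1/4} \leq 0.238$, equivalently $\lambda \geq (1/0.238)^4$. Computing $0.238^2 = 0.056644$ and $0.056644^2 \approx 0.003209$, one gets $(1/0.238)^4 \approx 311.7 < 312$, so the bound $\lambda \geq 312$ is exactly what is needed.

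Finally, I would note that $\beta = 1$ and $\gamma = 0$ are rationals and hence (trivially) efficiently approximable, while $\lambda$ is efficiently approximable by hypothesis. Theorem~\ref{thm:main} therefore applies and rules out an FPRAS for $\PlanarTwoSpin(1,0,\lambda) = \PlanarHardCore(\lambda)$ unless $\NP = \RP$. There is no real obstacle here; the corollary is essentially a named instance of the main theorem, and the only thing worth pointing out explicitly is that the threshold $312$ in the statement is precisely the smallest integer $\lambda$ for which $\lambda^{-1/4} \leq 0.238$.
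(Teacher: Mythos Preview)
Your proposal is correct and matches the paper's approach: the paper simply notes before stating Theorem~\ref{thm:main} that the conditions~\eqref{eq:spincond} are satisfied by the hard-core model when $\lambda\geq 312$ (with $\beta=1$, $\gamma=0$), and then remarks that Theorem~\ref{thm:main} implies Corollary~\ref{cor:hardcore} as an immediate consequence. You have filled in exactly these details, including the verification that $312 > (0.238)^{-4}$, so nothing further is needed.
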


Despite Theorem~\ref{thm:main},
we show that the logarithm of the partition function can be approximated.
In particular, we study the following computational problem, where, for concreteness,
we use the natural logarithm (to the base~$e$).
   
 \prob{$\PlanarLogTwoSpin(\beta,\gamma,\lambda)$.}
 {A planar graph $G$.}
 {The value $\log(Z_{\beta,\gamma,\lambda}(G))$.}
  
 Our result is that there is a polynomial-time randomised approximation scheme
 (PRAS) for $\PlanarLogTwoSpin(\beta,\gamma,\lambda)$. A polynomial-time
 randomised approximation scheme is a more liberal notion of approximation than 
 the notion of an FPRAS. See Section~\ref{sec:FPRAS} for a definition.  
   
 \begin{theorem}\label{thm:twospin_log_pras}
Suppose that $\beta$, $\gamma$ and $\lambda$ are efficiently
approximable reals satisfying  
$\beta\geq 1 > \gamma\geq 0$ and $\lambda\geq 1$.
There is a PRAS for $\PlanarLogTwoSpin(\beta,\gamma,\lambda)$.
\end{theorem}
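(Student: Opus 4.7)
The plan is to adapt Baker's decomposition~\cite{Baker}. Given a planar input $G$ on $n$ vertices and accuracy parameter $\epsilon>0$, set $k=\lceil C/\epsilon\rceil$ for a constant $C=C(\beta,\lambda)$ to be fixed later. Perform a BFS from an arbitrary root in each component to obtain layers $L_0,L_1,\ldots$, and for each offset $i\in\{0,\ldots,k-1\}$ let $S_i=\bigcup_{j\ge 0} L_{jk+i}$, writing $T_i$ for the number of edges incident to $S_i$. Each vertex lies in exactly one $S_i$, so $\sum_i|S_i|=n$; the endpoints of any edge lie in consecutive BFS layers, so each edge is incident to at most two of the $S_i$, giving $\sum_i T_i\le 2|E(G)|\le 6n$ by the planar edge bound. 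Trying all $k$ offsets and selecting the best, we find $i$ with $|S_i|+T_i\le 7n/k$.

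After removing $S_i$, the graph $G_i:=G-S_i$ has $k$-outerplanar components, whose treewidth is $O(k)$ by Bodlaender~\cite{Bodlaender}. Therefore $Z_{\beta,\gamma,\lambda}(G_i)$ can be computed essentially exactly in time $2^{O(k)}n$ by dynamic programming on a tree decomposition (the algorithm of Yin and Zhang~\cite{YZ12} being one concrete option); for fixed $\epsilon$ this is polynomial in $n$. The parameters $\beta,\gamma,\lambda$ are only efficiently approximable, but we may replace them by rational approximations of sufficiently high (polynomial) precision and propagate the induced error into the final estimate in the standard way; this is the only source of randomness (hence PRAS rather than PTAS).

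The core comparison is between $Z:=Z_{\beta,\gamma,\lambda}(G)$ and $\hat Z:=Z_{\beta,\gamma,\lambda}(G_i)$. Split a configuration on $V(G)$ as a pair $(\sigma,\tau)$ with $\sigma$ on $V(G_i)$ and $\tau$ on $S_i$. The weight ratio $w_G(\sigma,\tau)/w_{G_i}(\sigma)$ factors as $\lambda^{\ell(\tau)}$ (from vertex weights on $S_i$) times $\beta^{b'}\gamma^{c'}$ (from the edges touching $S_i$, with $b'$ of them monochromatic~$0$ and $c'$ monochromatic~$1$). Using $\beta,\lambda\ge 1$ and $\gamma\le 1$ this ratio is bounded above by $\lambda^{|S_i|}\beta^{T_i}$, so summing over the $2^{|S_i|}$ choices of~$\tau$ yields $Z \le (2\lambda)^{|S_i|}\beta^{T_i}\hat Z$. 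For the matching lower bound, fix $\tau\equiv 0$: the ratio is then a non-negative power of $\beta\ge 1$, giving $Z\ge \hat Z$. Taking logarithms,
\[
0 \le \log Z - \log \hat Z \le |S_i|\log(2\lambda) + T_i\log\beta = O(n/k).
\]

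Finally, to turn this additive bound into a relative error, note that planar graphs are $5$-degenerate and so contain an independent set~$I$ with $|I|\ge n/6$ (by the usual greedy). Every subset of $I$ is an independent-set configuration of weight $\beta^{b(\sigma)}\lambda^{\ell(\sigma)}\ge 1$, so $Z\ge 2^{n/6}$ and $\log Z = \Omega(n)$. Choosing $C$ large enough as a function of $\beta$ and $\lambda$ then makes $(\log Z-\log\hat Z)/\log Z = O(1/k)\le\epsilon$, so outputting $\log\hat Z$ gives the required PRAS. The main obstacle is the comparison step above: the possibility $\gamma=0$ precludes a uniform two-sided multiplicative bound on the weight ratio, but the hypotheses $\beta,\lambda\ge 1>\gamma\ge 0$ arrange for the upper bound (via summation over $\tau$) and the lower bound (via the all-zero $\tau$) to each go through in the correct direction.
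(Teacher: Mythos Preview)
Your proof is correct and follows essentially the same route as the paper: Baker's layering to delete an $O(n/k)$-sized set with $O(n/k)$ incident edges, exact evaluation on the resulting bounded-treewidth pieces via \cite{YZ12}, the two-sided sandwich $\hat Z\le Z\le (2\lambda)^{|S_i|}\beta^{T_i}\hat Z$ obtained from $\tau\equiv 0$ and from $\gamma\le 1\le\beta,\lambda$, and a linear lower bound on $\log Z$ to convert the additive $O(n/k)$ error into a relative one. The only cosmetic differences are that the paper phrases the decomposition via outerplanarity levels rather than BFS layers, and derives $\log Z=\Omega(n)$ from a proper $4$-colouring (Lemma~\ref{lem:logbound}) instead of $5$-degeneracy; your version has the minor advantage of not invoking the Four Colour Theorem.
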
 
     
The randomness used by the
algorithm promised by Theorem~\ref{thm:twospin_log_pras}
is only needed  to approximate the parameters~$\beta$, $\gamma$ and~$\lambda$.
If these are deterministically approximable, then the approximation is deterministic.

We will need some notation 
to refer to the Gibbs distribution of the two-spin model
on a graph~$G$, which is the distribution 
in which the the probability of
each configuration is proportional to its weight.
We will use $\bsigma_G$ to denote
a random configuration drawn from this distribution.
Thus, for any configuration $\sigma\colon \configs{V(G)}$,
$$\Pr(\bsigma_{G} = \sigma) = w_{G}(\sigma)/Z_{\beta,\gamma,\lambda}(G).$$
(In general, as here, we use boldface for the random variable and normal 
type for the values that it takes on.)
Finally, given a subset $S$ of $V(G)$ and a configuration $\sigma: \configs{V(G)}$,
let $\sigma(S): \configs{S}$ denote the configuration induced by~$\sigma$ on~$S$.

\section{ Polynomial Randomised Approximation Schemes}  
\label{sec:FPRAS}

Most of this section is taken from~\cite{GJPotts}
and can be skipped by readers who are already familiar with  
randomised approximation schemes.

A randomised approximation scheme is an algorithm for
approximately computing the value of a function~$f:\alphabet^*\rightarrow
\mathbb{R}$.
(Here, $\alphabet$ is a finite alphabet, and inputs to~$f$ are represented as
strings over this alphabet.)
The
approximation scheme has a parameter~$\varepsilon>0$ which specifies
the error tolerance.
A \emph{randomised approximation scheme\/} for~$f$ is a
randomised algorithm that takes as input an instance $ x\in
\alphabet^{\ast }$ (e.g., for the problem $\PlanarTwoSpin(\beta,\allowbreak\gamma,\lambda)$, the
input would be an encoding of  a planar graph~$G$) and a rational error
tolerance $\varepsilon \in(0,1)$, and outputs a rational number $z$
(a random variable of the ``coin tosses'' made by the algorithm)
such that, for every instance~$x$,
\begin{equation}
\label{eq:3:FPRASerrorprob}
\Pr \big[e^{-\epsilon} f(x)\leq z \leq e^\epsilon f(x)\big]\geq \frac{3}{4}\, .
\end{equation}
The randomised approximation scheme is said to be a
\emph{polynomial
randomised approximation scheme} or \emph{PRAS}
if, for each $\epsilon$, its running time is bounded by a polynomial in~$|x|$.
It is said to be a
\emph{fully polynomial randomised approximation scheme},
or \emph{FPRAS},
if its running time is bounded by a polynomial
in $ |x| $ and $ \epsilon^{-1} $.

Note that the quantity $\frac34$ in
Equation~(\ref{eq:3:FPRASerrorprob})
could be changed to any value in the open
interval $(\frac12,1)$ without changing the set of problems
that have randomised approximation schemes \cite[Lemma~6.1]{jvv}.
In fact, in the proof of Theorem~\ref{thm:main}, we will assume that our FPRASes
have failure probability at most~$1/15$.

The notion of an FPRAS is a particularly
robust notion of approximability for partition functions.
For such approximations, the existence of a polynomial-time algorithm that achieves a constant-factor approximation actually implies the existence of an FPRAS. 
The same argument that
we gave to illustrate this point for the Potts model~\cite{GJPotts} also applies to the setting of this paper. For any graph $G$, denote by $k\cdot G$ the graph composed of $k$ disjoint copies of~$G$.
Then $Z_{\beta,\gamma,\lambda}(k \cdot G) = Z_{\beta,\gamma,\lambda}(G)^k$.  So, setting $k=O(\epsilon^{-1})$,
a constant factor approximation to $Z_{\beta,\gamma,\lambda}(k \cdot G)$ will yield (by taking the $k$th root)
an FPRAS for $\PlanarTwoSpin(\beta,\gamma,\lambda)$. Clearly, an approximation within a polynomial factor would 
also suffice.
Note that the same argument does not necessarily apply to log-partition functions.

\section{The Gadget}

We will assume throughout this section  
  that 
$\beta$, $\gamma$ and $\lambda$ satisfy \eqref{eq:spincond}, so we do not
keep repeating this condition in the statement of our lemmas.

The gadget~$C_\size$  has vertex set
$V(C_\size)=\Columns \times \Rows$. 
Vertices $(x,y)$ and $(x',y')$ are adjacent in~$C_\size$ if
\begin{itemize}
\item $y=y'$ and $x=x'\pm 1$ (where of course, the arithmetic is modulo~$2\size$ since~$x$ and~$x'$ are
in $\Columns$), 
or
\item $x=x'$ and $y=y' \pm 1$.
\end{itemize}
Let $E(C_\size)$ denote the set of edges of~$C_\size$.
See the leftmost picture in Figure~\ref{fig:one}. 
 \begin{figure}[!h]
\begin{center}
 \begin{tikzpicture}
\begin{scope}[xshift=0cm]
\foreach \x in {0,...,9} \draw
 (canvas polar cs:angle=-90+36*\x,radius=1cm) --
 (canvas polar cs:angle=-90+36*\x,radius=2cm);
\foreach \y in {0,...,5} \draw (0,0) circle (2cm-0.2*\y cm);
\foreach \x in {0,...,9} \foreach \y in {0,...,5}
   \fill (canvas polar cs:angle=-90+36*\x,radius=2cm-0.2*\y cm) circle (0.05cm);
\foreach \x in {-4,...,5} \draw
 (canvas polar cs:angle=-90+36*\x,radius=2.7cm) node {(\x,0)};
\end{scope}

\begin{scope}[xshift=6cm]
\foreach \x in {0,...,9} \draw
 (canvas polar cs:angle=-90+36*\x,radius=1cm) --
 (canvas polar cs:angle=-90+36*\x,radius=2cm);
\foreach \y in {0,...,5} \draw (0,0) circle (2cm-0.2*\y cm);
\foreach \x/\y in {-1/0,-1/1,-1/2,0/2,1/2,2/2,3/2,3/1,3/0}
   \fill (canvas polar cs:angle=-90+36*\x,radius=2cm-0.2*\y cm) circle (0.05cm);
\foreach \x in {1} \draw
 (canvas polar cs:angle=-90+36*\x,radius=2.7cm) node {(\x,0)};
\end{scope}

\begin{scope}[xshift=11cm]
\foreach \x in {0,...,9} \draw
 (canvas polar cs:angle=-90+36*\x,radius=1cm) --
 (canvas polar cs:angle=-90+36*\x,radius=2cm);
\foreach \y in {0,...,5} \draw (0,0) circle (2cm-0.2*\y cm);
\foreach \x in {-9,...,9}
   \fill (canvas polar cs:angle=-90+36*\x,radius=2cm-0.2*5 cm) circle (0.05cm);
\foreach \y in {0,...,5}
   \fill (canvas polar cs:angle=-90+36*5,radius=2cm-0.2*\y cm) circle (0.05cm);
\foreach \x in {0,5} \draw
 (canvas polar cs:angle=-90+36*\x,radius=2.7cm) node {(\x,0)};
\end{scope}

\end{tikzpicture}

\end{center}
\caption{$C_{5}$, and the vertex subsets $B_{1,2}$, and $B_{0,5}$.}
\label{fig:one}
\end{figure}
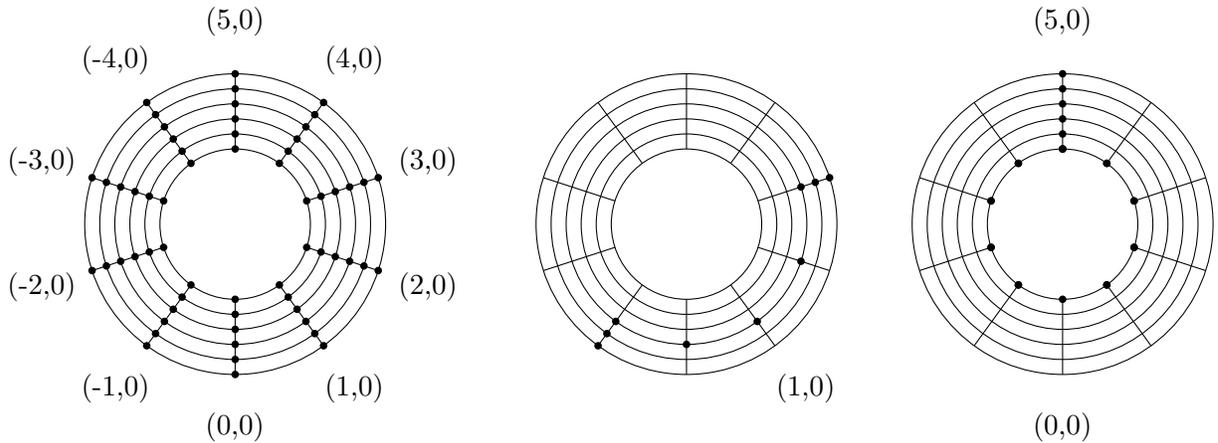

\subsection{Goalposts and keyholes}

Given a  vertex $(x,0)\in V(C_\size)$ and a value $m\in \Rows$
let $B_{x,m}$ be the set containing the vertices on the rectangular (goalpost-shaped) path
at $\ell_\infty$-distance~$m$ around the terminal. In particular, 
  let
$$ B_{x,m} =   \bigcup_{0\leq j \leq m} 
\{(x-m,j),(x-j,m),(x+j,m),(x+m,j)\}.$$
Again, the arithmetic is done modulo~$2 \size$ since $x\in \Columns$.
See the middle picture in Figure~\ref{fig:one}.

When $m= \size$, the vertices in $\{(x-m,j) \mid 0 \leq j \leq m\}$
coincide with the vertices in $\{(x+m,j) \mid 0 \leq j \leq m\}$ so $B_{x,m}$
becomes the ``keyhole'' which is depicted in 
the rightmost picture of Figure~\ref{fig:one} (for $x=0$).

We shall often be working with configurations on gadgets.
For convenience the notation $\bsigmasub{C_\size}$ will be contracted to $\bsigmasize$,
and no confusion should result.

\subsection{Parity-$0$ ones and parity-$1$ ones}

We say that a vertex $(x,y) \in V(C_\size)$ 
has parity~$1$
if $x+y$ is odd, and that it has parity~$0$ otherwise.
Suppose that $S$ is a subset of $V(C_\size)$ and that $s\in\{0,1\}$.
We say that $\sigma(S)$ has \emph{parity-$s$ ones}
if $\{ (x,y) \in S \mid \sigma(x,y)=1\}$ is exactly the set of  parity-$s$ vertices in~$S$.

 \subsection{Idealised probabilities}

 Define 
\begin{align*}
 \peq &= \limsup_{\size\rightarrow \infty}
\Pr(\bsigmasize(0,0)=1 \mid \mbox{$\bsigmasize(B_{0,\size})$ has  parity-$0$ ones}), \mbox{and}\\
\pneq &= \limsup_{\size\rightarrow \infty}
\Pr(\bsigmasize(1,0)=1 \mid \mbox{$\bsigmasize(B_{1,\size})$ has parity-$0$ ones}). 
\end{align*}

The notation $\peq$ is meant to connote that we are looking at the
probability of a $1$ at a vertex of parity $s$, conditioned on certain
parity-$t$ ones, where $s=t$; for $\pneq$ we are interested in $s\neq
t$.  As we shall see later, it will turn out that $\peq > \pneq$.
This is a non-trivial fact about the spin system: if there were no
long-range correlations, we would have $\peq=\pneq$.
The following straightforward lemma is also useful.
 
\begin{lemma}
\label{lem:little}
$\pneq>0$ and $\peq<1$. 
\end{lemma}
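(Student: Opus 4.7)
The plan is to derive both bounds from simple local ``flip'' arguments that deliver conditional probability estimates uniform in $\size$. Uniformity is the essential point here, because $\peq$ and $\pneq$ are defined as limsups: bare positivity (resp.\ sub-unity) of the conditional probability for each finite $\size$ is not enough. Both arguments exploit the fact that $(0,0)$, $(1,0)$ and their $C_\size$-neighbours sit well inside the region bounded by the goalpost/keyhole $B_{x,\size}$ appearing in the conditioning, so local modifications near these vertices do not disturb the conditioning event.

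For $\peq<1$, I would use the map $\hat\sigma$ obtained by toggling just the spin at $(0,0)$. Since $(0,0)\notin B_{0,\size}$, this is a bijection between those $\sigma$ in the conditioning event $E_0$ with $\sigma(0,0)=1$ and those with $\sigma(0,0)=0$. The flip rescales the $\lambda$-factor by $\lambda$ and rewrites the contributions of the three edges incident to $(0,0)$; writing $n_0$ and $n_1$ for the numbers of $0$- and $1$-neighbours of $(0,0)$ in $\sigma$, one finds $w(\sigma)/w(\hat\sigma)=\lambda\,\beta^{-n_0}\gamma^{n_1}$. Using $\beta\geq 1$, $\gamma<1$ and $\lambda\geq 1$ from~\eqref{eq:spincond}, this ratio is at most $\lambda$ for every positive-weight $\sigma$. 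Summing over the bijection gives $\Pr(\bsigmasize(0,0)=1\mid E_0)\leq \lambda/(1+\lambda)$ for every $\size$, whence $\peq\leq\lambda/(1+\lambda)<1$.

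For $\pneq>0$, I would use a slightly more elaborate modification: for each $\sigma$ in the conditioning event $E_1$, define $\sigma^*$ by setting $\sigma^*(1,0)=1$, zeroing the three $C_\size$-neighbours of $(1,0)$, and leaving all other spins unchanged. Zeroing the neighbours is essential: it ensures $\sigma^*$ has no $1$-$1$ edge at $(1,0)$, and hence has positive weight even when $\gamma=0$. Since $(1,0)$ and its three neighbours lie well inside the keyhole and do not touch $B_{1,\size}$, we still have $\sigma^*\in E_1$, and moreover $\sigma^*(1,0)=1$. The map $\sigma\mapsto\sigma^*$ is at most $2^4$-to-one (four spins are overwritten), and only the contributions of the bounded collection of edges incident to those four vertices are changed, so the weight ratio $w(\sigma)/w(\sigma^*)$ is bounded by a constant $C=C(\beta,\gamma,\lambda)$ independent of $\size$. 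Summing yields $\Pr(\bsigmasize(1,0)=1\mid E_1)\geq 1/(16C)>0$ uniformly in $\size$, so $\pneq\geq 1/(16C)>0$.

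The only mildly delicate point I foresee is the hard-core boundary case $\gamma=0$, where some configurations carry zero weight. I would handle this by restricting every sum to positive-weight configurations, i.e.\ those with $c(\sigma)=0$; for these the weight ratio reduces to a product of $\beta$ and $\lambda$ factors with bounded exponents, and the verification that $\sigma^*$ remains valid (no $1$-$1$ edge anywhere) follows because we have only flipped some spins down to $0$ or flipped $(1,0)$ up to $1$ after surrounding it by $0$s.
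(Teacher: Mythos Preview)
Your proposal is correct and takes essentially the same approach as the paper. The paper phrases the two bounds as conditional-probability inequalities given the spins on the immediate neighbourhood of $(0,0)$ (respectively, of the set $S=\{(1,0),(0,0),(2,0),(1,1)\}$), arriving at the same constants $\lambda/(1+\lambda)$ and $\lambda/(16\lambda^4\beta^{10})$; your bijection/many-to-one map arguments are simply an explicit unpacking of those same local weight comparisons, with the added virtue that you spell out the uniformity in~$\size$ and the $\gamma=0$ corner case more carefully than the paper does.
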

\begin{proof}
Suppose $\size\geq 2$.
Consider  vertex $(1,0)$ of  $C_\size$.  
Let $S = \{ (1,0),(2,0),(0,0),(1,1) \}$ be the set containing $(1,0)$ and its
immediate neighbours.
Let $S' = \{ (-1,0),(0,1),(1,2),(2,1),(3,0) \}$ be the set containing the neighbours of~$S$.
Given any $\sigma:\configs{S'}$,
$$\Pr(\mbox{$\bsigmasize(S)$  has  parity-$1$ ones} \mid
\bsigmasize(S')=\sigma) \geq \lambda / 16 \lambda^4 \beta^{10}
> 0
.$$  
 
 Now let $S''=\{(-1,0),(0,1),(1,0)\}$ be the neighbours of $(0,0)$.
Given any $\sigma:\configs{S''}$,
$$\Pr(\mbox{$\bsigmasize(0,0)=1$} \mid
\bsigmasize(S'')=\sigma) \leq \frac{\lambda}{1+\lambda}<1.$$
\end{proof} 

The events that $\bsigmasize(B_{0,\size})$ has  parity-$0$ ones 
and that $\bsigmasize(B_{1,\size})$ has  parity-$0$ ones 
have low probability, so it
may seem strange to condition on these events, but the purpose of this conditioning is
to identify two  phases of the idealised gadget.
We will refer to certain vertices $(x,0)$ of $C_\size$
as ``terminals'', and  it will turn out to be the case that  the spins of
these terminals are nearly independent of each other in the distribution of~$\bsigmasize$.

We will study the distribution that $\bsigmasize$ induces on the terminals
by considering an idealised distribution with two  phases.
In each of these two gadget  phases, the spins of the terminals will be chosen independently.
Some terminals will be assigned spin~$1$ with probability~$\peq$ and others
will be assigned spin~$1$ with probability~$\pneq$. This will be explained further in the next section.

\subsection{Terminals}
\label{sec:terminals}

Fix positive integers~$d$ and $k$.
Let $\size= 2 d k$ and let $\gadget$ denote the gadget $C_\size$.
We will work with $\gadget$ for the rest of the paper.
We will use both notations, $C_\size$ and $\gadget$, depending on whether we want
to emphasize the role of~$\size$ or the role of~$k$ and~$d$.
Similarly, the alternative notations, $\sigmarv$ and $\bsigmasize$ will be used as convenient.

Some of the vertices around the boundary of $\gadget$
($2k$~of them) are designated as ``terminals''.
The set of ``parity-$1$ terminals'' is
$$\oddterminals = \{ (4 j d + 1,0) \mid 0 \leq j \leq k-1\}.$$
The set of ``parity-$0$ terminals'' is
$$\eventerminals = \{ (4 j d + 2 d,0) \mid 0 \leq j \leq k-1\}.$$ 
Let $\terminals = \oddterminals \cup \eventerminals$ denote the set of terminals.

For parity~$s\in \{0,1\}$,
let $\muall^s$ 
be the distribution on configurations $\sigma:\configs{\terminals}$
in which the spin of each terminal is chosen independently as follows:
For each parity-$s$ terminal $(x,0)$, 
set $\sigma(x,0)=1$
with probability~$\peq$ (and set $\sigma(x,0)=0$ otherwise).
For each terminal $(x,0)$ with parity $1\oplus s$,
set $\sigma(x,0)=1$ with probability~$\pneq$ (and  set $\sigma(x,0)=0$ otherwise).

Informally, the distribution~$\muall^s$ will 
be relevant when an idealised gadget is in a phase
which prefers $1$-spins at  parity-$s$ terminals. 
In this distribution,
the probability that a terminal is given spin~$1$
is higher if the terminal  has parity~$s$ than if it has parity~$1\oplus s$.
  
Let $\muall$ be the distribution on configurations $\sigma:\configs{\terminals}$
given by
$\muall(\sigma) = (\mueven(\sigma)+\muodd(\sigma))/2$.
We will show that, provided that   $d$  is sufficiently large,
the distribution of $\sigmarv(\terminals)$ is close to $\muall$.

Thus, the gadget  
can be thought of informally as having two  phases, 
phases~$0$ and~$1$. We will show that the gadget 
almost always occupies one of these two phases, and they 
occur with equal probability.
In   phase~$0$, the distribution of $\sigmarv(\terminals)$ is close to $\mueven$.
In  phase~$1$, the distribution of $\sigmarv(\terminals)$ is close to $\muodd$.

 \begin{proposition} 
 \label{prop:gadget}  
There is a $c>1$ such that,
if $d$ is a sufficiently large multiple of~$16$,
$k$ is an integer greater than or equal to~$1$ and
$\tau$ is any configuration $\tau:\configs{\terminals}$, then
$$|\Pr(\sigmarv(\terminals)=\tau)-\muall(\tau)| \leq c^{-d} k^2.$$ 
  \end{proposition}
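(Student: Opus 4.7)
The plan is to follow the contour-based Peierls strategy of Dobrushin~\cite{Dob68} and Borgs et al.~\cite{Borgs}, adapted to the cylindrical geometry of $\gadget$. First I would set up a contour representation. The two ``ground'' configurations are obtained by setting every parity-$s$ vertex to spin~$1$ and every other vertex to spin~$0$, for $s\in\{0,1\}$. Given any configuration $\sigma$, partition the vertices of $\gadget$ into maximal connected regions where $\sigma$ locally matches one of these two ground configurations; the boundaries between these regions give a collection of contours in the dual of $\gadget$. A careful accounting (as in \cite{Dob68,Borgs}) distributes the cost of each defect---a $1$-spin on a ``wrong-parity'' vertex or a $1$-$1$ adjacency---over the edges of the surrounding contour, so that using \eqref{eq:spincond} one obtains a per-edge suppression factor bounded by $\max(\beta\lambda^{-1/4},\gamma\lambda^{3/8})\leq 0.238$. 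Combining this with the standard counting bound of at most $3^\ell$ contours of length~$\ell$ through a fixed dual point yields a Peierls estimate: for some $c_0>1$, the probability that a contour of length at least $\ell$ passes within a fixed set of $O(1)$ vertices is at most $c_0^{-\ell}$.

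Next I would define the \emph{phase} of $\sigma$ to be the sublattice that dominates outside all contours; the automorphism $(x,y)\mapsto(x+1,y)$ of $\gadget$ preserves weights and exchanges the two ground configurations, so each phase has probability exactly $1/2$. Condition on phase $s$, and for each terminal $t=(x,0)$ let $B_t = B_{x,d}$ denote the goalpost of radius $d$ around $t$. Consecutive terminals are at $\ell_\infty$-distance $2d$, so the $2k$ goalposts are pairwise disjoint. Let $E_t$ be the event that no contour of $\sigma$ crosses the outer edge of $B_t$; any such crossing contour has length at least $d$, so $\Pr(\neg E_t\mid\text{phase }s) \leq c_1^{-d}$ after summing the Peierls bound over the $O(d)$ vertices on the outer edge of $B_t$. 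A union bound over terminals then gives $\Pr(\neg E\mid\text{phase }s) \leq 2k\,c_1^{-d}$, where $E = \bigcap_{t}E_t$.

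Conditional on $E$ and on phase $s$, the configuration outside $\bigcup_t B_t$ is pinned to the ground pattern of phase $s$, so the restrictions $\sigma(B_t)$ to the different goalposts are mutually independent, each distributed according to the Gibbs measure on the induced subgraph with the corresponding boundary condition. Hence
\[
\Pr\bigl(\sigmarv(\terminals)=\tau \bigm| E,\text{ phase }s\bigr) \;=\; \prod_{t\in\terminals}\Pr\bigl(\sigmarv(t)=\tau(t) \bigm| E,\text{ phase }s\bigr),
\]
and by the rotational symmetry of the cylinder each marginal depends only on whether the parity of $t$ agrees with $s$. A further application of the Peierls bound identifies this finite-$d$ marginal with $\peq$ (matched parity) or $\pneq$ (unmatched parity) up to an error of size $c_2^{-d}$, by comparing the phase-$s$ boundary condition on $B_t$ with the ``keyhole'' configuration used in the definitions of $\peq$ and $\pneq$. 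Averaging over the two equally likely phases, and absorbing the $O(k c_1^{-d})$ error from the bad event $\neg E$ together with the $2k\cdot c_2^{-d}$ error accumulated from the marginal approximations, yields the claimed bound $|\Pr(\sigmarv(\terminals)=\tau)-\muall(\tau)|\leq c^{-d}k^2$ for a suitable $c>1$.

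The main obstacle is making the Peierls bound quantitative: verifying that in this three-parameter spin system on a cylinder each defect's weight penalty can be distributed over contour edges to give a uniform per-edge suppression factor small enough to beat the $3^\ell$ counting bound, which is precisely the role of the specific exponents $\lambda^{-1/4}$ and $\lambda^{3/8}$ appearing in \eqref{eq:spincond}. Once that estimate is in hand, the phase equidistribution follows from the rotational symmetry, the goalpost independence follows from the absence of contours connecting distinct $B_t$'s under the event $E$, and the matching of marginals to $\peq,\pneq$ follows from one more contour argument on the keyhole geometry.
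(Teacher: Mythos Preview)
Your outline tracks the paper's contour strategy, but there is a genuine gap in the conditional-independence step. You assert that, conditional on phase~$s$ and on the event $E$ that no contour crosses the outer edge of any goalpost $B_t=B_{x,d}$, ``the configuration outside $\bigcup_t B_t$ is pinned to the ground pattern of phase~$s$'', and hence the restrictions to different goalposts are mutually independent with fixed boundary condition. This is false: neither the phase nor the event~$E$ fixes the outside spins (there will typically be many short contours out there), so the conditional law is not a product over goalposts and your displayed factorisation does not hold. The paper avoids this by a more delicate decomposition. For each terminal it constructs a \emph{canonical outermost} $d$-boundary $\calB_x(\sigma)$ (Definition~\ref{def:boundaries} and the discussion after Definition~\ref{def:phase}): a connected set of vertices with parity-$s$ ones lying in the annulus between radii $d/4$ and $d/2$ around the terminal. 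The crucial ``canonicity'' property is that $\calB_x(\sigma)$ is determined by $\sigma$ restricted to $\calB_x(\sigma)$ together with its exterior; this is exactly what makes conditioning on $\{\calB_x(\bsigmakd)=B_x:\text{terminals }(x,0)\}$ decouple the interiors and reduce each terminal's marginal to a Gibbs measure with the boundary condition ``$\bsigma(B_x)$ has parity-$s$ ones''. Summing over all choices of the $B_x$ then recovers the conditioning on phase alone.

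There is a second, smaller gap in the marginal comparison. Once you have a boundary $B_x$ with parity-$s$ ones, you must show the marginal at the terminal is within $c^{-d}$ of $\peq$ (or $\pneq$) \emph{uniformly over all possible shapes $B_x$ in the annulus}. A bare Peierls comparison with the keyhole configuration does not give this uniformity, since $B_x$ is random and irregular. The paper supplies the missing ingredient via an FKG monotonicity argument (Lemma~\ref{lem:vertex_monotonicity}): the marginal probability that the terminal has parity-$0$ ones is monotone in the boundary, so for any $d$-boundary $B_x$ it is sandwiched between $\peq(d/2)$ and $\peq(d/4)$ (respectively $\pneq(d/4)$ and $\pneq(d/2)$); a final Peierls estimate (Lemma~\ref{lem:boundary_terminals}) then bounds $\peq(d/4)-\peq$ and $\pneq-\pneq(d/4)$ by $c^{-d}$.
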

 
Proposition~\ref{prop:gadget} is established at the end of this
section. We will use contour arguments adapted from Dobrushin~\cite{Dob68} and Borgs et al.~\cite{Borgs}.
The outline of the argument is as follows.  We first define
``contours''  in Section~\ref{sec:contours}. We  show, in
Section~\ref{sec:notlong}, that
long contours are unlikely.  
In Section~\ref{sec:nearind}, we show that, in the absence of long contours,
the spins of terminals are nearly independent.
With high probability,   the gadget has a phase~$s$ and there is a boundary around
each terminal, whose spins are consistent with~$s$. Conditioned
on~$s$, the distribution of the spins of the terminals is close to $\muall^s$.
 
\subsection{The Dual Gadget, trails, and contours}
\label{sec:contours}

The dual gadget~$C^*_\size$  
has vertex set
$V(C^*_\size) = \DualColumns \times \DualRows$. 
Vertices $(x,y)$ and $(x',y')$ are adjacent in~$C^*_\size$ if
\begin{itemize}
\item $y=y'$ and 
$y\notin\{-\tfrac{1}{2},\size+\tfrac{1}{2}\}$, and
$x=x'\pm 1$ (where of course, the arithmetic is modulo~$2\size$), 
or
\item $x=x'$ and $y=y' \pm 1$.
\end{itemize}
$E(C^*_\size)$ denotes the edge set of $C^*_\size$.
This is illustrated in Figure \ref{fig:dualgraph}.

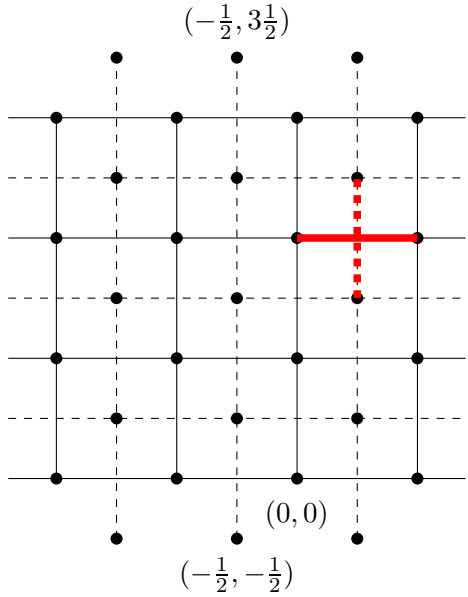
\begin{figure}[!h] 
\begin{center}
\begin{tikzpicture}[scale=1.6]
\draw (-2.4,0) -- (1.4,0) (-2.4,1) -- (1.4,1) (-2.4,2) -- (1.4,2) (-2.4,3) -- (1.4,3);
\draw (-2,0) -- (-2,3) (-1,0) -- (-1,3) (0,0) -- (0,3) (1,0) -- (1,3);
\foreach \x in {-2,-1,0,1}
 \foreach \y in {0,1,2,3}
   \fill (\x,\y) circle (0.05);
\draw[dashed] (-2.4,0.5) -- (1.4,0.5) (-2.4,1.5) -- (1.4,1.5) (-2.4,2.5) -- (1.4,2.5);
\draw[dashed] (-1.5,-0.5) -- (-1.5,3.5) (-0.5,-0.5) -- (-0.5,3.5) (0.5,-0.5) -- (0.5,3.5);
\foreach \x in {-1.5,-0.5,0.5}
 \foreach \y in {-0.5,0.5,1.5,2.5,3.5}
   \fill (\x,\y) circle (0.05);
\draw (-0.5,3.8) node {$(-\tfrac 1 2,3\tfrac 1 2)$};
\draw (-0.5,-0.8) node {$(-\tfrac 1 2,-\tfrac 1 2)$};
\draw (0,-0.3) node {$(0,0)$};

\draw[red, line width=0.1cm, dashed] (0.5,1.5) -- (0.5,2.5);
\draw[red, line width=0.1cm] (0,2) -- (1,2);
\end{tikzpicture}
\end{center}
\caption{Part of $C_3$ and $C_3^*$; solid lines are edges of $C_3$,
 dashed lines are edges of $C_3^*$. The red thickened lines are a
 dual pair of edges.}
 \label{fig:dualgraph}
\end{figure}

There is a bijection called ``duality'' 
between edges of $C_\size$ and
edges of $C^*_\size$.
In particular, the dual of edge  $e=((x,y),(x+1,y))$ of $C_\size$ is  
$e^* = ((x+\tfrac 1 2,y-\tfrac 1 2),(x+\tfrac 1 2,y+\tfrac 1 2))$
and the dual of edge~$e^*$ is~$e$.
Similarly, the dual of edge $f=((x,y),(x,y+1))$ of $C_\size$ is
$f^* = ((x-\tfrac12,y+\tfrac12),(x+\tfrac12,y+\tfrac12))$ and the dual of~$f^*$ is~$f$.
We use the $^*$-operation to move between an edge and its dual, 
so every edge~$e$ satisfies ${(e^*)}^* = e$.

\begin{figure}[!h]
\def\hook#1#2#3#4{(canvas polar cs:angle=10*#1,radius=2cm+0.2*#2cm) arc(10*#1:10*#3:2cm+0.2*#2cm) --
(canvas polar cs:angle=10*#3,radius=2cm+0.2*#4cm)}
\begin{center}
\begin{tikzpicture}[scale=0.9]
\foreach \x in {0,...,35} \draw
  (canvas polar cs:angle=-90+10*\x,radius=1.8cm) --
  (canvas polar cs:angle=-90+10*\x,radius=4.2cm);
\foreach \y in {0,...,10} \draw (0,0) circle (4cm-0.2*\y cm);
\foreach \x in {0,...,35} \foreach \y in {0,...,10}
    \fill[gray] (canvas polar cs:angle=-90+10*\x,radius=4cm-0.2*\y cm) circle (0.04cm);
\draw  [violet, ultra thick]  \hook{17}{6}{20}{9};
\draw  [violet, ultra thick]  \hook{20}{9}{19}{8};
\draw  [violet, ultra thick]  \hook{19}{8}{17}{6};

\draw  [violet, ultra thick]  \hook{5}{8}{8}{11};
\draw  [violet, ultra thick]  \hook{5}{6}{9}{7};
\draw  [violet, ultra thick]  \hook{5}{6}{5}{8};
\draw  [violet, ultra thick]  \hook{9}{7}{10}{11};

\draw  [violet, ultra thick]  \hook{27}{-1}{27}{3};
\draw  [violet, ultra thick]  \hook{27}{3}{25}{5};
\draw  [violet, ultra thick]  \hook{25}{5}{30}{2};
\draw  [violet, ultra thick]  \hook{30}{2}{29}{-1};

\draw  [violet, ultra thick]  \hook{2}{3}{8}{2};
\draw  [violet, ultra thick]  \hook{8}{2}{12}{5};
\draw  [violet, ultra thick]  \hook{12}{5}{16}{3};
\draw  [violet, ultra thick]  \hook{16}{3}{23}{8};
\draw  [violet, ultra thick]  \hook{23}{8}{32}{6};
\draw  [violet, ultra thick]  \hook{32}{6}{38}{3};

\draw  [blue, dashed, ultra thick]  \hook{33}{-1}{33}{4};
\draw  [blue, dashed, ultra thick]  \hook{33}{4}{35}{9};
\draw  [blue, dashed, ultra thick]  \hook{35}{9}{34}{11};

 \draw (canvas polar cs:angle=329,radius=1.6cm) node {$C$};
 \draw (canvas polar cs:angle=341,radius=4.5cm) node {$C'$};
\end{tikzpicture}
\end{center}
\caption{Some representative simple contours (violet, solid) and a single cross contour (blue, dashed, from $C$ to $C'$)
in the dual gadget.}
\label{fig:contours}
\end{figure}

A \emph{trail} in $C^*_\size$ is a sequence $g=v_1,\dots,v_j$
of vertices in~$V(C^*_\size)$ such that 
each pair~$(v_i,v_{i+1})$ is an edge of $C^*_\size$, and no edge
is used twice.   
A \emph{contour} is a trail $g=v_1,\dots,v_j$ in $C^*_\size$ satisfying one of the following:
\begin{itemize}
\item $v_1=v_j$, or
\item The $y$-coordinate of~$v_1$ and the $y$-coordinate of~$v_j$
are both in $\{-\tfrac{1}{2},\size+\tfrac{1}{2}\}$.
\end{itemize}
The \emph{length} of $g$ is $j-1$.
We say that~$g$ is a \emph{cross contour} if the $y$-coordinate of~$v_1$
is~$-\frac12$ and the $y$-coordinate of~$v_j$ is~$\size+\frac12$ (or vice-versa). 
A cross contour goes from one boundary of the gadget to the other.
We say that every other contour  is a \emph{simple} contour. Refer to Figure~\ref{fig:contours},
where some contours are visualised.
 
Given $\sigma:\configs{V(C_\size)}$,
let $\sigma^*$ be the set of edges of $C^*_\size$
which are dual to monochromatic edges.
In particular,
$\sigma^* = \{  (u,v)^* \in E(C^*_\size) \mid 
 \sigma(u)=\sigma(v) \}$. 
A contour of~$\sigma$ (to be defined presently) is a trail in the dual gadget that 
separates vertices 
with parity-$0$ ones from vertices with parity-$1$ ones in the (primal) gadget
(the first bullet point in Definition~\ref{def:contours}).  In addition, the    
parities of ones immediately to the ``left'' and ``right'' of the contour
are consistent along its entire length (the second bullet point).
 \begin{definition}\label{def:contours}
 A \emph{contour} of~$\sigma$ is a contour~$g=v_1,\ldots,v_j$ satisfying the following two properties.
 \begin{itemize}
 \item The edges of~$g$ are monochromatic: That is, 
 for all $1\leq i < j$, $(v_i,v_{i+1})\in \sigma^*$.
 \item The contour~$g$ always turns at degree-$4$ vertices: That is,
 for all $1< i < j$,
 if four edges of $\sigma^*$ meet at vertex~$v_i$, then
 $v_{i-1}$ and $v_{i+1}$ differ in both the $x$ component and the $y$ component.
 Similarly, if four edges of $\sigma^*$ meet at $v_1=v_j$ then $v_2$ and $v_{j-1}$ differ
 in both the $x$ component and the $y$ component.
 \end{itemize}
 \end{definition}
  
 Note that contours of~$\sigma$ cannot cross each other, though two contours
 can  share a vertex without crossing. Also, two contours can have a common
 portion (before turning off in two different directions).
 Finally, every edge of $\sigma^*$ is contained in at least one contour of $\sigma$.

 Let $\sigma:\configs{V(C_\size)}$ be a configuration, and let 
 $g$ be a contour of~$\sigma$.
 We say that a vertex $u\in V(C_\size)$ is adjacent to~$g$
 if there is an edge $(u,v)\in E(C_\size)$ such that $e^*\in g$.
 The set of vertices adjacent to~$g$ can
 be written as the union of two sets, $L(g)$ and $R(g)$, where
 $L(g)$ is the set of vertices of $V(C_\size)$ that are on the left (relative to the direction of travel)
 when we follow the trail~$g$ from~$v_1$ to~$v_j$, and
 $R(g)$ is the set of vertices of $V(C_\size)$ that are on the right (relative to the direction of travel).
 See Figure~\ref{fig:two}.
 \begin{figure}[!h]
\begin{center}
\begin{tikzpicture}
\draw (-0.3,-0.3) grid (5.3,5.3);
\draw[->-] (1,2) -- (1,3);
\draw[->-] (1,3) -- (2,3);
\draw[->-] (2,3) -- (2,4);
\draw[->-] (2,4) -- (3,4);
\draw[->-] (3,4) -- (3,3);
\draw[->-] (3,3) -- (4,3);
\draw[->-] (4,3) -- (4,2);
\draw[->-] (4,2) -- (4,1);
\draw[->-] (4,1) -- (3,1);
\draw[->-] (3,1) -- (2,1);
\draw[->-] (2,1) -- (2,2);
\draw[->-] (2,2) -- (1,2);
\foreach \x/\y in {0/2,1/1,1/3,2/0,2/4,3/0,3/3,4/1,4/2} \draw (\x+0.5,\y+0.5) node {L};
\foreach \x/\y in {1/2,2/1,2/3,3/1,3/2} {
  \fill[shade] (\x,\y) rectangle +(1,1);
  \draw (\x+0.5,\y+0.5) node {R};
}
\end{tikzpicture}
\end{center}
\caption{Left and right vertices of a contour of $\sigma$.  The shaded
  squares represent vertices of $C_\size$ with parity-$1$ ones and the
  unshaded squares represent vertices of $C_\size$ with parity-$0$ ones.}
\label{fig:two}
\end{figure}
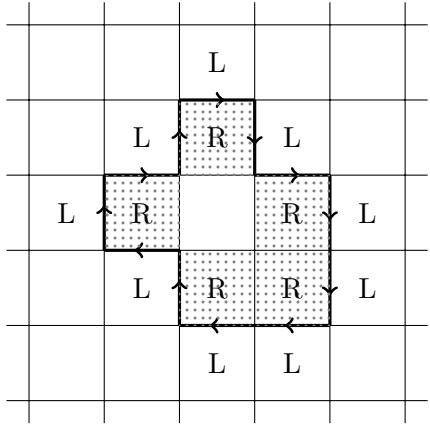
  
A key property of contours is the following.
 
 \begin{lemma}
 \label{lem:contourboundary}
 Let $\sigma:\configs{V(C_\size)}$ be a configuration, and let~$g$
 be a contour of~$\sigma$.
 Then for some $s\in\{0,1\}$,
  $\sigma(L(g))$ has  parity-$s$ ones and $\sigma(R(g))$ has  parity-$(1{\oplus}s)$ ones.
   \end{lemma}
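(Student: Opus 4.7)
The plan is to introduce an invariant function $h : V(C_\size) \to \{0,1\}$ defined by $h(v) = \sigma(v) \oplus \pi(v)$, where $\pi(v) = (x+y) \bmod 2$ is the parity of $v = (x,y)$. A direct check shows that $\sigma(S)$ has parity-$s$ ones precisely when $h \equiv 1 \oplus s$ on $S$. Since any two adjacent primal vertices have opposite parities, $h$ flips across every monochromatic primal edge and is preserved across every bichromatic one. In particular, for every contour edge the primal edge $e_i$ dual to $(v_i,v_{i+1})$ is monochromatic, so $h(u_i^L) \ne h(u_i^R)$. It therefore suffices to show that $h$ is constant on $L(g) = \{u_i^L\}$ and constant (with the opposite value) on $R(g) = \{u_i^R\}$; the lemma then follows by choosing $s$ so that $h \equiv 1 \oplus s$ on $L(g)$ and $h \equiv s$ on $R(g)$.

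The core step is an induction on $i$ establishing $h(u_i^L) = h(u_{i+1}^L)$ and $h(u_i^R) = h(u_{i+1}^R)$. The two consecutive contour edges meet at the dual vertex $v_{i+1}$, which is incident to a single primal face of $C_\size$ whose four corners include $u_i^L, u_i^R, u_{i+1}^L, u_{i+1}^R$. If the contour goes \emph{straight} through $v_{i+1}$, then $e_i$ and $e_{i+1}$ are parallel, and I claim the two ``flanking'' primal edges of this face are both bichromatic: otherwise, monochromaticity of one flanking edge combined with monochromaticity of $e_i$ and $e_{i+1}$ propagates the spin around the four corners and forces the opposite flanking edge to be monochromatic as well, yielding four edges of $\sigma^*$ at $v_{i+1}$; but then Definition~\ref{def:contours} requires the contour to turn at $v_{i+1}$, contradicting the assumption. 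Both flanking edges being bichromatic, $h$ is preserved from $u_i^L$ to $u_{i+1}^L$ and from $u_i^R$ to $u_{i+1}^R$ via those very edges. If instead the contour \emph{turns} at $v_{i+1}$, then on the inside of the turn the two vertices $u_i^\star$ and $u_{i+1}^\star$ coincide at the unique primal vertex shared by $e_i$ and $e_{i+1}$, while on the outside they are diagonally opposite corners of the face: these automatically share the same $\pi$, and they share the same $\sigma$-value (by chaining through the inside corner using the two monochromatic contour edges). In both sub-cases $h$ agrees.

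For simple (closed) contours the induction is taken cyclically around the loop; for cross contours the boundary dual vertices $v_1, v_j$ each lie on only one contour edge and so need no inductive step. The main obstacle, in my view, is the straight-through case above: the transitivity argument around the four corners of the face plus the careful invocation of the ``turn at degree-four vertex'' clause of Definition~\ref{def:contours} is what makes the proof go through. After that, the turn case is an elementary coordinate check, and the conclusion assembles immediately from constancy of $h$ on $L(g)$ and $R(g)$.
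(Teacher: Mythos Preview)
Your proof is correct and follows essentially the same inductive argument as the paper: both walk along the contour, split into straight versus turning cases at each intermediate dual vertex, and invoke the degree-$4$ turning clause of Definition~\ref{def:contours} precisely to force the flanking face edges to be bichromatic in the straight case. Your invariant $h=\sigma\oplus\pi$ is just a tidy repackaging of the paper's ``has parity-$s$ ones'' (they correspond via $s=1\oplus h$); one terminological slip---``simple'' is not synonymous with ``closed'' here, since simple contours may have both endpoints on the same boundary---but your induction over the interior trail vertices $v_2,\ldots,v_{j-1}$ already covers all of $L(g)$ and $R(g)$ for every contour type, so this is harmless.
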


   \begin{proof}
Pick $s\in\{0,1\}$ such that the vertex on the left as we go from
$v_1$ to $v_2$ has parity-$s$ ones.
By induction on $i$, we will show that for each $i$ the vertex on
the left as we go from $v_i$ to $v_{i+1}$ has parity-$s$ ones.
Suppose without loss of generality that the edge $(v_{i-1},v_i)$ increases the $x$-component
(the other three cases are similar). So
$v_{i-1} =  (x-\frac12,y+\frac12)$
and $v_i =  (x+\frac12,y+\frac12)$.
Since $g$ is a contour of~$\sigma$,  $\sigma(x,y)=\sigma(x,y+1)=s\oplus x\oplus y$. There are three cases.

\begin{figure}[!h]
\begin{center}
\begin{tikzpicture}
\newcommand{\drawstep}[2]{
 \begin{scope}[xshift=#1 cm]
   \draw (-0.2,-0.2) grid (2.2,2.2);
 \draw (0.5,-0.5) node {$x$}  (1.5,-0.5) node {$x+1$};
 \draw (-0.5,0.5) node {$y$}  (-0.5,1.5) node {$y+1$};
 \fill[black] (0,0) rectangle +(1,1);
 \fill[black] (0,1) rectangle +(1,1);
 \draw[white, ->-] (0,1) -- (1,1);
   #2
 \end{scope}
}
\drawstep{0}{
 \fill[pattern=crosshatch] (1,0) rectangle +(1,1);
 \fill[black] (1,1) rectangle +(1,1);
 \draw[white, ->-] (1,1) -- (1,2);
}
\drawstep{5}{
 \fill[black] (1,0) rectangle +(1,1);
 \fill[pattern=crosshatch] (1,1) rectangle +(1,1);
 \draw[white, ->-] (1,1) -- (1,0);
}
\drawstep{10}{
 \draw[black, ->-] (1,1) -- (2,1);
}
\end{tikzpicture}
\end{center}
\caption{Cases 1, 2, and 3. Black squares have the same spin as
 $\sigma(x,y)$; white squares have the opposite spin, and hatched
 squares can be either.}
\end{figure}

\begin{enumerate}
\item $v_{i+1}=(x+\frac12,y+\frac32)$.  In this case the vertex $(x,y+1)$ is
 still on the left as we go from $v_i$ to $v_{i+1}$.
\item $v_{i+1}=(x+\frac12,y-\frac12)$.  In this case the vertex $(x+1,y)$ is
 on the left as we go from $v_i$ to $v_{i+1}$, but since $g$ is a
 contour of $\sigma$ we have
 $\sigma(x+1,y)=\sigma(x,y)=\sigma(x,y+1)$. So $(x+1,y)$ has
 parity-$s$ ones.
\item $v_{i+1}=(x+\frac32,y+\frac12)$.  In this case $(x+1,y+1)$ is on the
 left, and $(x+1,y)$ is on the right.  Since $g$ is a contour of
 $\sigma$, $\sigma(x+1,y)=\sigma(x+1,y+1)$, and we know
 $\sigma(x,y)=\sigma(x,y+1)$.  Since the contour did not turn, the
 vertex $(x+\frac12,y+\frac12)$ cannot have degree $4$ in $\sigma^*$, so
 $\sigma(x+1,y+1)=s\oplus x\oplus y\oplus 1$, so $(x+1,y+1)$ has parity-$s$
 ones. \qedhere
\end{enumerate}
\end{proof}

The following lemma allows $w_{G}(\sigma)$ to be expressed more
easily in terms of the contours of $\sigma$.  
Suppose $\size > 2$. A \emph{side vertex} of~$C_\size$ is a vertex $(x,y)\in V(C_\size)$ with 
$y=0$ or $y=\size$. A~\emph{side edge} is an edge 
in $E(C_\size)$ between two side vertices.
Recall that $b(\sigma)$ (respectively, $c(\sigma)$)
denotes the number of edges~$(u,v)$ of~$G$
with $\sigma(u)=\sigma(v)=0$
(respectively, $\sigma(u)=\sigma(v)=1$),
and $\ell(\sigma)$ denotes the number of vertices~$u$
with $\sigma(u)=1$.
 
\begin{lemma}\label{lem:ell}
Fix $\size>2$ and a configuration $\sigma:\configs{V(C_\size)}$.  Let
$b'(\sigma)$ be the number of side  edges~$(u,v)$ of~$C_\size$
with $\sigma(u)=\sigma(v)=0$
and let $c'(\sigma)$ be the number of side edges~$(u,v)$ of~$C_\size$
with $\sigma(u)=\sigma(v)=1$.  Then
\begin{align*}
\ell(\sigma)=\tfrac{1}{4}(c(\sigma)-b(\sigma))+\tfrac{1}{8}(c'(\sigma)-b'(\sigma))+\size(\size+1).
\end{align*}
\end{lemma}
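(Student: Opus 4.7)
The plan is to use two applications of the handshake lemma: one for the whole gadget $C_\size$, and one restricted to the subgraph induced by the side vertices.

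First I would observe that every non-side vertex of $C_\size$ has degree $4$, while every side vertex has degree $3$ (it loses its neighbour just outside the band $\Rows$). Writing $\ell_s(\sigma)$ for the number of side vertices with spin~$1$, summing degrees over the spin-$1$ vertices of $C_\size$ gives
\begin{equation*}
4\ell(\sigma)-\ell_s(\sigma)=2c(\sigma)+m(\sigma),
\end{equation*}
where $m(\sigma)$ is the number of bichromatic edges of $C_\size$. Counting total edges (each of the $\size+1$ rows contributes $2\size$ horizontal edges, and each of the $2\size$ columns contributes $\size$ vertical edges, giving $2\size(2\size+1)$ edges in all), we can eliminate $m(\sigma)=2\size(2\size+1)-b(\sigma)-c(\sigma)$, producing
\begin{equation*}
4\ell(\sigma)=\ell_s(\sigma)+c(\sigma)-b(\sigma)+2\size(2\size+1).
\end{equation*}

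Next I would apply exactly the same handshake counting to the subgraph of $C_\size$ spanned by the side vertices. This subgraph consists of two vertex-disjoint cycles (the rows $y=0$ and $y=\size$), each of length $2\size$, so it is a $2$-regular graph on $4\size$ vertices and $4\size$ edges. Letting $m'(\sigma)$ denote the number of bichromatic side edges, the same argument yields $2\ell_s(\sigma)=2c'(\sigma)+m'(\sigma)$ and $m'(\sigma)=4\size-b'(\sigma)-c'(\sigma)$, hence
\begin{equation*}
\ell_s(\sigma)=\tfrac12\bigl(c'(\sigma)-b'(\sigma)\bigr)+2\size.
\end{equation*}

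Finally I would substitute this expression for $\ell_s(\sigma)$ into the previous identity and divide by $4$; the constants combine as $\tfrac14(2\size)+\tfrac14\cdot 2\size(2\size+1)/\,\cdots$ — more carefully, $\tfrac14\cdot 2\size+\tfrac{2\size(2\size+1)}{4}=\tfrac{\size}{2}+\size^2+\tfrac{\size}{2}=\size(\size+1)$, giving exactly the claimed formula. This step is purely routine arithmetic; the only place where a slip is at all likely is in the bookkeeping of side-vertex degrees and in checking that the two rows $y=0$ and $y=\size$ together contribute $4\size$ edges rather than $4\size-2$ (this uses the fact that the rows are cycles, which is why the hypothesis $\size>2$ is needed: for $\size=2$ a vertical edge could coincide with a ``side'' configuration in pathological ways, whereas for $\size>2$ the side rows are honest disjoint $2\size$-cycles). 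Beyond that the proof is just two invocations of the handshake lemma and one substitution.
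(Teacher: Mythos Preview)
Your proof is correct and follows essentially the same double-counting/handshake argument as the paper; the only cosmetic difference is that the paper determines the constant term $\tfrac14|E(C_\size)|+\tfrac18|E'|$ by plugging in the alternating configuration $\sigma(x,y)=x\oplus y$ (for which $b=c=b'=c'=0$ and $\ell=\size(\size+1)$), whereas you count edges directly. One small correction to your aside: the side rows $y=0$ and $y=\size$ are already disjoint $2\size$-cycles for $\size=2$; the genuine pathology (vertical edges becoming side edges) occurs only at $\size=1$.
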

\begin{proof}
Let $\ell'(\sigma)$ be the number of side vertices~$u$ with $\sigma(u)=1$. Let  $E'$ be the set of all side edges in $C_\size$.  By
double-counting pairs $(u,(u,v))$ with $\sigma(u)=1$ and $(u,v)\in E(C_\size)$,
\begin{align*}
(|E(C_\size)|-b(\sigma)-c(\sigma))+2c(\sigma)&=4\ell(\sigma)-\ell'(\sigma).
\end{align*}
By double-counting pairs $(u,(u,v))$ with $\sigma(u)=1$ and $(u,v)\in E'$,  we have
\begin{align*}
(|E'|-b'(\sigma)-c'(\sigma))+2c'(\sigma)&=2\ell'(\sigma).
\end{align*}
Rearranging gives
$\ell(\sigma)=\tfrac{1}{4}(c(\sigma)-b(\sigma))+\tfrac{1}{8}(c'(\sigma)-b'(\sigma))+\tfrac
1 4 |E(C_\size)| + \tfrac 1 8 |E'|$.  
Consider the
configuration 
with alternating~$0$s and~$1$s given by
$\sigma(x,y)=x\oplus y$. For this configuration~$\sigma$, we have $b(\sigma)=c(\sigma)=b'(\sigma)=c'(\sigma)=0$
and $\ell(\sigma)=\size(\size+1)$, so the constant term $\tfrac 1 4 |E(C_\size)| +
\tfrac 1 8 |E'|$ is $\size(\size+1)$.
\end{proof}

  \subsection{Long contours are unlikely}
  \label{sec:notlong}

 \begin{lemma}\label{lem:longsimple_unlikely} 
There is a $c>1$ such that, 
for all sufficiently large~$h$, 
all $\size>2$, and all $U\subseteq V(C^*_\size)$,
$$\Pr(\mbox{$\bsigmasize$ has a simple contour of length at least~$h$
starting in~$U$} )\leq |U| \, c^{-h}.$$
 \end{lemma}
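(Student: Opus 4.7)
The plan is a Peierls-style contour argument, combining a counting bound on contours of each length with a weight comparison obtained via a flipping map.

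First, I would count simple contours. Because every vertex of $C^*_\size$ has degree at most $4$ and a contour is a trail (so edges cannot be reused), the number of simple contours of length exactly $h'$ starting at a given vertex of $C^*_\size$ is at most $4\cdot 3^{h'-1}$.

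Second, for each simple contour $g$ of length $h'$, I would bound $\Pr(g \text{ is a contour of } \bsigmasize)$. Since $g$ is not a cross contour, its underlying curve separates $C_\size$ into two regions; designate one canonically as the interior $\Int(g)$. Define a flip map $\phi_g$ on configurations $\sigma$ having $g$ as a contour by toggling every spin in $\Int(g)$ (with a small variant in the degenerate case $\gamma = 0$, to avoid creating a $(1,1)$-edge inside). The map is injective, because $\sigma$ can be recovered from $\phi_g(\sigma)$ given $g$, and every primal edge dual to $g$, which was monochromatic in $\sigma$, becomes bichromatic in $\phi_g(\sigma)$. The core estimate is $w_{C_\size}(\sigma)/w_{C_\size}(\phi_g(\sigma)) \leq c_0^{-h'}$ for some $c_0 > 3$. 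To obtain it, apply Lemma~\ref{lem:ell} to rewrite
\[
 w_{C_\size}(\sigma) = (\beta\lambda^{-1/4})^{b(\sigma)} (\gamma\lambda^{1/4})^{c(\sigma)} \lambda^{\tfrac18(c'(\sigma) - b'(\sigma)) + \size(\size+1)},
\]
and analyse the change in each factor when $\sigma$ is transformed to $\phi_g(\sigma)$. Each non-side contour edge contributes a factor of either $\beta\lambda^{-1/4}$ or $\gamma\lambda^{1/4}$ to $w(\sigma)/w(\phi_g(\sigma))$, while each side contour edge contributes either $\beta\lambda^{-3/8}$ or $\gamma\lambda^{3/8}$ — the exponent $3/8 = 1/4 + 1/8$ arises exactly because the side adjustment $\tfrac18(c' - b')$ in Lemma~\ref{lem:ell} adds to the bulk adjustment $\tfrac14(c - b)$ at side edges. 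Using~\eqref{eq:spincond} together with $\lambda \geq 1$, each contour-edge factor is at most $0.238$, so the aggregate contour contribution to the ratio is at most $0.238^{h'}$. The interior-edge contributions (from edges strictly inside $\Int(g)$ whose type toggles between $(0,0)$ and $(1,1)$) contribute additional factors of $\gamma\lambda^{1/2}/\beta$ or $\beta/(\gamma\lambda^{1/2})$, which by the antiferromagnetic condition $\beta\gamma < 1$ together with~\eqref{eq:spincond} can be shown not to overwhelm the contour gains, yielding $c_0 > 3$.

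Combining the counting and the weight-ratio estimate, and summing over starting vertices in $U$ and lengths $h' \geq h$,
\[
 \Pr(\bsigmasize \text{ has a simple contour of length} \geq h \text{ starting in } U) \leq |U|\sum_{h' \geq h} 4\cdot 3^{h'-1}\cdot c_0^{-h'} \leq |U|\cdot c^{-h}
\]
for any $c < c_0/3$, once $h$ is large enough. The main obstacle is the weight-ratio estimate: controlling the interior-edge contributions of $\phi_g$ (which do not cancel automatically) and verifying that the numerical thresholds in~\eqref{eq:spincond} are sharp enough to give $c_0 > 3$; the degenerate case $\gamma = 0$ requires a mild modification of $\phi_g$ — for example, a parity-shifted flip — that still preserves both injectivity and the weight-gain bound while respecting the independent-set constraint.
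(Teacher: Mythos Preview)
Your proposal has a genuine gap in the weight-ratio estimate. The flip map $\phi_g$ you describe does not control the interior-edge contributions, and these are \emph{not} negligible. A simple contour of length $h'$ can enclose a region of area $\Theta((h')^2)$, and the configuration $\sigma$ inside that region may have $\Theta((h')^2)$ monochromatic $(0,0)$ edges (e.g.\ take $\sigma$ to be identically~$0$ well inside the contour). Each such edge contributes a factor $\beta/(\gamma\lambda^{1/2})$ to $w(\sigma)/w(\phi_g(\sigma))$, and nothing in~\eqref{eq:spincond} forces this factor below~$1$; for small~$\gamma$ it is enormous, and for $\gamma=0$ it is infinite. So the interior loss, which scales with area, overwhelms the $0.238^{h'}$ perimeter gain, and you cannot conclude $c_0>3$. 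Your remark that the antiferromagnetic condition $\beta\gamma<1$ handles this is not correct: that condition says nothing about $\beta/(\gamma\lambda^{1/2})$.

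The paper avoids this entirely by using a \emph{shift} map rather than a flip: on the side~$S$ of the contour, set $\sigma^g(x,y)=\sigma(x-1,y)$, filling the vacated column with the consistent parity. Shifting preserves every edge relationship inside~$S$, so $(\sigma^g)^*$ equals $\sigma^*$ with $g$ deleted and the contours in~$S$ translated; there are no interior contributions at all, and the weight ratio is exactly $(\beta\lambda^{-1/4})^{b(g)}(\gamma\lambda^{1/4})^{c(g)}\lambda^{(c'(g)-b'(g))/8}$, which is bounded by $\max(\beta\lambda^{-1/4},\gamma\lambda^{3/8})^{h'}\leq 0.238^{h'}$. You seem to glimpse this idea in your aside about a ``parity-shifted flip'' for the $\gamma=0$ case, but it is not a minor variant for a degenerate case --- it is the main device, needed throughout the parameter range.
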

\begin{proof} 

Suppose that $g$ is a simple length-$r$ contour of a configuration
$\sigma\colon\configs{V(C_\size)}$. 
Consider the connected components of the graph
$\big(V(C_\size),E(C_\size)\setminus\{e^* \mid e \in g\}\big)$.
We say that a component is ``left''
if it contains at least one vertex in $L(g)$ (but no vertices in $R(g)$).
We say that it is ``right'' if it
contains at least one vertex in $R(g)$ (but no vertices in $L(g)$).
Every component is either left or right.
Let $S$ be the set of vertices in left components. 
Let $\overline S = V(C_\size) \setminus S$.
Let $S' = \{ (x,y) \in   S \mid (x-1,y)\in \overline S\}$,
where,  as usual, the arithmetic on~$x$ is done modulo~$2 \size$.

Suppose that $\sigma(R(g))$ has parity-$s$ ones.
By Lemma~\ref{lem:contourboundary}, this is true for some $s\in\{0,1\}$.
Define a configuration $\sigma^g:\configs{V(C_\size)}$ as follows:
$\sigma^g(\overline S) = \sigma(\overline S)$,
$\sigma^g(S')$ has parity-$s$ ones,
and, for every $(x,y) \in S\setminus S'$, 
$\sigma^g( x,y ) = \sigma(x-1,y)$.  (Informally, $\sigma^g$ is obtained
by shifting the configuration on~$S$ one place to the right, i.e., in 
the direction of increasing~$x$, and assigning consistent parity to
the vacated vertices.)

Note that $\sigma\mapsto \sigma^g$ is a map from the set of configurations~$\sigma$
with $g$ as a contour to the set of all configurations;  further, it does not lose 
information, and hence is injective.  (To see that the map is injective, note that 
on most of~$S$ the configuration~$\sigma$ may be recovered 
using the identity $\sigma(x,y)=\sigma^g(x+1,y)$.
The vertices with $(x,y)\in S$ and $(x+1,y)\in \overline S$ are exceptional,
but they are adjacent to the contour $g$ and hence $\sigma$ is determined on these.
Of course, $\sigma=\sigma^g$ on $\overline S$.) 

Note also that $(\sigma^g)^*$ is the same as $\sigma^*$, but with $g$ removed and
with the contours in $S$ shifted by one.
By Lemma \ref{lem:ell},
\begin{align*}
w_{C_{\size}}(\sigma)&=(\beta\lambda^{-1/4})^{b(g)}(\gamma\lambda^{1/4})^{c(g)}
   \lambda^{(c'(g)-b'(g))/8}w_{C_{\size}}(\sigma^g),
\end{align*}
where $b(g),c(g),b'(g),c'(g)$ are the contributions to
$b(\sigma),c(\sigma),b'(\sigma),c'(\sigma)$ coming from edges 
whose duals are
in~$g$.
As the map $\sigma\mapsto \sigma^g$ is injective,
\begin{align*}
\Pr(g\subseteq\bsigmasize^*) &\leq (\beta\lambda^{-1/4})^{b(g)}(\gamma\lambda^{1/4})^{c(g)}\lambda^{(c'(g)-b'(g))/8}\\
&\leq (\beta\lambda^{-1/4})^{b(g)}(\gamma\lambda^{3/8})^{c(g)},
\end{align*}
where we have used the facts that $\lambda\geq1$ and $c'(g)\leq c(g)$.
There are at most $|U|\,3^r$ relevant contours of length~$r$ in total
($|U|$ choices of starting point, 
and at most three different directions at each step), so 
\begin{align*}
&\Pr(\text{$\bsigmasize$  has a simple contour of length at least $h$ starting in $U$})\\
&\qquad\null\leq | U|\sum_{r\geq h}3^r\max(\beta\lambda^{-1/4},\gamma\lambda^{3/8})^r\\
&\qquad\null\leq | U |\,\frac{(3\times0.238)^{h}}{1-3\times0.238} .\\
\end{align*} 
There is a $c>1$ such that $\frac{(3\times0.238)^{h}}{1-3\times0.238}< c^{-h}$ for
all sufficiently large~$h$.
\end{proof}

 \begin{lemma}\label{lem:longsimple_unlikelycond}  
Let $i\in\{0,1\}$.
 For every  $\size>2$, and every simple contour~$g$ of length~$r$,
\begin{align}\label{eq:longsimple_unlikelycond_single}
\Pr(\mbox{$g$ is a contour of $\bsigmasize$ }
\mid   \mbox{$\bsigmasize(B_{i, \size})$ has parity-$0$ ones}
)\leq  \max(\beta \lambda^{-1/4}, \gamma \lambda^{3/8} )^r.
\end{align}
Furthermore, there is a $c>1$ such that, for all   sufficiently large $h$, all $\size>2$,
and all $U\subseteq V(C_\size)$,
the conditional probability that $\bsigmasize$ has a simple contour of length at least~$h$ 
which contains an edge whose dual connects two vertices in $U$,
conditioned on the fact that $\bsigmasize(B_{i, \size})$ has parity-$0$ ones,
is at most $ | U|\,c^{-h}$.
\end{lemma}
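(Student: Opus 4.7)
The plan is to adapt the injection argument of Lemma~\ref{lem:longsimple_unlikely} to the conditional setting, taking care that the map $\sigma\mapsto\sigma^g$ sends configurations satisfying the event $E_i$ (``$\bsigmasize(B_{i,\size})$ has parity-$0$ ones'') into configurations that also satisfy $E_i$. The key observation is that under $E_i$, adjacent vertices in $B_{i,\size}$ carry opposite spins (parity-$0$ and parity-$1$ vertices alternate), so every primal edge internal to $B_{i,\size}$ is bichromatic and its dual is absent from $\bsigmasize^*$. If the simple contour $g$ contains any dual edge whose primal lies within $B_{i,\size}$, then the conditional probability in \eqref{eq:longsimple_unlikelycond_single} is zero and the inequality is trivial.

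Otherwise, no primal edge of $B_{i,\size}$ is removed in the construction that defines the left/right component partition used in the proof of Lemma~\ref{lem:longsimple_unlikely}, so the connected subgraph $B_{i,\size}$ lies entirely within a single connected component of $(V(C_\size),E(C_\size)\setminus\{e : e^* \in g\})$. Since every component is labelled either left or right, we may reverse the orientation of $g$ if necessary (swapping $L(g)$ and $R(g)$, and hence $S$ and $\overline S$) so as to ensure $B_{i,\size}\subseteq\overline S$. The shift map $\sigma\mapsto\sigma^g$ leaves the spins on $\overline S$, and in particular on $B_{i,\size}$, untouched, so $\sigma^g\in E_i$. Restricted to configurations satisfying the hypothesis, the map is an injection into $\{\sigma : \sigma\in E_i\}$, and the weight identity from the proof of Lemma~\ref{lem:longsimple_unlikely} gives
\[
w_{C_\size}(\sigma)\leq \max(\beta\lambda^{-1/4},\gamma\lambda^{3/8})^r\cdot w_{C_\size}(\sigma^g).
\]
Summing yields $\Pr(g\text{ is a contour of }\bsigmasize,\,E_i)\leq \max(\beta\lambda^{-1/4},\gamma\lambda^{3/8})^r\Pr(E_i)$, and \eqref{eq:longsimple_unlikelycond_single} follows by dividing.

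For the second claim, apply a union bound using \eqref{eq:longsimple_unlikelycond_single}. The number of primal edges of $C_\size$ with both endpoints in $U$ is at most $2|U|$ because $C_\size$ has maximum degree $4$. For each such edge $e$ and each length $r\geq h$, there are at most $r\cdot 3^{r-1}$ simple contours of length $r$ in $C^*_\size$ containing $e^*$ (choose the position of $e^*$ along the contour, and at most three directions at each of the remaining $r-1$ steps). The resulting sum
\[
2|U|\sum_{r\geq h}r\cdot 3^{r-1}\max(\beta\lambda^{-1/4},\gamma\lambda^{3/8})^r
\]
is geometrically convergent since $3\max(\beta\lambda^{-1/4},\gamma\lambda^{3/8})\leq 3\cdot 0.238<1$, and is therefore bounded by $|U|c^{-h}$ for some $c>1$ and all sufficiently large~$h$.

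The main obstacle is arranging that the shift injection maps into the conditioning event $E_i$; this is what the preliminary observation accomplishes, by showing that $B_{i,\size}$ is connected in the reduced graph and hence lies on a single side of~$g$. Once this is in place, the remainder of the proof essentially recycles the weight estimate and the combinatorial contour counting from Lemma~\ref{lem:longsimple_unlikely}.
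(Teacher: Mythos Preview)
Your proof is correct and follows essentially the same approach as the paper: for \eqref{eq:longsimple_unlikelycond_single} you ensure the shift map preserves $E_i$ by placing $B_{i,\size}$ entirely on the $\overline S$ side (the paper phrases this as ``choose $S$ so that $S\cap B_{i,\size}=\emptyset$'' rather than reversing the orientation of~$g$, but the effect is identical), and for the second claim you use the same union bound over contours passing through a given dual edge. Your contour count $2|U|\cdot r\cdot 3^{r-1}$ differs in constants from the paper's $4|U|\cdot r\cdot 3^r$, but both are $O(|U|\,r\,3^r)$ and yield the required exponential decay.
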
 
\begin{proof} 
The proof of \eqref{eq:longsimple_unlikelycond_single} is similar to the first half of the proof of Lemma~\ref{lem:longsimple_unlikely}, except that 
we have to take care to choose~$S$ to be on the correct side of the contour~$g$.
Previously, it did not matter whether we formed $S$ from the left or right components,
and we arbitrarily chose the former.  Now we choose $S$ (either taking
all the left or all the right components) in such a way that $S\cap B_{i,\size}=\emptyset$.
This is possible because 
all the vertices in $B_{i,\size}$ are in a single connected 
component (the contour $g$ does not cross any   edges whose endpoints lie in $B_{i,\size}$).
Now define $\sigma^g$ as in the proof of Lemma~\ref{lem:longsimple_unlikely} and continue as before. This establishes \eqref{eq:longsimple_unlikelycond_single}.

For all $1\leq s\leq r$, and all $u\in U$, there are at most $3^r \times 4$
contours $v_1\dots v_r$ for which $u$ is on the left as we go from
$v_{s-1}$ to $v_s$: a choice of initial direction and direction at each step determines the
contour.  Summing over $s$ and $u$, this implies that there are at
most $4|U|r3^r$ length-$r$ contours with an edge whose dual connects vertices of
$U$. By \eqref{eq:longsimple_unlikelycond_single},
\begin{align*}
&\Pr\left(\text{\parbox{3.7in}{\centering{$\bsigmasize$ has a simple contour of length at least~$h$
which contains an edge whose dual connects two vertices in $U$}}} \;\middle\vert\;
\text{$\bsigmasize(B_{i, \size})$ has parity-$0$ ones}\right)\\
&\qquad\null\leq 4|U|\sum_{r\geq h}r 3^r\max(\beta\lambda^{-1/4},\gamma\lambda^{3/8})^r\\
&\qquad\null= 4|U|(3\times 0.238)^h\sum_{t\geq 0}(t+h)(3\times 0.238)^t\\
&\qquad\null= 4|U|(3\times 0.238)^h\,\frac{3\times0.238+h(1-3\times 0.238)}{(1-3\times 0.238)^2}.
\end{align*}
There is a $c>1$ such that $4(3\times 0.238)^h\,\frac{3\times0.238+h(1-3\times 0.238)}{(1-3\times 0.238)^2}< c^{-h}$ for
all sufficiently large~$h$.
\end{proof}

By the {\it upper boundary of $C_\size$} we mean the set of all 
vertices of the form $(x,\size)$ for some~$x$.     
\begin{lemma}\label{lem:newcontour}  
Let $i\in \{0,1\}$.
   There is a $c>1$ such that, for all  sufficiently large $h$ and all 
$\size>h$,
   the probability that
     $\bsigmasize$ has a simple contour 
     that separates  
     the set 
$\{-h+i,\ldots,h+i\} \times \{0,\ldots,h\}$ 
from the upper boundary of $C_\size$, conditioned on the fact that 
$\bsigmasize(B_{i, \size})$ has parity-$0$ ones,
 is at most $c^{-h}$.
 \end{lemma}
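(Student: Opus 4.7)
The plan is to apply Lemma~\ref{lem:longsimple_unlikelycond} after establishing two structural properties of any simple contour $g$ that separates the box $R = \{i-h,\ldots,i+h\} \times \{0,\ldots,h\}$ from the upper boundary of $C_\size$. First, $g$ must contain a dual edge of the form $((i-\tfrac12, H+\tfrac12),(i+\tfrac12, H+\tfrac12))$ for some $H \in \{h,\ldots,\size-1\}$. Second, the length of $g$ is bounded below by a quantity growing with $H$.

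The first property holds because the primal vertical path $(i,h),(i,h+1),\ldots,(i,\size)$ connects $R$ to the upper boundary, so some edge of this path must have its dual in $g$; this edge lies at column~$i$ at height $H + \tfrac12$ for some $H \geq h$.

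For the second property, I plan to classify simple contours topologically. The dual vertices at $y = -\tfrac12$ and $y = \size + \tfrac12$ each have degree~$1$, so no closed contour can touch them, and any simple contour with endpoints on the dual boundary has both endpoints on the same side (bottom or top). I would then argue that a contractible closed loop can enclose neither $R$ (whose bottom row lies on $y=0$ while such a loop must live in $y \geq \tfrac12$) nor the upper boundary (a non-contractible circle on the cylinder), and hence cannot separate them; and that a top-to-top arch only isolates a proper sub-arc of the upper boundary from the rest, leaving $R$ connected to the remaining top vertices via the interior. It follows that $g$ is either a non-contractible closed loop wrapping the cylinder (length at least $2\size$) or a bottom-to-bottom arch with endpoints $(a-\tfrac12,-\tfrac12)$, $(b+\tfrac12,-\tfrac12)$ enclosing~$R$, so $b-a \geq 2h$ and the arch must rise to height $H+\tfrac12$ at column~$i$; its length is then at least $2(H+1) + (b-a+1) \geq 2H + 2h + 3$.

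Combining, for each $H \in \{h,\ldots,\size-1\}$ I would apply Lemma~\ref{lem:longsimple_unlikelycond} with $U = \{(i-\tfrac12, H+\tfrac12),(i+\tfrac12, H+\tfrac12)\}$ (so $|U| = 2$) and length bound $\min(2\size,\, 2H+2h+3)$, then sum in~$H$. The arch contributions form a geometric series bounded by $O(c^{-4h})$, and the non-contractible-loop contributions total at most $O(h)\,c^{-2\size} \leq O(h)\,c^{-2(h+1)}$ since $\size \geq h+1$. The overall bound is $O(h)\,c^{-2h}$, which is at most $(c')^{-h}$ for any $c' < c^2$ once $h$ is large enough. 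The main obstacle will be the topological case analysis: carefully ruling out contractible closed loops and top-to-top arches as separators requires reasoning about the ``jagged'' top and bottom boundaries of the dual graph and about the non-contractibility of the upper primal boundary on the cylinder.
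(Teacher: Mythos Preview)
Your approach is correct in outline and would yield the lemma, but it is considerably more elaborate than the paper's argument, because you do not exploit the conditioning as fully as you could. The paper's first observation is that, once $\bsigmasize(B_{i,\size})$ has parity-$0$ ones, \emph{no} contour of $\bsigmasize$ can wrap around the cylinder or touch the top dual row: the keyhole $B_{i,\size}$ contains both the whole top row and a full vertical column, and along these all primal edges are bichromatic. This single remark disposes of your non-contractible loops, your top-to-top arches, and (implicitly) any wrapping bottom-to-bottom arches, so the entire topological case analysis you flag as ``the main obstacle'' evaporates. What remains is a non-wrapping bottom-to-bottom arch enclosing $R$, which the paper parametrises by its right endpoint (lying in $\{(h+i+x-\tfrac12,-\tfrac12)\mid 1\le x\le r\}$ for a contour of length $r+2$), counts as at most $r\cdot 3^r$ possibilities, and bounds using the single-contour estimate \eqref{eq:longsimple_unlikelycond_single} directly, summing from $r=h-2$.

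Your route---fix the height $H\ge h$ at which the contour crosses column~$i$, derive a length lower bound $2H+2h+3$ for arches (or $2\size$ for wrapping curves), apply the second part of Lemma~\ref{lem:longsimple_unlikelycond}, and sum over~$H$---also works and even gives a nominally sharper exponent. Two small corrections: the set $U$ in Lemma~\ref{lem:longsimple_unlikelycond} must consist of \emph{primal} vertices, so you want $U=\{(i,H),(i,H+1)\}$ rather than the dual vertices you wrote; and your dichotomy ``non-contractible closed loop or bottom-to-bottom arch enclosing $R$'' omits the case of a bottom-to-bottom arch that itself wraps once around the cylinder, though such an arch also has length at least $2\size$ and so is absorbed into your wrap estimate. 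Both issues are harmless, but note that they, together with the delicate loop/arch separation arguments, simply disappear once you use the boundary condition up front as the paper does.
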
  
 
\begin{proof}
Note that the separating contour cannot wrap around, owing to the 
boundary condition that $\bsigmasize(B_{i, \size})$ has parity-$0$ ones.
If the separating contour has length~$r+2$ then its right-endpoint is
in the  set
$\{(h+i+x-\frac12,-\frac12) \mid 1\leq x \leq r\}$.
There is a unique choice for the edge incident to each endpoint. Thus, there
are at most $r3^r$ possible contours. By Lemma~\ref{lem:longsimple_unlikelycond},
the probability that $\bsigmasize$ has such a simple contour,
conditioned on the fact that $\bsigmasize(B_{i,\size})$ has parity-0 ones,
is at most 
$$\sum_{r=h-2}^\infty r3^r \max(\beta \lambda^{-1/4}, \gamma \lambda^{3/8})^{r+2}.$$
Thus, the probability is at most
$$\max(\beta\lambda^{-1/4},\gamma\lambda^{3/8})^2\sum_{r=h-2}^\infty
r 3^r\max(\beta\lambda^{-1/4},\gamma\lambda^{3/8})^r,$$
which, as in the proof of Lemma~\ref{lem:longsimple_unlikelycond}, is exponentially small in $h$.
\end{proof}
  
  \begin{lemma}\label{lem:longcross_unlikely} 
There is a $c>1$ such that,  
for all sufficiently large~$\size$,
$$\Pr(\mbox{$\bsigmasize$ has a  cross contour  })\leq c^{-\size}.$$
\end{lemma}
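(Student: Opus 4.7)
The plan is to adapt the Peierls argument from the proof of Lemma~\ref{lem:longsimple_unlikely}. Since a cross contour must traverse the gadget from the dual boundary at $y=-\tfrac{1}{2}$ to that at $y=\size+\tfrac{1}{2}$, its length is at least $\size+1$. For counting, there are at most $2\size$ choices for the starting dual vertex on the lower boundary and at most three non-backtracking choices at each of the remaining steps, so the number of cross contours of length~$r$ is at most $2\size\cdot 3^{r-1}$.

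The core step is to show that for each cross contour~$g$ of length~$r$,
\begin{equation*}
\Pr(g\text{ is a contour of }\bsigmasize)\leq \max(\beta\lambda^{-1/4},\gamma\lambda^{3/8})^{r}\leq (0.238)^{r},
\end{equation*}
where the second inequality uses \eqref{eq:spincond}. I would establish this by constructing an injective map $\sigma\mapsto\sigma^g$ in analogy with the shift map in Lemma~\ref{lem:longsimple_unlikely}. The key technical obstacle is that, unlike in the simple-contour case, the graph $(V(C_\size),E(C_\size)\setminus\{e^*:e\in g\})$ is connected for a cross contour (a single arc from one boundary of the cylinder to the other does not disconnect it), so the natural ``union of left components'' definition of~$S$ is unavailable. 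My proposal is to take $S=V(C_\size)\setminus R(g)$ and $S'=\{(x,y)\in S : (x{-}1,y)\in R(g)\}$, then set $\sigma^g$ by the same rule as before: $\sigma^g(\overline{S})=\sigma(\overline{S})$, $\sigma^g(S')$ has parity-$s$ ones (where $s$ is the parity of $\sigma(R(g))$ supplied by Lemma~\ref{lem:contourboundary}), and $\sigma^g(x,y)=\sigma(x-1,y)$ for $(x,y)\in S\setminus S'$. Injectivity follows because $\sigma$ can be recovered on $L(g)$ from the parity condition of Lemma~\ref{lem:contourboundary} and on the rest of $S$ by unshifting, as in the proof of Lemma~\ref{lem:longsimple_unlikely}. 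The weight ratio $w_{C_\size}(\sigma)/w_{C_\size}(\sigma^g)$ factorises in essentially the same way as before, contributing at most $\max(\beta\lambda^{-1/4},\gamma\lambda^{3/8})$ per edge of~$g$.

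Combining the counting bound with the per-contour estimate via a union bound yields
\begin{equation*}
\Pr(\bsigmasize\text{ has a cross contour})\leq \sum_{r\geq\size+1}2\size\cdot 3^{r-1}\,(0.238)^{r}=\frac{2\size\cdot 0.238}{1-0.714}\cdot (0.714)^{\size},
\end{equation*}
which is bounded above by $c^{-\size}$ for any $c\in(1,1/0.714)$ and all sufficiently large~$\size$.

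The principal obstacle I expect is verifying that the weight ratio factorises cleanly when $g$ has a complicated geometry on the cylinder (for instance, if $g$ winds so that $R(g)$ is not simply one column and the ``seam'' set $S'$ interacts with $L(g)$). This should be tractable by the same edge-by-edge bookkeeping used for simple contours, relying on the parity structure supplied by Lemma~\ref{lem:contourboundary} to control what happens at the seam.
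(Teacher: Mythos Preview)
Your shift map does not give a clean weight factorisation, and this is already visible in the simplest case. Take $g$ to be the straight vertical cross contour along $x=\tfrac12$, so that $L(g)$ is the column $x=0$ and $R(g)$ is the column $x=1$. Then your $S'$ is the column $x=2$, and $L(g)\subset S\setminus S'$ (since the left neighbour of column~$0$ is column~$2\size-1$, not $R(g)$). Your rule therefore sets $\sigma^g(0,y)=\sigma(2\size-1,y)$. Hence the edge $(0,y)\text{--}(1,y)$, which is dual to~$g$ and monochromatic in~$\sigma$, has in~$\sigma^g$ the uncontrolled pair of spins $\sigma(2\size-1,y),\,\sigma(1,y)$: the contour is not removed and the per-edge bound does not follow. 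The underlying reason is that your $\overline S=R(g)$ borders $S$ on \emph{two} sides --- through the edges dual to~$g$, and through the edges to its other neighbours --- and a rightward shift of $S$ can control at most one of these seams. This is not merely a bookkeeping issue: a single cross contour cuts the cylinder only once and does not bound any region, so no choice of $S$ with boundary exactly the edges dual to~$g$ exists.

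The paper's remedy is to observe first that cross contours of $\bsigmasize$ must come in pairs: if $g$ were the only one, a path in $C_\size$ from $L(g)$ around the cylinder to $R(g)$ would avoid all contours and hence carry constant-parity ones, contradicting Lemma~\ref{lem:contourboundary}. Having a second cross contour~$g'$ in hand, one orients $g$ and~$g'$ in opposite senses, removes the edges dual to $g\cup g'$ (which \emph{does} disconnect the cylinder), and takes $S$ to be the union of the components left of either $g$ or~$g'$. The shift and weight calculation from Lemma~\ref{lem:longsimple_unlikely} then go through unchanged, and your length and starting-point counts are the ones used.
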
 
\begin{proof}
Let $g$ be a cross contour.  There must be at least one other 
cross contour~$g'$.  For otherwise there would be a path $p$ in $C_{\size}$ from $L(g)$ to $R(g)$ 
such that $\bsigmasize(V(p))$ has parity-$0$ ones or parity-$1$ ones, which would violate parity.
Orient $g$ and~$g'$ in opposite senses (one away from $y=-\frac12$ and one towards).
Consider the connected components of the graph
$\big(V(C_\size),E(C_\size)\setminus\{e^* \mid e \in g\cup g'\}\big)$,
and let $S$ be the union of all connected components that are left of either $g$ or~$g'$.
Now proceed as in the proof of Lemma~\ref{lem:longsimple_unlikely},
using the fact that a cross contour has length at least $\size$ and the
set of possible starting points has size~$2\size$.
\end{proof}

\begin{lemma}\label{lem:longcross_unlikelycond}
Let $i\in \{0,1\}$.
Fix $\size\geq 1$.
Conditioned on $\bsigmasize(B_{i,\size})$ having parity-$s$ ones
(for any $s\in\{0,1\}$),
$\bsigmasize$ has no cross contour.
\end{lemma}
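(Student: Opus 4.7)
The plan is to show that the conditioning event forces an alternating pattern along the upper boundary of $C_\size$, which in turn blocks any trail of dual edges from ever reaching the top side of the dual gadget. Since every cross contour by definition has an endpoint at $y = \size + \tfrac 1 2$, no cross contour can exist.

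First I would verify that $B_{i,\size}$ contains the entire upper boundary of $C_\size$, namely the set $\{(x,\size) \mid x \in \Columns\}$. This follows directly from the definition of $B_{i,m}$ with $m = \size$: the vertex sets $\{(i-j,\size)\mid 0 \leq j \leq \size\}$ and $\{(i+j,\size) \mid 0\leq j \leq \size\}$ together range over all residues modulo~$2\size$. Consequently, when $\bsigmasize(B_{i,\size})$ has parity-$s$ ones, every vertex on the top boundary is assigned a spin dictated by its parity. Adjacent vertices $(x,\size)$ and $(x+1,\size)$ have opposite parity, so they receive opposite spins. Thus no edge of $C_\size$ of the form $((x,\size),(x+1,\size))$ is monochromatic, and accordingly none of the dual edges $((x+\tfrac12,\size-\tfrac12),(x+\tfrac12,\size+\tfrac12))$ belongs to $\bsigmasize^*$.

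Next I would exploit the structure of the dual gadget near the top. By the definition of $C^*_\size$, horizontal dual edges at height $y = \size + \tfrac12$ are excluded: the only dual edges incident to a vertex $(x+\tfrac12,\size+\tfrac12)$ are the vertical edges connecting it to $(x+\tfrac12,\size-\tfrac12)$. Suppose for contradiction that $\bsigmasize$ admitted a cross contour $g = v_1,\ldots,v_j$. Then one of $v_1, v_j$ has $y$-coordinate $\size + \tfrac12$; say it is $v_j = (x+\tfrac12,\size+\tfrac12)$. The edge $(v_{j-1},v_j)$ of $g$ must then be the vertical dual edge to $v_{j-1} = (x+\tfrac12,\size-\tfrac12)$. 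Its primal counterpart is the top-boundary edge $((x,\size),(x+1,\size))$, which by the previous paragraph is not monochromatic, so this dual edge cannot lie in $\bsigmasize^*$. This contradicts the requirement (from Definition~\ref{def:contours}) that every edge of $g$ lie in $\bsigmasize^*$. Hence no cross contour exists under the conditioning.

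No step looks technically hard; the main subtlety is simply to unpack the definitions (of $B_{i,\size}$, of dual edges at the top boundary, and of cross contours) carefully enough to see that the alternating pattern on the top boundary is an impenetrable barrier.
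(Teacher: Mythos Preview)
Your proof is correct and takes essentially the same approach as the paper's: the paper's one-line proof simply observes that a cross contour would have to cross a side edge in $B_{i,\size}$, which is impossible, and your argument is a careful unpacking of exactly this observation (focusing on the top boundary, which is entirely contained in $B_{i,\size}$).
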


 \begin{proof}
A cross contour would have to cross a side edge in $B_{i,\size}$, which is impossible.
\end{proof}

\begin{lemma}\label{lem:peqgreater}
$\peq > \pneq$.
\end{lemma}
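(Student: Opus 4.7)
My plan is to combine a shift symmetry with an FKG-style monotonicity argument in transformed coordinates that turn the antiferromagnetic interaction into a ferromagnetic (log-supermodular) one.

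First, I would observe that the cyclic shift $T(x,y)=(x+1,y)$ is a graph automorphism of $C_\size$, so the Gibbs measure is invariant under $T$. Since $T$ swaps parities and maps $(0,0)$ and $B_{0,\size}$ to $(1,0)$ and $B_{1,\size}$ respectively, the event ``$\bsigmasize(B_{0,\size})$ has parity-$0$ ones'' is pushed forward to ``$\bsigmasize(B_{1,\size})$ has parity-$1$ ones''. Applying this in the definition of $\peq$ gives
$$\peq=\limsup_{\size\to\infty}\Pr\bigl(\bsigmasize(1,0)=1 \bigm| \bsigmasize(B_{1,\size}) \text{ has parity-}1 \text{ ones}\bigr),$$
so $\peq$ and $\pneq$ are now conditional probabilities of the \emph{same} event $\{\bsigmasize(1,0)=1\}$, differing only in whether the phase imposed on $B_{1,\size}$ is parity-$1$ or parity-$0$.

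Next, I would switch to transformed coordinates $\tilde\sigma(v) = \sigma(v)\oplus\mathrm{parity}(v)\oplus 1$, under which the parity-$0$ phase becomes the all-$0$ configuration and the parity-$1$ phase becomes the all-$1$ configuration. A direct calculation shows that each edge $(u,v)$ with $u$ of parity~$0$ and $v$ of parity~$1$ contributes a pair factor $W_e(\tilde\sigma(u),\tilde\sigma(v))$ with $W_e(0,0)=W_e(1,1)=1$, $W_e(0,1)=\gamma$, and $W_e(1,0)=\beta$, so that
$$W_e(0,0)\, W_e(1,1)\;=\;1\;>\;\beta\gamma\;=\;W_e(0,1)\, W_e(1,0),$$
using $\beta\gamma<1$ from~\eqref{eq:spincond}; the single-vertex factors are $\lambda_v^{\tilde\sigma(v)}$ with a parity-dependent $\lambda_v$. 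The transformed Gibbs measure is therefore strictly log-supermodular and hence satisfies the FKG lattice condition. The event $\{\bsigmasize(1,0)=1\}$ becomes the increasing event $\{\tilde\sigma(1,0)=1\}$, while the two boundary conditions become the pointwise-maximal $\{\tilde\sigma(B_{1,\size})\equiv 1\}$ and pointwise-minimal $\{\tilde\sigma(B_{1,\size})\equiv 0\}$. Holley's stochastic-domination theorem then yields $\peq\geq\pneq$.

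The main remaining obstacle, and what I expect to be the hardest step, is upgrading this to a \emph{strict} inequality. My plan is to exploit the strict log-supermodularity ($\beta\gamma<1$) on every edge via a Holley-coupling chain along a path from $B_{1,\size}$ to $(1,0)$ in the connected graph $C_\size$: flipping boundary spins one at a time from $0$ to $1$ strictly increases the conditional probability of $\{\tilde\sigma(1,0)=1\}$ at some step, since each edge interaction is strictly supermodular. The delicate hard-core case $\gamma=0$, where some pair weights vanish, can be handled either by a continuity argument letting $\gamma\downarrow 0$, or by a direct contour comparison using Lemmas~\ref{lem:longsimple_unlikelycond} and~\ref{lem:newcontour} to show that $\pneq$ falls strictly below $\peq$ via explicit contour probabilities.
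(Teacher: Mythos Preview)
Your shift-symmetry step and the FKG/Holley comparison in the transformed coordinates are correct and, in fact, this log-supermodularity is exactly the mechanism behind the paper's Lemma~\ref{lem:vertex_monotonicity}. So you do obtain the non-strict inequality $\peq\geq\pneq$ cleanly.

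The genuine gap is in the strictness step, and it is not merely a matter of care with $\gamma=0$. Your Holley-chain argument (flipping boundary spins one at a time) establishes, for each fixed~$\size$, a strict inequality
\[
\Pr(\bsigmasize(1,0)=1\mid\text{parity-1 ones on }B_{1,\size})\;>\;\Pr(\bsigmasize(1,0)=1\mid\text{parity-0 ones on }B_{1,\size}),
\]
but $\peq$ and $\pneq$ are defined as $\limsup_{\size\to\infty}$ of these quantities, and a strict inequality for every~$\size$ does not survive the limit without a gap that is \emph{uniform} in~$\size$. Nothing in the FKG/Holley machinery by itself supplies such a uniform lower bound on the difference; one would need a quantitative correlation inequality, which in this setting essentially amounts to a Peierls estimate anyway. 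Your option~(c) is thus not a fallback for an edge case but the actual missing ingredient.

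The paper's proof bypasses FKG entirely and goes straight to a Peierls argument: conditioning on parity-$0$ ones on $B_{0,\size}$, if $\bsigmasize(0,0)=0$ then some simple contour must separate $(0,0)$ from $(0,\size)$; summing over all such contours via Lemma~\ref{lem:longsimple_unlikelycond} gives a bound strictly below $\tfrac12$, uniformly in~$\size$, so $\peq>\tfrac12$. A symmetric argument gives $\pneq<\tfrac12$, hence $\pneq<\tfrac12<\peq$. This is more elementary than your route and yields the quantitative separation needed later in the paper (e.g.\ in Lemma~\ref{lem:boundary_terminals}).
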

\begin{proof}  
Fix $\size>2$.
Suppose that $\bsigmasize(B_{0,\size})$ has parity-$0$ ones.
If $\bsigmasize(0,0)=0$ then there is a simple contour of~$\bsigmasize$ that separates 
$(0,\size)$ from $(0,0)$.  (Note that, by Lemma~\ref{lem:longcross_unlikelycond},
cross contours cannot separate these two vertices.)
If the separating contour has length~$r+2$ then its right-endpoint is
in the range $( \frac12,-\frac12),\ldots,(r-\frac12,-\frac12)$.
There is a unique choice for the edge incident to each endpoint. Thus, there
are at most $r3^r$ possible contours. By Lemma~\ref{lem:longsimple_unlikelycond},
the probability that $\bsigmasize$ has such a simple contour,
conditioned on the fact that $\bsigmasize(B_{0,\size})$ has parity-0 ones,
is at most 
$$\sum_{r=1}^\infty r3^r \max(\beta \lambda^{-1/4}, \gamma \lambda^{3/8})^{r+2}.$$
Thus, 
\begin{align*}
 \Pr(\bsigmasize(0,0)=0 \mid 
\mbox{$\bsigmasize(B_{0,\size})$ has parity-0 ones}) 
&\leq\max(\beta\lambda^{-1/4},\gamma\lambda^{3/8})^2\sum_{r=1}^\infty
r 3^r\max(\beta\lambda^{-1/4},\gamma\lambda^{3/8})^r\\
&\leq(0.238)^2\frac{3\times 0.238}{(1-3\times0.238)^2}\\
&< 1/2. 
\end{align*}
Thus, $\peq>\frac12$.

Similarly, suppose that $\bsigmasize(B_{1,\size})$ has parity-0 ones.
If there is no simple contour of~$\bsigmasize$ that separates 
$(1,\size)$ from $(1,0)$, then $\bsigmasize(1,0)=0$.
We already saw that the probability that no such contour exists is
greater than $\frac12$.
Thus, 
$$ 
\Pr(\bsigmasize(1,0)=0 \mid 
\mbox{$\bsigmasize(B_{1,\size})$ has parity-0 ones}) >1/2.$$ 
So $\pneq<\frac12$.
Putting the two inequalities together, we have $\pneq<\frac12<\peq$.
\end{proof}

\subsection{In the absence of long contours, the spins of the terminals are nearly independent}
 \label{sec:nearind}
The main goal of this section is to establish Proposition~\ref{prop:gadget}.
Although the details are technical, the underlying idea is
straightforward.  For the range of parameters $(\beta,\gamma,\lambda)$ we are considering,
we expect a typical configuration~$\sigma$ to have a well-defined phase, i.e.,
either a substantial majority of vertices will 
have parity~$0$ ones, or a substantial majority will have parity~$1$ ones. 
This is the ``phase'' of the gadget.  Around each terminal we locate a connected
region of $V(C_\size)$, a ``$d$-boundary'', that is not too close to the terminal, 
that has ones of constant parity, and that separates the terminal from the rest of the gadget.  
Since the $d$-boundaries are far from
each terminal, they are large and  the parities of their ones will, with overwhelming probability,
be consistent among themselves.
Further, by re-randomising $\sigma$ inside the $d$-boundaries, we can appreciate that 
the spins at the terminals are conditionally nearly independent given the phase; 
moreover, since the $d$-boundaries are far from their respective terminals,
the distributions of spins at the terminals are nearly identical.
We now make these ideas precise.

The \emph{$*$-distance} between two points $(x,y)$ and $(x',y')$ in~$V(C_\nu)$
is $\max(|x-x'|,|y-y'|)$ where $|x-x'|$ is the minimum non-negative
integer such that $x=x'+|x-x'|$ modulo $2\nu$ or $x'=x+|x-x'|$ modulo
$2\nu$.   
The \emph{$*$-distance} between two points $(x,y)$ and $(x',y')$ in~$V(C^*_\nu)$
is defined similarly.
Let $U_{x,h}$ be the set of vertices of~$C_\size$ whose $*$-distance from $(x,0)$ 
is at most~$h$.
Vertices in $V(C_\size)$ (or $V(C_\size^*)$)
 are \emph{$*$-adjacent}
if the $*$-distance between them is~$1$.
A~$*$-path on $V(C_\size)$ 
is a sequence $v_1,\ldots,v_h$ 
of vertices in $V(C_\nu)$ such that, for each $j \in \{1,\ldots,h-1\}$,
the vertices $v_j$ and $v_{j+1}$ are $*$-adjacent. 
A~$*$-path on $V(C_\size^*)$ is defined similarly.

\begin{definition}\label{def:boundaries}
Suppose  $\size$ and $h$ are positive integers. 
An \emph{$h$-boundary} of a vertex $(x,0)$
is a set of vertices $B\subseteq V(C_\size)$ such that 
the following are true.
 \begin{enumerate}[label=(\roman{*})]
\item $(x,0)$ is not connected to $(x,\size)$ in 
the graph $C_\size \setminus B$, and
\item $B$ does not intersect $U_{x,h/4}$, and
\item $B$ is a subset of $U_{x,h/2}$.
\item The subgraph of $C_\size$ induced by $B$ is connected.
\end{enumerate}
See Figure~\ref{fig:three}.
\end{definition}

\begin{figure}[!h]
\begin{center}
\begin{tikzpicture}[scale = 0.9]
\foreach \x in {0,...,19} \draw
  (canvas polar cs:angle=-90+18*\x,radius=2cm) --
  (canvas polar cs:angle=-90+18*\x,radius=4cm);
\foreach \y in {0,...,10} \draw (0,0) circle (4cm-0.2*\y cm);
\foreach \x in {0,...,19} \foreach \y in {0,...,10}
    \fill[gray] (canvas polar cs:angle=-90+18*\x,radius=4cm-0.2*\y cm) circle (0.04cm);
\foreach \x in {16} {
    \fill
(canvas polar cs:angle=-90+18*\x-2*18,radius=4cm-0*0.2cm) circle (0.07cm)
(canvas polar cs:angle=-90+18*\x-2*18,radius=4cm-1*0.2cm) circle (0.07cm)
(canvas polar cs:angle=-90+18*\x-2*18,radius=4cm-2*0.2cm) circle (0.07cm)
(canvas polar cs:angle=-90+18*\x-1*18,radius=4cm-2*0.2cm) circle (0.07cm)
(canvas polar cs:angle=-90+18*\x-0*18,radius=4cm-2*0.2cm) circle (0.07cm)
(canvas polar cs:angle=-90+18*\x+1*18,radius=4cm-2*0.2cm) circle (0.07cm)
(canvas polar cs:angle=-90+18*\x+2*18,radius=4cm-2*0.2cm) circle (0.07cm)
(canvas polar cs:angle=-90+18*\x+2*18,radius=4cm-2*0.2cm) circle (0.07cm)
(canvas polar cs:angle=-90+18*\x+2*18,radius=4cm-1*0.2cm) circle (0.07cm)
(canvas polar cs:angle=-90+18*\x+2*18,radius=4cm-0*0.2cm) circle (0.07cm);
}
\foreach \x in {0,5,10,15}
  \draw (canvas polar cs:angle=-90+18*\x,radius=4.7cm) node {(\x,0)};
\end{tikzpicture}
\end{center}
\caption{Example $h$-boundary of vertex $(16,0)$ with $ \nu=20$ and $h=5$.}
\label{fig:three}
\end{figure}
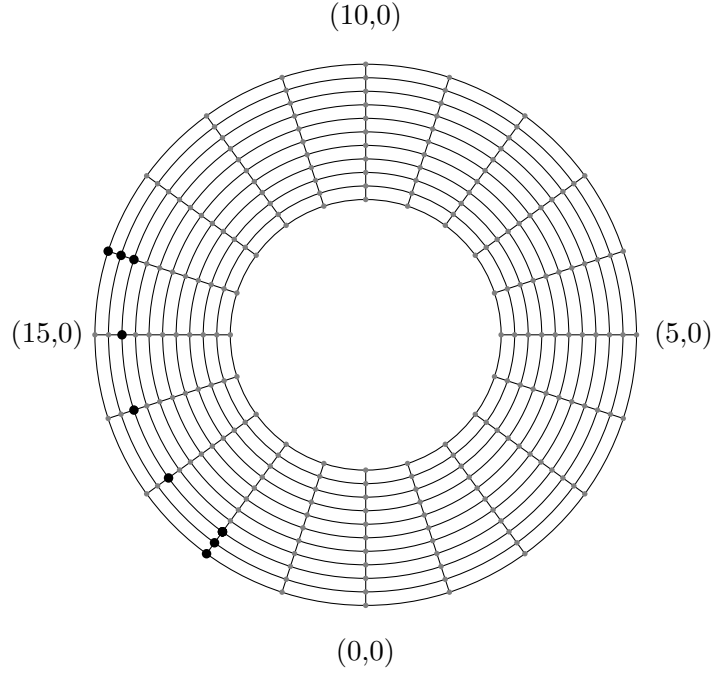

Here is the relevant fact about $h$-boundaries. If $B$ is an $h$-boundary of a vertex~$(x,0)$ and
$\sigma(B)$ has parity-$s$ ones then there is no contour of~$\sigma$ which contains an
edge whose dual connects two vertices in~$B$.

If $B$ is an $h$-boundary of vertex $(x,0)$
and $B'$ is an $h'$-boundary of $(x,0)$, then
we say that $B$ is inside of $B'$ if
every path in $C_\size$ from~$B'$ to $(x,0)$ passes through $B$.
Suppose that~$B$ and~$B'$ are $h$-boundaries of $(x,0)$
and, for some $s\in\{0,1\}$,
$\sigma(B)$ has parity-$s$ ones and $\sigma(B')$ has parity-$(1\oplus s)$ ones.
Then $B\cap B'=\emptyset$, so exactly one of the following occurs.
\begin{itemize}
\item $B$ is inside of $B'$, or
\item $B'$ is inside of $B$.
\end{itemize}

\begin{definition}\label{def:phase}
Suppose that $s\in \{0,1\}$ and that $k$ and $d$ are positive integers. Let~$\nu=2kd$. Suppose
that $\sigma$ is a configuration
$\sigma:\configs{V(C_\size)}$.
We say that $\sigma$ has 
phase~$s$ if the following holds for every terminal~$(x,0)$.
\begin{itemize}
\item $(x,0)$ has a $d$-boundary $B$ for which $\sigma(B)$ has parity-$s$ ones.
\item For every $d$-boundary $B'$ of $(x,0)$ for which $\sigma(B')$ has parity-$(1{\oplus} s)$ ones,
$B'$ is inside of $B$.
\end{itemize}
\end{definition} 

Note that a configuration~$\sigma$ can have exactly one phase (phase~$0$ or phase~$1$)
or it can have no phase.
Suppose that $\size \geq 1$ and that the configuration
$\sigma:\configs{V(C_\size)}$ has phase~$s$.
Say that a $d$-boundary $B$ of  a vertex $(x,0)$ is \emph{consistent}
if   $\sigma(B)$ has  parity-$s$ ones.
From the consistent $d$-boundaries, we want to select a canonical one, 
that is in some precise sense ``outermost'' and also ``minimal''.
For each  terminal~$(x,0)$, let
$\calBhat_x(\sigma)$ be the union of
all $d$-boundaries  of $(x,0)$ which are consistent.
Observe that $B=\calBhat_x(\sigma)$ satisfies the first three bullet points 
in Definition~\ref{def:boundaries}, but not the final one, as the subgraph $C_\size[B]$
of $C_\size$ induced by $B$ may not be connected.  Suppose that $C_\size[B]$ 
has $j$ connected components.  Partition 
$\calBhat_x(\sigma)=\calBhat^1_x(\sigma)\cup\cdots\cup\calBhat^j(\sigma)$ so that
$C_\size[\calBhat^1_x(\sigma)], \ldots, C_\size[\calBhat^j_x(\sigma)]$ is an enumeration 
of these $j$~connected components.  Each set $\calBhat^i_x(\sigma)$ is itself a $d$-boundary.
To see this, consider any vertex $v\in\calBhat^i_x(\sigma)$.  From the construction
of $\calBhat_x(\sigma)$, this vertex is contained in some $d$-boundary, which in turn is contained
in $\calBhat_x^i(\sigma)$.  So $\calBhat_x^i(\sigma)$ satisfies the first three bullet points
in Definition~\ref{def:boundaries}, in addition to inducing a connected graph. 
 
From the first bullet point of Definition~\ref{def:boundaries} it follows that 
$\calBhat^1_x(\sigma),\ldots,\calBhat^j_x(\sigma)$ are nested;  suppose the numbering
indicates the level of nesting, with $\calBhat^1_x(\sigma)$ being the outermost.
We now want to identify a minimal $d$-boundary within~$\calBhat^1_x(\sigma)$.  
Let 
$$\Ext\calBhat_x^1(\sigma)=\{v\in V(C_\size)\mid\text{there is a $*$-path 
from $v$ to $(x,\size)$ in $V(C_\size)\setminus\calBhat_x^1(\sigma)$}\}.$$
denote the set of vertices lying in the ``exterior'' of $\calBhat^1_x(\sigma)$.   Finally define
$$\calB_x(\sigma) = \{v \in \calBhat^1_x(\sigma)\mid 
\text{$v$ is $*$-adjacent to some vertex in $\Ext\calBhat^1_x(\sigma)$}\}.$$
Note that $\calB_x(\sigma)$ is a $d$-boundary of $(x,0)$  which is consistent.
(Informally, $\calB_x(\sigma)$ is the outermost such $d$-boundary.)
To see this, observe that any path in the graph~$C_\size$ from $(x,0)$ to
$(x,\size)$ has a last vertex in the set $\calBhat^1_x(\sigma)$, and this vertex must be in $\calB_x(\sigma)$;
this deals with the first bullet point in Definition~\ref{def:boundaries}.  
Consider the trail in the dual graph $C_\size^*$
separating $\calB_x(\sigma)$ and $\Ext\calBhat^1_x(\sigma)$;  the trail in the primal graph
that shadows it at $*$-distance~$\frac12$ inside takes in all the vertices of $\calB_x(\sigma)$
and establishes connectivity of $C_\size[\calB_x(\sigma)]$.  This deals with the final bullet
mark, and the remaining two are immediate.

The $d$-boundary $\calB_x(\sigma)$ is our desired canonical $d$-boundary, and it has the 
following important property.  If $\sigma'$ is a configuration that agrees with~$\sigma$
on $\calB_x(\sigma)\cup \Ext\calBhat_x^1(\sigma)$, then $\calB_x(\sigma')=\calB_x(\sigma)$.
The reason is as follows.  The set $B=\calB_x(\sigma)$ is a $d$-boundary of $(x,0)$ which is 
consistent with respect to~$\sigma'$, i.e., $\sigma'(B)$ has parity-$s$ ones.
It therefore gets incorporated into $\calBhat_x(\sigma')\supseteq\calB_x(\sigma)$
and hence into $\calBhat_x^1(\sigma')\supseteq\calB_x(\sigma)$.  So $\Ext\calB_x^1(\sigma')
=\Ext\calB_x^1(\sigma)$ and 
$\calB_x(\sigma')=\calB_x(\sigma)$.  
This fact becomes significant when we come to consider events supported on spins in the 
interior~$U=V(C_\size)\setminus(B\cup\Ext B)$ 
of some $d$-boundary~$B$.  Specifically, conditioning on the event 
$\calB_x(\bsigmasize)=B$ (and on the phase~$s$ of~$\bsigmasize$) is equivalent to  
selecting $\bsigma_\size(U)$ according to the Gibbs distribution,
with the boundary condition ``$\bsigma_\size(B)$ has parity-$s$ ones''.
We refer to this property as {\it canonicity\/} of~$\calB_x(\sigma)$.
  
Before proceeding we need some definitions and observations concerning connected 
subgraphs~$K$ of the graph $(V(C^*_\size),\sigma^*)$.
The \emph{$*$-diameter of $K$} is the maximum, over pairs of vertices in~$K$,
of the $*$-distance between those vertices.
We say that~$K$ \emph{reaches the lower boundary of~$C_\size$\/}
if it contains a vertex of the form $(x,-\frac12)$ for some~$x$.
We say that it reaches the upper boundary of~$C_\size$
if it  contains a vertex of the 
form  $(x,\size+\frac12)$. We say that
it is a \emph{cross subgraph} if it reaches both boundaries.
We say that   $K$ \emph{wraps around\/} if it contains the image of some path from
$(0,y)$ to $(2\size,y)$ in $\mathbb{Z}\times\{0,\dots,\size\}$, under
the quotient map to $V(C_\size)$.   
We say that~$K$ is {\it local\/} if 
it is not a cross subgraph and does not wrap around.
We say that $K$  \emph{intersects} 
$U_{x,h}$ if 
some edge $e^*$ in~$K$ is dual to an edge $e\in E(C_\size)$ 
with both endpoints in~$U_{x,h}$.  
A contour is a connected subgraph of $(V(C^*_\size),\sigma^*)$
so all of the above definitions apply to contours.
We refer to the connected components of $(V(C^*_\size),\sigma^*)$ as $\sigma^*$-components.

\begin{lemma}\label{lem:contourscomponents}
If $\sigma$ has only local contours then $\sigma^*$ has only local components.
\end{lemma}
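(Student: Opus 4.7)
I would prove the contrapositive: given a non-local component~$K$ of $(V(C^*_\size),\sigma^*)$, I will produce a non-local contour of~$\sigma$.

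The first ingredient is a parity observation on degrees in $\sigma^*$. The four primal vertices surrounding an interior dual vertex form a $2\times 2$ square, and as one walks around this square the spin changes an even number of times, so an even number of its primal edges are monochromatic. Hence every interior vertex has degree $0$, $2$, or~$4$ in $\sigma^*$, while boundary dual vertices (those with $y\in\{-\tfrac12,\size+\tfrac12\}$) have degree at most~$1$ already in~$C^*_\size$.

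The second ingredient is a pairing/resolution trick. At every degree-$4$ vertex in~$K$, choose one of the two \emph{turning pairings} of the four incident edges (two perpendicular couples, either $\{N,E\}\cup\{S,W\}$ or $\{N,W\}\cup\{S,E\}$), equivalently splitting the vertex into two virtual degree-$2$ vertices. After resolving every such vertex, $K$ becomes a disjoint union of arcs (whose endpoints are the odd-degree, i.e.\ boundary, vertices of~$K$) and circles. Each piece is a contour of~$\sigma$: the turning condition holds at every former degree-$4$ vertex by construction, no edge is reused, and the pieces partition the edges of~$K$ so that $K=\sum_i T_i$ as a $1$-chain.

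It remains to choose the pairings so that at least one $T_i$ is non-local. When~$K$ is a cross subgraph, I would take a path~$P$ in~$K$ from a bottom-boundary vertex to a top-boundary vertex and choose the pairings along~$P$ to keep~$P$'s two edges in the same pair wherever they already form a turning pair; at a degree-$4$ vertex where~$P$ goes straight, I reroute~$P$ through the perpendicular pair and continue via the even-degree/connectedness structure of~$K$, iterating until a path is produced that turns at every degree-$4$ vertex. When~$K$ wraps but is not cross, a walk on~$K$ that avoids reusing edges and turns at every degree-$4$ vertex cannot terminate at an interior vertex (by even-degree), so it closes into a circle; choosing the pairings along a fixed wrapping cycle $C\subseteq K$ compatibly arranges for the closed trail to inherit the winding of~$C$.

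The main obstacle is verifying rigorously that the rerouted trail is genuinely non-local rather than accidentally local. I expect the cleanest route to combine two arguments: a $\mathbb Z/2$-homology accounting of $[K]=\sum_i[T_i]$ in $H_1(A,\partial A;\mathbb Z/2)\cong\mathbb Z/2$ (and in $H_1(A;\mathbb Z/2)\cong\mathbb Z/2$ for the wrap case), where~$A$ is the cylinder underlying~$C_\size$, which handles the odd-parity sub-cases immediately; and a direct pairing-swap argument for the even-parity sub-cases, exploiting connectedness of~$K$ to produce either a cross arc or a wrapping circle in the decomposition.
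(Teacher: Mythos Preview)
Your framework---contrapositive, then decompose $K$ into turning trails via pairings at degree-$4$ vertices---is essentially the paper's. The gaps are in the endgame of each case, and you partly flag them yourself.

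For the cross case, the paper does what your pairing trick does: an Eulerian-style decomposition of $K$ into $j$ open trails (plus closed ones, which are then spliced into the open ones), each turning at every degree-$4$ vertex. The step you are missing is what to do when none of the resulting arcs is cross. Your ``reroute $P$ and iterate'' is not a well-defined procedure, and the $\mathbb{Z}/2$-homology count in $H_1(A,\partial A;\mathbb{Z}/2)$ only settles the sub-case where the number of bottom-boundary endpoints of $K$ is odd. The paper's fix is a single swap rather than an iteration: if every arc is bottom-to-bottom or top-to-top, then because $K$ is connected and reaches both boundaries, some bottom-to-bottom arc and some top-to-top arc must share a vertex; exchange the routing at that one vertex to produce two cross contours.

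For the wrap case, your plan to choose pairings along a fixed wrapping cycle $C\subseteq K$ breaks down whenever $C$ passes straight through a degree-$4$ vertex of $K$: neither turning pairing keeps $C$'s two edges together there, so the trail in your decomposition containing that edge diverges from $C$ and you lose control of its winding. The paper avoids this entirely: assuming $K$ misses (say) the upper boundary of $C_\size$, it traces the \emph{upper boundary of $K$ itself}, i.e., the closed trail in $K$ separating $K$ from the region of the cylinder containing the top circle. This trail is automatically a valid contour of $\sigma$, and because $K$ wraps around, so does it.
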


\begin{proof}
Suppose to the contrary that $\sigma^*$ has non-local component~$K$.  Consider first 
the case of a cross component.

So suppose $K$ reaches both the upper and lower boundaries of $C_\size$, and that $K$ has
$2j>0$ degree-1 vertices.  (All vertices other than the degree-1 vertices have even degree,
and the number of odd-degree vertices in a graph is even.)
By adapting the standard algorithm for finding an Eulerian trail
in a (connected) Eulerian graph, we may decompose $K$ into $j$~contours beginning and 
ending at degree-1 vertices.  The method is as follows.
Starting at a degree-1 vertex, trace out a trail in~$K$
subject only to the rule that we must turn through a right-angle at any degree-4 vertex.
This  
trail
can only end at another degree-1 vertex.  The trail so formed is a contour;
remove the trail from~$K$ and repeat $j-1$ more times to obtain $j$ contours in total.
If any edges of~$K$ remain, start at any remaining vertex and trace out a closed 
trail that returns to the start vertex.  Again, the rule is always to turn through a right-angle
at any degree~4 vertex.  Repeat until there are no edges remaining in~$K$.  We are 
left with $j$ non-closed contours and an unspecified number of closed ones.  Whenever
a non-closed contour meets a closed one, we may splice the latter into the former,
reducing the number of closed contours by one.  Repeating as necessary, we obtain the 
sought-for decomposition of $K$ into $j$~contours beginning and ending at 
degree-1 vertices. 

If one of these $j$ contours joins the upper and lower boundaries of $C_\size$ we
are done, as we have already found a cross contour and obtained a contradiction.  
Otherwise, there must be at least one vertex at which a lower-to-lower contour 
touches a upper-to-upper contour.  Simply reroute the trails at this vertex to obtain 
two cross contours.  

Now consider the case where $K$ wraps around.  We may assume that $K$ does not reach
one of the boundaries of $C_\size$, say the upper one.  Trace a closed trail along the 
upper boundary of~$K$:  this trail is a contour that wraps around, providing a contradiction.
\end{proof}

Since non-local contours are unlikely, Lemma~\ref{lem:contourscomponents} allows
us to concentrate on local $\sigma^*$-components.  A $\sigma^*$-component $K$
that is local has a well defined inside and outside, and a boundary that is a
valid contour.  (More precisely, there is a 
canonical
contour that has exactly
the same edges as the boundary of~$K$.)
If $K$ reaches neither the upper nor lower boundary of $C_\size$,
then we may trace clockwise around $K$, always taking the leftmost option, until
we return to our starting point.  This procedure yields a simple contour;  we refer 
to vertices of $V(C_\size)$ that lie within this contour as forming the \emph{interior
of $K$}, denoted $\Int K$.

If $K$ reaches the lower boundary but not the top (or vice versa), 
then a slightly modified construction can be used.  First 
lift $K$ to a grid:  for sufficiently large $N$, there is a
connected subset $\widehat{K}$ of
$\{1,\dots,N\}\times\{0,\dots,\nu\}$ which maps bijectively to~$K$
under the quotient map to $C_\nu$.  Note that lifting can only increase the 
diameter of $\widehat K$ relative to $K$.  We now have a natural 
ordering of the degree-1 vertices of~$K$, namely by increasing 
$x$-coordinate.  Start at the least degree-1 vertex in this ordering and 
and trace the boundary of $\widehat K$ in a clockwise-leftmost fashion until the greatest
degree-1 vertex is reached.  This procedure yields a simple contour 
which partitions the vertices of $V(C_\size)$ into an inside and an outside (containing the
point $(0,\size)$);  again we refer to the former as the \emph{interior of $K$}.  

\begin{lemma}\label{lem:tempone}
Consider $\sigma\colon\configs{V(C_\size)}$.
Let $h \leq \nu$ be an integer multiple of~8.
Suppose 
\begin{itemize}
\item $\sigma$ contains only local contours.
\item $\sigma$ contains no contour of length at least~$h/8$ that intersects $U_{x,h/2}$.
\item $\sigma$ contains no simple contour separating 
$U_{x,h/4}$
from the upper boundary of $C_\size$.
\end{itemize}
Then every $\sigma^*$-component that intersects $U_{x,h/2}$ has 
$*$-diameter at most~$h/8$.
\end{lemma}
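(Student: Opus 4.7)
Argue by contradiction: suppose some $\sigma^*$-component $K$ intersects $U_{x,h/2}$ but has $*$-diameter exceeding $h/8$. By the first hypothesis and Lemma~\ref{lem:contourscomponents}, $K$ is local, so the construction in the paragraph preceding the lemma produces a canonical simple contour $g$ that is the boundary of~$K$ (either closed, or running between two points of the lower boundary of~$C_\size$), together with a well-defined interior $\Int K$.

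The key geometric step is to show that $g$ itself has length greater than~$h/8$. Because $K$ has $*$-diameter greater than~$h/8$, some two vertices of~$K$ differ by more than~$h/8$ in either the $x$- or the $y$-coordinate; the extremal vertex of~$K$ in the dominant direction lies on the outer boundary of~$K$ and therefore appears as some~$v_i$ on the trail~$g$. Hence $g$ visits two vertices at $*$-distance greater than~$h/8$, and since each edge of~$g$ changes the $*$-position of its endpoints by at most one, $g$ must have length greater than~$h/8$.

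The proof now splits into two cases. If $g$ intersects~$U_{x,h/2}$, then $g$ is a contour of length greater than~$h/8$ that intersects~$U_{x,h/2}$, which directly contradicts the second hypothesis. Otherwise $g$ avoids~$U_{x,h/2}$; but by assumption $K$ contains some edge~$e^*$ whose primal edge has both endpoints in~$U_{x,h/2}$, and since the canonical contour~$g$ consists precisely of the boundary edges of~$K$, this~$e^*$ must be an interior edge of~$K$, lying in~$\Int K$. Thus $\Int K$ meets~$U_{x,h/2}$; because $U_{x,h/2}$ is connected (in~$C_\size$) and is not crossed by~$g$ in this case, all of~$U_{x,h/2}$, and in particular~$U_{x,h/4}$, lies in~$\Int K$. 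Since $g$ is local, $\Int K$ does not reach the upper boundary of~$C_\size$, so $g$ is a simple contour separating~$U_{x,h/4}$ from the upper boundary, contradicting the third hypothesis.

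The main obstacle is the geometric step linking the $*$-diameter of~$K$ to the length of~$g$: one has to check that an extremal vertex of~$K$ in the $x$- or $y$-direction really does appear on the trail~$g$, which requires slightly different bookkeeping in the two sub-cases of the canonical construction (whether $K$ touches the lower boundary of~$C_\size$, so that $g$ is a trail between two degree-$1$ vertices, or not, so that $g$ is a closed simple contour). The rest of the argument is then a straightforward combination of the length lower bound with a planar ``inside/outside'' dichotomy for the simple contour~$g$.
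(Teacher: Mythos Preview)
Your proof is correct and follows essentially the same line as the paper's, just reorganised as a contradiction. The paper argues directly: since the boundary contour~$g$ of~$K$ cannot separate $U_{x,h/4}$ from the upper boundary (third hypothesis), and $\Int K$ meets $U_{x,h/2}$, the contour~$g$ must intersect $U_{x,h/2}$; hence it has length less than~$h/8$ (second hypothesis), and therefore $K$ has $*$-diameter at most~$h/8$. Your case split is the same dichotomy read backwards, and your ``geometric step'' is exactly the contrapositive of the paper's unspoken final implication (length of~$g$ bounds the $*$-diameter of~$K$), which you spell out more carefully than the paper does.

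One small imprecision: the assertion ``since $g$ is local, $\Int K$ does not reach the upper boundary'' is not correct as stated---a local component touching only the upper boundary would have its interior meeting the upper boundary. What actually rules this out in your Case~2 is that $(x,0)\in U_{x,h/2}\subseteq\Int K$, and a primal vertex at $y=0$ can lie in $\Int K$ only when $K$ touches the lower dual boundary (or neither); otherwise any path along the bottom row from $(x,0)$ out of $\Int K$ would force $g$ to contain a vertex at $y=-\tfrac12$. With that observation in place your Case~2 goes through, matching the paper's implicit reasoning.
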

  
\begin{proof}
Suppose $K$ is a 
$\sigma^*$-component
intersecting $U_{x,h/2}$.  By Lemma~\ref{lem:contourscomponents},
$K$ is local.  The interior of $K$ contains some vertex in $U_{x,h/2}$.  By assumption, the 
contour defined by the boundary of~$K$ does not separate 
$U_{x,h/4}$ from the upper boundary of~$C_\size$, so it does not
separate 
$U_{x,h/2}$
from the upper boundary of~$C_\size$.
The only remaining possibility is that this contour intersects $U_{x,h/2}$, and hence has length 
at most $h/8$.  It follows that $K$ has $*$-diameter at most~$h/8$.
\end{proof}

\begin{lemma}\label{lem:august8}
Suppose that $h$ is a sufficiently large multiple of~$8$, and 
$\size \geq h$.
Consider a configuration $\sigma\colon\configs{V(C_\size)}$ and a terminal~$(x,0)$.
Suppose that the following are true.
\begin{itemize}
\item $\sigma$ has no cross contours.
\item $\sigma$ has no simple contour of length at least~$h/8$ that intersects~$U_{x,h/2}$.
\item $\sigma$ has no simple contour separating $U_{x,h/4}$ from the upper boundary of~$C_\size$.
\end{itemize}
Then, for some $s\in \{0,1\}$,
\begin{enumerate}
\item $(x,0)$ has an $h$-boundary~$B$ for which $\sigma(B)$ has parity-$s$ ones.
\item There is no $h$-boundary~$B'$ of $(x,0)$ for which $\sigma(B')$ has parity-$(1{\oplus} s)$ ones.
\item 
For any terminal $(x',0)$ that
has an $h$-boundary $B'$ for which $\sigma(B')$ has parity-$s'$ ones,
if $\sigma$ has
no contour separating $U_{x,h/4}$ from $U_{x',h/4}$  
then $s'=s$.
\item If $\sigma(B_{0,\size})$ has parity-$s'$ ones then $s'=s$.
\end{enumerate}
\end{lemma}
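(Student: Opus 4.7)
My plan is first to extract from the three hypotheses the structural consequence of Lemma~\ref{lem:tempone}: every $\sigma^*$-component intersecting $U_{x,h/2}$ has $*$-diameter at most $h/8$. To apply Lemma~\ref{lem:tempone} I need its ``only local contours'' hypothesis, which follows once I verify that wrapping contours of $\sigma$ cannot intersect $U_{x,h/2}$: such a contour would have length at least $2\size \geq 2h$, contradicting hypothesis~2. With this in hand, the main analytic device is the auxiliary labeling $\tau\colon V(C_\size) \to \{0,1\}$ defined by $\tau(v) = \sigma(v) \oplus p(v) \oplus 1$, where $p(v) \in \{0,1\}$ denotes the parity of~$v$. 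A direct check shows that $\sigma(S)$ has parity-$s$ ones exactly when $\tau \equiv s$ on~$S$, and that $\tau(u) \neq \tau(v)$ across a primal edge $(u,v)$ precisely when $(u,v)^* \in \sigma^*$.

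For Part~1 (existence of an $h$-boundary), I consider the annulus $A = U_{x,h/2} \setminus U_{x,h/4}$ around $(x,0)$, whose $*$-width $h/4$ exceeds $h/8$. Because no $\sigma^*$-component intersecting $U_{x,h/2}$ can span $A$ radially, a planar-duality argument on the cylindrical annular region produces a primal cycle around $(x,0)$ lying in $A$ and using only non-monochromatic edges. On this cycle $\tau$ is constant with some value~$s$; an augmentation of its vertex set if necessary yields an $h$-boundary $B$ of $(x,0)$ meeting all four clauses of Definition~\ref{def:boundaries}, with $\sigma(B)$ having parity-$s$ ones. For Part~2 (uniqueness of~$s$), suppose for contradiction that an $h$-boundary $B''$ of $(x,0)$ with parity-$(1 \oplus s)$ ones also exists. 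Then $B$ and $B''$ are disjoint separators of $(x,0)$ from $(x,\size)$ lying in $A$, so one is nested inside the other. The transition of $\tau$ between them is mediated by a $\sigma^*$-structure that encircles $(x,0)$; traced as a contour -- possibly with one or both endpoints on the lower boundary $y = -\tfrac12$, since $(x,0)$ itself sits there -- it has length at least the primal circumference around $(x,0)$ at radius $\Omega(h)$, which exceeds $h/8$ and intersects $U_{x,h/2}$, contradicting hypothesis~2.

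For Part~3, the premise yields a primal path $Q$ from $U_{x,h/4}$ to $U_{x',h/4}$ crossing no contour of $\sigma$; since every monochromatic edge is contained in some contour, $Q$ uses no monochromatic edge, so $\tau$ is constant on~$Q$. The phase~$s$ admits an intrinsic characterization as the $\tau$-value taken on the ``bulk'' of $A$ lying outside the small $\sigma^*$-components there (and similarly for the phase at $(x',0)$); transporting this characterization along~$Q$ forces $s' = s$. Part~4 proceeds analogously: starting from a vertex just outside $B$ that is linked to $B$ by a non-monochromatic edge (such a vertex exists by the planar-duality construction of $B$), one can reach the upper boundary of $C_\size$ by a primal path of non-monochromatic edges; otherwise a $\sigma^*$-component would enclose this $\tau = s$ region and, by the Part~2 argument, produce a long contour intersecting $U_{x,h/2}$. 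The upper boundary lies in $B_{0,\size}$, so the $\tau$-value $s$ propagates there, forcing $s' = s$.

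The main obstacle throughout is the planar-dual topological bookkeeping on the cylindrical gadget: converting ``small $\sigma^*$-components'' into ``$\tau$-constant primal separator'' in Part~1, and conversely converting ``$\tau$-switch across a mesoscale'' into ``long contour intersecting $U_{x,h/2}$'' in Parts~2 and~4. The delicate point is that contours encircling the terminal $(x,0)$ need not be closed loops -- since $(x,0)$ lies on the lower boundary, such contours may have one or both endpoints on $y = -\tfrac12$ -- and their lengths must be carefully estimated to secure the required contradictions.
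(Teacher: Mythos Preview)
Your overall strategy matches the paper's: your labeling~$\tau$ is precisely the indicator of the paper's global set $S=V(C_\size)\setminus\bigcup\Int K$ (on which $\sigma$ has parity-$s$ ones), and Lemma~\ref{lem:tempone} is indeed the key input. Two steps, however, have genuine gaps.

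First, your argument that Lemma~\ref{lem:tempone} applies is incomplete. You show that a wrapping contour cannot intersect $U_{x,h/2}$ (it would be too long), but this does not exclude a wrapping contour that avoids $U_{x,h/2}$ altogether. Since $U_{x,h/2}$ reaches the lower boundary, such a contour would have to pass above it and hence would be a simple contour separating $U_{x,h/4}$ from the upper boundary, contradicting hypothesis~3; the paper invokes both hypotheses~2 and~3 to rule out wrapping contours.

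Second, your Part~3 argument does not go through as written. From ``no contour separates $U_{x,h/4}$ from $U_{x',h/4}$'' you extract a primal path~$Q$ crossing no contour whatsoever; but the premise only rules out separation by a \emph{single} contour, and the existence of such a~$Q$ does not follow directly. Even granting~$Q$, its constant $\tau$-value need not be~$s$: its endpoint in $U_{x,h/4}$ could lie inside a small $\Int K$ with $\tau=1\oplus s$, and you have no smallness hypothesis near $(x',0)$ to anchor the other end, so ``transporting the phase along~$Q$'' is not justified. The paper handles Parts~2--4 contrapositively via the global structure of~$S$: since $\sigma(S)$ has parity-$s$ ones, any $h$-boundary with parity-$(1\oplus s)$ ones lies in~$\bar S$; for Part~2 this is impossible because the relevant pieces of~$\bar S$ have $*$-diameter at most~$h/8$, while for Parts~3 and~4 the boundary of the $C_\size[S]$-component containing~$B$ directly supplies the single separating contour whose existence contradicts the premise. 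Your Part~1 sketch is close in spirit to the paper's, though the paper makes the ``planar-duality'' step explicit by defining an auxiliary set~$W$ and taking~$B$ to be a suitable connected piece of its inner $*$-boundary.
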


\begin{proof} 
There are no contours that wrap around, since any such contour would either intersect $U_{x,h/2}$,
or would separate $U_{x,h/4}$ from the upper boundary of $C_\size$.
Thus, by Lemma~\ref{lem:contourscomponents}, all $\sigma^*$-components are local.
Let $S$ be the set of all vertices in $V(C_\size)$ that are not in the interior
of some $\sigma^*$-component.  That is 
$$S=V(C_\size)\setminus \bigcup\big\{\Int K\mid\text{$K$ is a $\sigma^*$-component}\big\}.$$
Note that $\sigma(S)$ has parity-$s$ ones,
for some $s\in\{0,1\}$.
Define $\Sbar=V(C_\size)\setminus S$.   Note that $\sigma(\Sbar\,)$ in general  
has mixed parity; the salient feature is that $\sigma(S)$ has consistent parity.

We work first towards conclusion~(1) of the lemma.
By Lemma~\ref{lem:tempone}, every $\sigma^*$-component that intersects~$U_{x,h/2}$ 
has $*$-diameter at most~$h/8$.  
Now (informally) we will construct the required $h$-boundary by tracing 
round the inside of $B_{x,h/2}$,
making a detour towards $(x,0)$ around any $\sigma^*$-components that stand in the way.
(Recall that $B_{x,h/2}=U_{x,h/2}\setminus U_{x,h/2-1}$ is the ``goalpost'' at
distance $h/2$ from $(x,0)$.)
This strategy ensures we remain in the set~$S$ and, since all the $\sigma^*$-components
are small, our detours will not be too great.
 
More formally, let $W$ be the union of the set $V(C_\size)\setminus U_{x,h/2}$ together with 
any sets of the form~$\Int K$ that intersect $B_{x,h/2}$.  That is,
$$
W=(V(C_\size)\setminus U_{x,h/2})\cup \bigcup\{\Int K\mid\text{$K$ is a $\sigma^*$-component and $\Int(K)\cap
B_{x,h/2}$}\not=\emptyset\}.
$$
The set 
$$\partial W=\{v\in V(C_\size)\mid v\notin W\text{ and $v$ is $*$-adjacent to some vertex in $W$}\}$$
is almost the $h$-boundary $B$ that we seek.
Observe that any set of the form $\Int K$ is contained in
a maximal set of the form $\Int K'$, and the $*$-neighbours of $\Int K'$
are all in $S$.
Thus $\partial W$ is a subset of $S$, and
necessarily has parity-$s$ ones.

And as we shall see presently, $\partial W$ satisfies the 
first three conditions of an $h$-boundary $B$ ---
(i)~every path from $(x,0)$ to $(x,\size)$ intersects $B$,
(ii)~$B\cap U_{x,h/4}=\emptyset$, 
(iii)~$B\subseteq U_{x,h/2}$ ---
but not necessarily the final one, namely:
(iv)~the induced graph $C_\size[B]$ is connected.  (There may be islands of vertices of $\partial W$
lying outside the $h$-boundary we are trying to home in on.)
However, we can ensure (iv) by defining $B$ to be the subset of vertices in $\partial W$
that can be reached from $(x,0)$ by a path in $C_\size$ whose vertices all lie in 
$V(C_\size)\setminus W$.

For (i), observe that any path 
from $(x,0)$ to $(x,\size)$ has a first vertex~$w$ in~$W$.  
The vertex immediately preceding~$w$ is 
not in~$W$ but is adjacent to a vertex in~$W$, and hence in~$B$.
(ii)~follows from the fact that every vertex in  
$W \cap U_{x,h/2}$
is within $*$-distance~$h/8$ of
a vertex in $B_{x,h/2}$.  (iii) is immediate from the construction.  
To see (iv), denote by $W^\circ$ the set of all vertices in $V(C_\size)$ 
that can be reached from $(x,0)$ by a path whose vertices all lie in 
$V(C_\size)\setminus W$.  Note that $C_\size[W^\circ]$ is connected and that $B\subseteq W^\circ$.  
Let $\varrho^*$ be the set of edges in $C_\size^*$ separating $W^\circ$ and $W$;  thus, $e^*\in\varrho^*$
iff $e$ has one endpoint in $W$ and the other in $W^\circ$.  Since $C_\size[W^\circ]$ is connected, 
the edges in $\varrho^*$ form a trail in $C_\size^*$ starting and ending at vertices 
with $y$-coordinate~$-\frac12$.  Following this trail anticlockwise, 
vertices in~$W^\circ$ lie to the left and those in~$W$ to the right.
In fact, the vertices immediately to the left of the $\varrho^*$-trail (i.e., those at
$*$-distance $\frac12$ from it) are precisely the 
vertices forming~$B$:  they are all $*$-adjacent to some vertex in $W$, and no other vertices in
$W^\circ$ have this property.  Thus, any two vertices in~$B$ are connected by a path, which is 
obtained by shadowing $\varrho^*$ at $*$-distance~$\frac12$. 

For conclusion (2) of the lemma, 
observe that 
there
cannot be an $h$-boundary~$B'$ 
of $(x,0)$
such that $\sigma(B')$ has parity-$(s{\oplus}1)$ ones,
as such a $B'$ would have to exist entirely within~$\Sbar$, and all 
$*$-connected components of $\Sbar$ are small ($*$-diameter at most $h/8$).

As for conclusion~(3), it is impossible for $s'\not=s$.  Consider the 
connected component of~$C_\size[S]$ containing $B$.
If $s'\not=s$ then the
boundary of this component contains a contour separating $B$ and~$B'$, and hence $U_{x,h/4}$ and $U_{x',h/4}$.
If $\sigma(B_{0,\size})$ has parity-$s'$ ones for $s'\neq s$ then the boundary of the
connected component of~$C_\size[S]$ containing~$B$ is a simple contour
separating~$B$ from the upper boundary of~$C_\size$ hence,
separating $U_{x,h/4}$ from the upper boundary of~$C_\size$, establishing~(4).
\end{proof}

\begin{corollary}\label{lem:contours_vs_boundary}
Suppose  that $k\geq 1$ that $d$ is a sufficiently large multiple of~$8$ and that
$\size = 2 dk$.  Suppose
$\sigma\colon \configs{V(C_\size)}$ has
no contours of length at least $d/8$. Either $\sigma$ has phase $0$ or
$\sigma$ has phase $1$.
\end{corollary}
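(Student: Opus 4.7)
The plan is to apply Lemma~\ref{lem:august8} with $h=d$ independently at each terminal $(x,0)\in\terminals$, obtain a local phase $s_x\in\{0,1\}$ at each terminal, and then show all the $s_x$ coincide.

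To apply the lemma, I need to verify its three hypotheses at every terminal. The hypothesis that $\sigma$ has no contour of length at least~$d/8$ handles two of them directly: any cross contour has length at least $\size=2dk\geq 2d>d/8$, and long simple contours intersecting $U_{x,d/2}$ are excluded outright. The third hypothesis---no simple contour of $\sigma$ separates $U_{x,d/4}$ from the upper boundary of $C_\size$---requires a topological argument. I would argue that any such separating simple contour must have both endpoints on the lower dual boundary and loop up over $U_{x,d/4}$. A closed null-homotopic contour cannot enclose a vertex on the lower boundary of $C_\size$ (since the dual graph has no horizontal edges at $y=-\tfrac12$, such a loop cannot close below the bottom row), and $U_{x,d/4}$ contains bottom-boundary vertices. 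An upper-to-upper simple contour that does not wrap around always leaves some upper-boundary vertex on the same side as $U_{x,d/4}$, so it fails to separate $U_{x,d/4}$ from all of the upper boundary. Wrapping contours have length at least~$2\size$. A lower-to-lower loop over $U_{x,d/4}$ must reach height at least $d/4+\tfrac12$ and span the horizontal extent $\geq d/2$ of $U_{x,d/4}$, so its length is at least~$d$, which again exceeds $d/8$.

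Lemma~\ref{lem:august8} then provides, at each terminal $(x,0)$, a value $s_x$ and a $d$-boundary of parity-$s_x$ ones; conclusion~(2) further guarantees that no $d$-boundary of $(x,0)$ has parity-$(1\oplus s_x)$ ones. Both bullet points of Definition~\ref{def:phase} then hold at $(x,0)$ with phase~$s_x$, the second being vacuous. To show $s_x=s_{x'}$ for any two terminals, I would invoke conclusion~(3) of the same lemma, which gives the equality whenever no contour of~$\sigma$ separates $U_{x,d/4}$ from $U_{x',d/4}$. Since both neighbourhoods touch the lower boundary, the same three cases (cross, wrapping, simple non-wrapping) as above apply: the first two cases are ruled out by length at least $\size$, and in the third case a symmetric loop-over argument again gives length at least~$d$. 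Thus the $s_x$ are all equal to a common value $s\in\{0,1\}$, and $\sigma$ has phase~$s$.

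The main obstacle is the topological classification of the simple contours that can act as separators on the cylindrical gadget. Since the relevant separation distances are of order~$d$ while the forbidden contour length is only $d/8$, each length estimate is comfortably slack; the real work is in enumerating the shapes that a short simple non-wrapping contour can take and confirming that none of them realises the separation.
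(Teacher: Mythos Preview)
Your proposal is correct and follows exactly the paper's approach: apply Lemma~\ref{lem:august8} with $h=d$ at every terminal, then use conclusion~(3) to force the local phases $s_x$ to agree. The paper's own proof is a single sentence asserting that the absence of contours of length $\geq d/8$ makes all three hypotheses of Lemma~\ref{lem:august8} hold for every terminal (and every pair of terminals); your write-up simply supplies the topological case analysis---closed non-wrapping, upper-to-upper, wrapping, and lower-to-lower arches---that the paper leaves implicit, and your length estimates are comfortably above the $d/8$ threshold.
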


\begin{proof}
Since there are no contours of length at least $d/8$ of any kind, the 
premises of Lemma~\ref{lem:august8} are all satisfied for every terminal $(x,0)$
and every other terminal $(x',0)$.
\end{proof}

The following monotonicity property is useful for comparing different
contours and boundary conditions.

\begin{lemma}\label{lem:vertex_monotonicity}
Let $x\in\mathbb{Z}/2\nu\mathbb{Z}$, let $B$ be an $h$-boundary of
$(x,0)$ for some $h$, and let $B'$ be an $h'$-boundary of $(x,0)$ for
some $h'$ such that $B$ is inside of $B'$.
Then 
\begin{align*}
&\Pr(\mbox{$\bsigmasize$ has parity-$0$ ones at $(x,0)$}
\mid
\mbox{${\bsigmasize}(B)$ has  parity-$0$ ones}) \\ &\qquad\geq
\Pr(\mbox{$\bsigmasize$ has parity-$0$ ones at $(x,0)$}
\mid
\mbox{${\bsigmasize}(B')$ has  parity-$0$ ones}).
\end{align*}

\end{lemma}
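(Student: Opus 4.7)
The plan is to reduce the claim to a standard FKG monotonicity statement via a spin-flip that exploits the bipartite structure of $C_\size$. I would introduce new variables $\tau$ by leaving $\sigma$ unchanged on parity-$0$ vertices and flipping it on parity-$1$ vertices. Under this transformation, ``$\sigma(S)$ has parity-$0$ ones'' becomes the all-ones event ``$\tau(S)=\mathbf{1}$,'' and ``parity-$0$ ones at $(x,0)$'' becomes $\tau(x,0)=1$. Rewriting the edge weights on any edge $\{u,v\}$ with $u$ of parity~$0$ and $v$ of parity~$1$ gives $w_\tau(0,0)=w_\tau(1,1)=1$, $w_\tau(0,1)=\beta$, and $w_\tau(1,0)=\gamma$, so the FKG lattice condition $w_\tau(0,0)\,w_\tau(1,1)\ge w_\tau(0,1)\,w_\tau(1,0)$ is exactly $\beta\gamma\le 1$, which is part of~\eqref{eq:spincond}. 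The external field $\lambda$ contributes a product of per-vertex weights, which is trivially log-modular (even though it becomes inhomogeneous after the flip), so in the $\tau$ variables the Gibbs distribution of $\bsigmasize$ is an FKG measure.

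Next I would use the Markov property to remove $B'$ from the picture. Because $B$ is an $h$-boundary of $(x,0)$, that vertex is disconnected from $(x,\size)$ in $C_\size\setminus B$; call the component containing $(x,0)$ the interior of~$B$. Since $B$ is inside $B'$, no path in $C_\size\setminus B$ joins $B'$ to $(x,0)$, so $B'$ lies outside the interior of~$B$, and thus $B$ separates $(x,0)$ from $B'$. Conditioning on $\tau(B)$ therefore decouples $\tau(x,0)$ from $\tau(B')$, so writing $\mu_B$ for the distribution of $\tau(B)$ under the conditioning $\tau(B')=\mathbf{1}$,
\[
\Pr\bigl(\tau(x,0)=1\mid\tau(B')=\mathbf{1}\bigr)=\sum_{X}\mu_B(X)\,\Pr\bigl(\tau(x,0)=1\mid\tau(B)=X\bigr).
\]

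To finish, I would invoke the standard FKG monotone-boundary-condition argument: the conditional measure on the interior of~$B$ given $\tau(B)=X$ is itself an FKG measure, and $\{\tau(x,0)=1\}$ is an increasing event, so $X\mapsto\Pr(\tau(x,0)=1\mid\tau(B)=X)$ is nondecreasing in the coordinatewise order on $\{0,1\}^B$. Since every $X$ satisfies $X\le\mathbf{1}$, each summand above is bounded by $\Pr(\tau(x,0)=1\mid\tau(B)=\mathbf{1})$, and the displayed identity then yields the desired inequality in the $\tau$ variables. Translating back to $\sigma$ gives the lemma. The main obstacle I anticipate is purely bookkeeping: verifying carefully that the spin flip interacts correctly with the bipartition so that the parity-$0$ conditions turn into all-ones conditions, and that the FKG lattice condition really collapses to $\beta\gamma\le 1$ once the edge weights are rewritten in the $\tau$ coordinates.
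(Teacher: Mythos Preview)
Your proposal is correct and takes essentially the same approach as the paper: both hinge on the observation that after the parity flip (equivalently, after reparameterizing configurations by the set $S$ of vertices where $\sigma$ has parity-$0$ ones), the FKG lattice condition reduces exactly to $\beta\gamma\le1$. The paper applies the FKG inequality directly to a pair of principal upsets and implicitly uses the Markov property to identify one of the resulting ratios with the $B$-conditioned probability, whereas you make the Markov step explicit and phrase the FKG consequence as monotonicity in the boundary condition; these are equivalent packagings of the same argument.
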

\begin{proof}
For each $S\subseteq V(C_\size)$ let $\sigma^S\colon\configs{C_\size}$ denote
the configuration which has parity-$0$ ones exactly on $S$.
So $\sigma^S(x,y)=1$ if and only if one of these two conditions hold:
$(x,y)\in S$ and $x+y$ is even, or $(x,y)\notin S$ and $x+y$ is odd.

For all $X,Y\subseteq V(C_\size)$ we have
\[
w_{C_\size}(\sigma^X)
w_{C_\size}(\sigma^Y)
=
w_{C_\size}(\sigma^{X\cap Y})
w_{C_\size}(\sigma^{X\cup Y})(\beta\gamma)^k,
\]
where $k$ is the number of edges $uv\in E(G)$ such that
$\{(\sigma^X(u),\sigma^X(v)),(\sigma^Y(u),\sigma^Y(v))\}=\{(0,0),(1,1)\}$
(so either $u\in X$ and $v\notin X$ and $u\notin Y$ and $v\in Y$,
or $v\in X$ and $u\notin X$ and $v\notin Y$ and $u\in Y$).
In particular, 
\[ w_{C_\size}(\sigma^X)
w_{C_\size}(\sigma^Y)
\leq
w_{C_\size}(\sigma^{X\cap Y})
w_{C_\size}(\sigma^{X\cup Y}). \]

Let
\begin{align*}
\mathcal{X}&=\{S \mid \{(x,0)\}\cup B'\subseteq S\subseteq V(C_\size)\},\text{ and}\\
\mathcal{Y}&=\{S \mid B\cup B'\subseteq S\subseteq V(C_\size)\}.
\end{align*}
By the FKG inequality \cite{FKG71} we have
\[ \left(\sum_{S\in\mathcal{X}}w_{C_\size}(\sigma^S)\right)
   \left(\sum_{S\in\mathcal{Y}}w_{C_\size}(\sigma^S)\right)
\leq
\left(\sum_{S\in\mathcal{X}\wedge\mathcal{Y}}w_{C_\size}(\sigma^S)\right)
   \left(\sum_{S\in\mathcal{X}\vee\mathcal{Y}}w_{C_\size}(\sigma^S)\right),
\]
where $\mathcal{X}\wedge\mathcal{Y}$ is the family of sets
$X\subseteq V(C_\size)$ such that $(\{(x,0)\}\cup B')\cap (B\cup B')=B'\subseteq X$, and
$\mathcal{X}\vee\mathcal{Y}$ is the family of sets $X\subseteq V(C_\nu)$
such that $\{(x,0)\}\cup B \cup B' \subseteq X$.
Finally,
\begin{align*}
\Pr(\mbox{$\bsigmasize$ has parity-$0$ ones at $(x,0)$} \mid \mbox{${\bsigmasize}(B')$ has parity-$0$ ones})
&=
\frac
{\sum_{S\in\mathcal{X}}w_{C_\size}(\sigma^S)}
{\sum_{S\in\mathcal{X}\wedge\mathcal{Y}}w_{C_\size}(\sigma^S)},\\
\noalign{\noindent{and}}
\Pr(\mbox{$\bsigmasize$ has parity-$0$ ones at $(x,0)$}\mid \mbox{${\bsigmasize}(B)$ has parity-$0$ ones})
&=
\frac
{\sum_{S\in\mathcal{X}\vee\mathcal{Y}}w_{C_\size}(\sigma^S)}
{\sum_{S\in\mathcal{Y}}w_{C_\size}(\sigma^S)}.\qedhere
\end{align*}
\end{proof}
 
\begin{lemma}\label{lem:boundary_terminals}
There is a $c>1$ such that the following is true   for 
any $k\geq 1$, any $s\in\{0,1\}$, any sufficiently large~$d$
which is a multiple of~$16$,
 and any assignment 
$\{B_x\}$ of $d$-boundaries    
for each terminal $(x,0)$: 
\begin{itemize}
\item
For every  parity-$s$ terminal $(x,0)$, 
$$|\Pr(\bsigmakd(x,0)=1 \mid \mbox{$\bsigmakd$ has  phase~$s$ and
 $\calB_x(\bsigmakd) = B_x$}) - \peq| \leq c^{-d}.$$
 \item 
 For every  parity-$(1{\oplus} s)$ terminal $(x,0)$,
 $$|\Pr(\bsigmakd(x,0)=1 \mid \mbox{$\bsigmakd$ has  phase~$s$ and
 $ \calB_x(\bsigmakd) = B_x$}) - \pneq| \leq c^{-d}.$$
   \end{itemize}
\end{lemma}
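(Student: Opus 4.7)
My plan has three parts, built around canonicity of the outermost $d$-boundary, FKG monotonicity, and contour estimates.

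The first step is to translate the statement, via canonicity of $\calB_x(\bsigmakd)$, into a question about a single Gibbs measure with a fixed boundary condition. Canonicity says that conditioning on ``$\bsigmakd$ has phase $s$ and $\calB_x(\bsigmakd)=B_x$'' is the same as drawing the spins on the interior $V(C_\size)\setminus(B_x\cup\Ext B_x)$ from the Gibbs distribution with the boundary condition ``$\bsigmakd(B_x)$ has parity-$s$ ones''. Using the $x$-translation symmetries of $C_\size$ (shift by $4d$ preserves parity and moves any terminal to a canonical one; shift by $1$ swaps the two parities and so swaps the two phases), I may reduce to $s=0$ and $(x,0)\in\{(0,0),(1,0)\}$. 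It then suffices to show that $|P(B_x)-\peq|\leq c^{-d}$ in the matched case $(x,0)=(0,0)$ and $|P(B_x)-\pneq|\leq c^{-d}$ in the mismatched case $(x,0)=(1,0)$, where $P(B)=\Pr(\bsigmakd(x,0){=}1\mid\bsigmakd(B)\text{ has parity-}0\text{ ones})$.

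The second step is to sandwich $P(B_x)$ between two ``goalpost'' values via Lemma~\ref{lem:vertex_monotonicity}. Because $B_x$ lies in the annulus $U_{x,d/2}\setminus U_{x,d/4}$, is connected, and separates $(x,0)$ from $(x,\size)$ within the disk-like region $U_{x,d/2}\subseteq C_\size$ (valid because $d/2<\size$), it topologically encloses $(x,0)$. Therefore the goalpost $B_{x,d/4}$ is inside $B_x$, and $B_x$ is inside $B_{x,h}$ for any $h\geq d/2+1$. Applying the monotonicity lemma to the event ``parity-$0$ ones at $(x,0)$'' (and, for the parity-$1$ terminal, taking complements) gives
\[
P(B_{x,d/4})\geq P(B_x)\geq P(B_{x,h})\quad\text{or the reverse,}
\]
depending on the parity of the terminal.

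The third step matches both ends of the sandwich to the claimed limit value up to $c^{-d}$. By Lemma~\ref{lem:longsimple_unlikelycond}, conditioned on any goalpost $B_{x,h'}$ with $h'\geq d/4$ having parity-$0$ ones, the probability of a simple contour of length at least $d/8$ meeting $U_{x,d/8}$ is at most $c^{-d}$, and by Lemma~\ref{lem:longcross_unlikelycond} there are no cross contours. Off this rare event, the spin at $(x,0)$ is determined by the configuration inside $U_{x,d/8}$, a region whose conditional distribution does not depend on which goalpost $B_{x,h'}$ we impose. Coupling the two conditional measures so that they agree outside the ``long contour'' event, one obtains $|P(B_{x,h_1})-P(B_{x,h_2})|\leq c^{-d}$ for all $h_1,h_2\geq d/4$. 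Since $\peq$ (resp.\ $\pneq$) is the limsup of $P(B_{0,\size})$ (resp.\ $P(B_{1,\size})$) as $\size\to\infty$, the common value coincides with $\peq$ (resp.\ $\pneq$) up to $c^{-d}$, completing the proof after a small adjustment to $c$.

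I expect the main obstacle to be the coupling/comparison step: making precise the intuition that, given the absence of a long contour separating $(x,0)$ from the boundary, the conditional distribution of $\bsigmakd(x,0)$ is essentially independent of the far boundary. Rather than constructing an explicit stochastic coupling, this is probably best handled by bounding the ratio of the two conditional partition functions directly, with the excess controlled by a sum over configurations containing a contour that reaches the boundary, which is exponentially small in $d$ by the contour estimates already in hand.
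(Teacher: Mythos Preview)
Your first two steps are correct and match the paper closely: the canonicity reduction and the monotonicity sandwich between goalposts $B_{x,d/4}$ and $B_{x,h}$ are exactly how the paper begins.

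The gap is in step~3. The sentence ``a region whose conditional distribution does not depend on which goalpost $B_{x,h'}$ we impose'' is false as written: even on the complement of the rare contour event, the Gibbs distribution inside $U_{x,d/8}$ under the boundary condition ``$B_{x,h'}$ has parity-$0$ ones'' still depends on~$h'$, because the weights of configurations in the annulus $U_{x,h'}\setminus U_{x,d/8}$ vary with~$h'$. What the contour estimates (together with Lemma~\ref{lem:august8}, which also needs the ``no separating contour'' event from Lemma~\ref{lem:newcontour} that you omitted) actually give is the existence of an inner $d/2$-boundary $B'$ with parity-$0$ ones. By canonicity the conditional law inside a \emph{fixed} such $B'$ is indeed independent of~$h'$, but the distribution of $B'$ itself is not, so a naive coupling does not go through.

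The paper closes this gap not by coupling but by a \emph{second} application of Lemma~\ref{lem:vertex_monotonicity}: every $d/2$-boundary $B'$ lies inside $B_{x,d/4}$, so $\Pr(\bsigma(x,0){=}1\mid B'\text{ parity-}0)\geq \peq(d/4)$ uniformly in~$B'$ (and $\leq\pneq(d/4)$ in the mismatched case). Averaging over~$B'$ then gives $\peq(w)\geq \peq(d/4)\cdot\Pr(F_0\mid E_0)\geq \peq(d/4)(1-O(c^{-d}))$ for any large~$w$. The other inequality $\peq(d/4)\geq \peq$ is free from monotonicity (the limsup is an honest decreasing limit), so one first takes a qualitative step, choosing $w$ large enough that $\peq(w)\leq \peq+c^{-d}$, and then the quantitative step above finishes. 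Your fallback suggestion of ``bounding the ratio of the two conditional partition functions directly'' would, if carried out, rediscover exactly this second monotonicity application; it cannot be done by contour estimates alone.
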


\begin{proof}
By symmetry (rotating the gadget so that parity-$0$ vertices become parity-$1$ vertices and
vice-versa), 
it suffices to prove the inequalities for $s=0$. For any $m\geq 1$, define
\begin{align*}
 \peq(m) &=  
\Pr(\bsigmasub{m}(0,0)=1 \mid \mbox{$\bsigmasub{m}(B_{0,m})$ has  parity-$0$ ones}), \mbox{and}\\
\pneq(m) &=  
\Pr(\bsigmasub{m}(1,0)=1 \mid \mbox{$\bsigmasub{m}(B_{1,m})$ has  parity-$0$ ones}).
\end{align*}  
Now note that for any $\size\geq m$, 
$\peq(m)$ and $\pneq(m)$ (as defined above) are
the same as the equivalent expressions in the gadget~$C_\size$. In particular,

\begin{align*}
 \peq(m) &=  
\Pr(\bsigmasize(0,0)=1 \mid \mbox{$\bsigmasize(B_{0,m})$ has  parity-$0$ ones}), \mbox{and}\\
\pneq(m) &=  
\Pr(\bsigmasize(1,0)=1 \mid \mbox{$\bsigmasize(B_{1,m})$ has  parity-$0$ ones}). 
\end{align*}

 Thus, by fixing large~$\nu$ and increasing~$m$, 
Lemma \ref{lem:vertex_monotonicity} implies that 
$\peq(m)$   is
weakly decreasing in $m$ and that  $\pneq(m)$ is weakly
increasing. 
Thus, $\peq = \lim_{m\to\infty} \peq(m)$ and
$\pneq = \lim_{m\to\infty} \pneq(m)$.
Also, for a parity-$0$ terminal $(x,0)$,
the target probability 
$$\Pr(\bsigmakd(x,0)=1 \mid \mbox{$\bsigmakd$ has   phase~$0$ and
 $  \calB_x(\bsigmakd) = B_x$})$$
 is between $\peq(d/2)$ and $\peq(d/4)$.
 Similarly, for a parity-$1$ terminal $(x,0)$,
 the target probability 
$$\Pr(\bsigmakd(x,0)=1 \mid \mbox{$\bsigmakd$ has   phase~$0$ and
 $  \calB_x(\bsigmakd) = B_x$})  $$  is between $\pneq(d/4)$ and $\pneq(d/2)$.
(Here we use crucially the canonicity property of 
$\calB_x(\cdot)$;  
refer to the discussion
following Definition~\ref{def:phase}.)
Thus it suffices
to show 
$$
\peq(d/4) \leq \peq +  c^{-d}\quad\mbox{and}\quad
\pneq(d/4)\geq \pneq - c^{-d}, 
$$

First we take a qualitative step.
Pick $w\geq 8d$  sufficiently large that  $\peq(w) \leq \peq + |U_{0,d/4}| (c')^{-d/16} $
and $\pneq(w) \geq \pneq - |U_{1,d/4}| (c')^{-d/16}$,
where $c'$ is the maximum of the constants given in Lemma~\ref{lem:longsimple_unlikelycond}
and Lemma~\ref{lem:newcontour}.
This can be done since 
$d$ is sufficiently large and
$\peq = \lim_{m\to\infty} \peq(m)$, 
and $\pneq = \lim_{m\to\infty}\pneq(m)$
though $w$ may be
quite a lot larger than~$d$.

For $i\in\{0,1\}$, let~$F_i$
be the event that
there is a $d/2$-boundary $B$ of vertex $(i,0)$ in gadget $C_w$
such that $\bsigmasub{w}(B)$ has  parity-$0$ ones. 
Recall from the definition that a $d/2  $-boundary of~$(i,0)$ is a subset
of~$U_{i,d/4}$.
Let~$E_i$ be the event
that $\bsigmasub{w}(B_{i,w})$ has  parity-$0$ ones.

For each $i\in\{0,1\}$,
applying 
Lemma~\ref{lem:longcross_unlikelycond},
and
Lemma~\ref{lem:longsimple_unlikelycond}
with $h=d/16$ and $U=U_{i,d/4}$, 
and
Lemma~\ref{lem:newcontour} 
with $h=d/8$, 
we find that, the conditional probability that the following hold, conditioned on~$E_i$,
is at least $1- 2 |U_{i,d/4}| (c')^{-d/16}$.
\begin{itemize}
\item $\bsigmasub{w}$ has no cross contour.
\item $\bsigmasub{w}$ has no simple contour of length at least $d/16$ 
which contains an edge between two vertices in~$U_{i,d/4}$.
\item $\bsigmasub{w}$ has no simple contour 
that separates $U_{i,d/8}$ from the upper boundary of~$C_\size$. 
\end{itemize}
Now, applying Lemma~\ref{lem:august8} with $h=d/2$ and $\size=w$ and $x=i$,  
if all of these hold and event $E_i$ occurs  then
event $F_i$ occurs. Thus,
$$\Pr(F_i \mid E_i) \geq 1- 2 | U_{i,d/4}| (c')^{-d/16}.$$
But by Lemma \ref{lem:vertex_monotonicity}, we have
\begin{align*}
\Pr(\bsigmasub{w}(0,0)&=1\mid F_0\wedge E_0 )\geq \peq(d/4), \mbox{ and}\\
\Pr(\bsigmasub{w}(1,0)&=1\mid F_1\wedge E_1 )\leq \pneq(d/4).
\end{align*}
So
\begin{align*} 
\peq(d/4) &\leq 
\frac{\Pr(  \bsigmasub{w}(0,0)=1 \wedge F_0 \mid E_0)}
{\Pr(F_0 \mid E_0)}
\leq 
\frac{\Pr( \bsigmasub{w}(0,0)=1 \mid E_0) }
{\Pr(F_0\mid E_0)}
= \frac{\peq(w)}{\Pr(F_0\mid E_0)}
\\
&\leq \frac{\peq(w)}{1- 2|U_{0,d/4}| (c')^{-d/16}}
\leq \peq(w) + 4 | U_{0,d/4}| (c')^{-d/16},
\end{align*}
since $2|U_{0,d/4}| (c')^{-d/16} \leq 1/2$.

A similar inequality holds for $\pneq(d/4)$:
\begin{align*} 
\pneq(d/4) &\geq 
\frac{\Pr(  \bsigmasub{w}(1,0)=1 \wedge F_1 \mid E_1)}
{\Pr(F_1 \mid E_1)}\\
&\geq 
\Pr( \bsigmasub{w}(1,0)=1 \mid E_1)-\Pr(\neg F_1\mid E_1)
\\
&\geq \pneq(w) - 2|U_{1,d/4}| (c')^{-d/16}.
\end{align*}
Thus,
\begin{align*}
\peq(d/4) &\leq \peq(w)+ 4 |U_{0,d/4}| (c')^{-d/16} \leq \peq+ 5 |U_{0,d/4}| (c')^{-d/16}, \mbox{ and}\\
\pneq(d/4) &\geq \pneq(w)
 -2 |U_{1,d/4}| (c')^{-d/16}  \geq \pneq
 -  3 |U_{1,d/4}| (c')^{-d/16}.  
\end{align*}
The result follows by noting that $|U_{i,d/4}|$ is $O(d^2)$ and
picking~$c=(c')^{1/17}$, say.
\end{proof}

We now prove the main proposition.

\begin{mainprop}
There is a $c>1$ such that,
if $d$ is a sufficiently large multiple of~$16$,
$k$ is an integer greater than or equal to~$1$ and
$\tau$ is a configuration $\tau:\configs{\terminals}$, then
$$|\Pr(\sigmarv(\terminals)=\tau)-\muall(\tau)| \leq c^{-d} k^2.$$
\end{mainprop}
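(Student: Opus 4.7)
The plan is to decompose the probability into a ``typical'' part that matches $\muall$ and a ``bad'' part of probability exponentially small in $d$. The typical contribution comes from configurations that have a well-defined phase $s\in\{0,1\}$ and, around each terminal, a canonical $d$-boundary of parity $s$ which, by the spatial Markov property, screens the terminal spin from the rest of the gadget. I will show in turn: (a)~the bad event has probability $O(k\,c_1^{-d})$; (b)~conditionally on these boundaries, the $2k$ terminal spins are independent with marginals within $c_2^{-d}$ of $\peq$ or $\pneq$; (c)~the two phases are almost equally likely, via a shift symmetry of $\gadget$.

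\textbf{Step 1.} Let $G$ be the event that $\bsigmakd$ has no cross contour and, for each terminal $(x,0)$, no simple contour of length $\ge d/8$ either containing an edge dual to an edge of $\gadget$ with both endpoints in $U_{x,d/2}$, or separating $U_{x,d/4}$ from the upper boundary of $\gadget$. Lemmas~\ref{lem:longsimple_unlikely}, \ref{lem:newcontour} and \ref{lem:longcross_unlikely}, together with a union bound over the $2k$ terminals, yield $\Pr(\neg G)\le c_1^{-d}k$ for some $c_1>1$. Applying Lemma~\ref{lem:august8} and Corollary~\ref{lem:contours_vs_boundary} at each terminal shows that on $G$ there is a single phase $s\in\{0,1\}$ valid for every terminal, and each terminal $(x,0)$ has a $d$-boundary of parity~$s$, so the canonical boundary $\calB_x(\bsigmakd)$ of Section~\ref{sec:nearind} is defined.

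\textbf{Step 2.} Condition on $G$, the phase $s$, and the tuple $\{B_x\}=\{\calB_x(\bsigmakd)\}$. Since consecutive terminals are separated by at least $2d-1$ in the bottom row and each $\calB_x(\sigma)\subseteq U_{x,d/2}$, the sets $\{B_x\}$ are pairwise disjoint; each $B_x$ encloses a region $I_x\subseteq U_{x,d/2}$ containing $(x,0)$. Conditioning on this event fixes the spins on $\bigcup_x B_x$ (they have parity-$s$ ones by definition of phase), and by the canonicity property of $\calB_x$, re-randomising the spins inside any $I_x$ leaves $\calB_x(\bsigmakd)$ unchanged. The spatial Markov property of the Gibbs distribution therefore factorises the conditional distribution of $\bsigmakd$ across the disjoint interiors $\{I_x\}$, rendering the $2k$ terminal spins conditionally independent. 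Lemma~\ref{lem:boundary_terminals} bounds each marginal within $c_2^{-d}$ of $\peq$ or $\pneq$ according to the parity of the terminal, so a standard product-distribution estimate gives
\[
\bigl|\Pr(\bsigmakd(\terminals)=\tau\mid G,\,\text{phase }s,\,\{B_x\})-\muall^s(\tau)\bigr|\le 2k\,c_2^{-d}\quad\text{for every }\tau,
\]
and this bound is preserved upon averaging over $\{B_x\}$.

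\textbf{Step 3.} The map $\phi\colon(x,y)\mapsto(x+1,y)$ is a weight-preserving automorphism of $\gadget$, so the marginal $\Pr(\bsigmakd(x,0)=1)$ does not depend on~$x$; in particular it takes the same value at a parity-$0$ terminal as at a parity-$1$ terminal. Expanding each of these marginals as a mixture over $G$, $\neg G$ and the phases using Step~2 yields, with $p:=\Pr(\text{phase }0\mid G)-\tfrac12$,
\[
2p\,\Pr(G)\,(\peq-\pneq)=O(k\,c^{-d}).
\]
Since $\peq-\pneq$ is a positive constant (Lemma~\ref{lem:peqgreater}) and $\Pr(G)\ge\tfrac12$ for large $d$, this forces $|p|\le c_3^{-d}k$ for some $c_3>1$. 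A final triangle inequality combining the failure probability $\Pr(\neg G)\le c_1^{-d}k$ from Step~1, the near-independence error $2k\,c_2^{-d}$ from Step~2, and the phase-imbalance error $O(k\,c_3^{-d})$ from Step~3 bounds $|\Pr(\sigmarv(\terminals)=\tau)-\muall(\tau)|$ by $c^{-d}k^2$ for a suitable $c>1$. The main obstacle, I expect, is making Step~2 rigorous: the factorisation of the conditional measure into independent pieces inside each $B_x$ rests on the canonicity of $\calB_x(\sigma)$ and on the pairwise disjointness and non-crossing of the canonical boundaries, both of which need to be verified carefully from the definitions in Section~\ref{sec:nearind}.
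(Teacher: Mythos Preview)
Your outline matches the paper's argument closely, and Step~2 (once set up correctly) and Step~3 are sound. There is, however, a concrete slip in Step~1: Lemma~\ref{lem:newcontour} gives only a \emph{conditional} bound (conditioned on $\bsigmasize(B_{i,\size})$ having parity-$0$ ones), so it cannot be plugged into an unconditional union bound alongside Lemmas~\ref{lem:longsimple_unlikely} and~\ref{lem:longcross_unlikely}. The paper avoids this by taking the bad event to be simply ``$\bsigmakd$ has some contour of length at least $d/8$'' and invoking Corollary~\ref{lem:contours_vs_boundary} directly; Lemmas~\ref{lem:longsimple_unlikely} and~\ref{lem:longcross_unlikely} then give $\Pr(\text{no phase})\le 2|V(\gadget^*)|\,(c')^{-d/8}$, which is still at most $c^{-d}k^2$ after absorbing the $O(k^2d^2)$ factor. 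A related point: in Step~2 you should condition only on ``$\bsigmakd$ has phase~$s$ and $\calB_x(\bsigmakd)=B_x$ for all terminals~$(x,0)$'' rather than on your event~$G$, because $G$ constrains contours \emph{inside} each $B_x$ and would therefore spoil the Markov factorisation you need in order to invoke Lemma~\ref{lem:boundary_terminals}. The role of~$G$ should only be to upper-bound $\Pr(\text{no phase})\le\Pr(\neg G)$ in the ``bad'' branch of the decomposition.

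On Step~3: the paper simply asserts ``since the phases are equally likely''; the implicit justification is the reflection $(x,y)\mapsto(2d+1-x,y)$, which is a weight-preserving automorphism of~$\gadget$ that swaps the parities of all vertices and maps $\terminals$ to itself (interchanging $\oddterminals$ and $\eventerminals$), hence swaps phase~$0$ and phase~$1$ exactly. Your marginal-matching argument via the shift $(x,y)\mapsto(x+1,y)$ is a valid alternative and arguably more self-contained, though less direct.
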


\begin{proof}
Fix $k\geq 1$, $d$ 
a sufficiently large multiple of~$16$, 
and $\tau:\configs{\terminals}$.  
Let $c'$ be the 
minimum
value of the constant~$c$
from the Lemmas~\ref{lem:longsimple_unlikely},  \ref{lem:longcross_unlikely},
and~\ref{lem:boundary_terminals}.
  
The probability $\Pr(\sigmarv(\terminals)=\tau)$
is the sum of the following probabilities (conditioned on disjoint events)
\begin{itemize}
\item $\Pr(\sigmarv(\terminals)=\tau \mid 
\mbox{$\sigmarv$ does not have a phase}
) \Pr(\mbox{$\sigmarv$ does not have a phase})$;
\item (summed over all assignments $B_x$ of $d$-boundaries for each terminal $(x,0)$)
\begin{align*}
&\Pr(\sigmarv(\terminals)=\tau \mid
\mbox{ $\bsigmakd$ has   phase~$0$ and for all terminals $(x,0)$,
 $\calB_x(\bsigmakd) = B_x$}) \times \null\\
 &\quad \Pr(\mbox{ $\bsigmakd$ has   phase~$0$ and for all terminals $(x,0)$,
 $\calB_x(\bsigmakd) = B_x$}); \text{ and}
 \end{align*}
\item (summed over all assignments $B_x$ of $d$-boundaries for each terminal $(x,0)$)
\begin{align*}
&\Pr(\sigmarv(\terminals)=\tau \mid
\mbox{ $\bsigmakd$ has  phase~$1$ and for all terminals $(x,0)$,
 $\calB_x(\bsigmakd) = B_x$}) \times \null\\
 &\quad \Pr(\mbox{ $\bsigmakd$ has   phase~$1$ and for all terminals $(x,0)$,
 $\calB_x(\bsigmakd) = B_x$}).
 \end{align*}
\end{itemize}

By Lemmas~\ref{lem:longsimple_unlikely} and  \ref{lem:longcross_unlikely} and
Corollary~\ref{lem:contours_vs_boundary},
 the probability of the first of these
 is at most 
 $2 |V(\gadget^*)|(c')^{-d/8}$. 
 (We will use this below.)
 
Now consider an assignment $B_x$ of $d$-boundaries for each terminal $(x,0)$.
For any two terminals $(x',0)$ and $(x'',0)$,
the random variables
$\bsigmakd(x',0)$ 
are $\bsigmakd(x'',0)$ are independent, conditioned
on the fact that $\bsigmakd$ has a given phase, and
 for all terminals $(x,0)$,
 $\calB_{x}(\bsigmakd) = B_{x}$. 
 Also, by Lemma~\ref{lem:boundary_terminals}, for all $s\in\{0,1\}$,
 \begin{itemize}
\item
For every  parity-$s$ terminal $(x',0)$, 
\begin{align*}
\Bigl|\Pr\big(\bsigmakd(x',0)=1 &\bigm| \text{$\bsigmakd$ has   phase~$s$, and}\\ 
&\qquad\text{for all terminals $(x,0)$,
 $ \calB_x(\bsigmakd) = B_x$}\big) - \peq\Bigr| \leq (c')^{-d}.
\end{align*} 
 \item 
 For every  parity-$(1{\oplus} s)$ terminal $(x',0)$,
\begin{align*}\Bigl|\Pr\big(\bsigmakd(x',0)=1 &\bigm| \text{$\bsigmakd$ has   phase~$s$, and}\\
&\qquad \text{for all terminals $(x,0)$,
 $ \calB_x(\bsigmakd) = B_x$}\big) - \pneq\Bigr| \leq (c')^{-d}.
\end{align*} 
  \end{itemize}
 Now, for any probabilities
 $a_1,b_1,\dots,a_k,b_k$, we have
 \[\left|\prod_{i=1}^k a_i-\prod_{i=1}^k b_i\right|=\left|\sum_{j=1}^k a_1\dots a_{j-1}(a_j-b_j)b_{j+1}\dots b_k\right|\leq\sum_{i=1}^k \left|a_i-b_i\right|, \]
so if we fix a given phase, and
$\tau$ assigns spin~$1$ to~$k'$ terminals whose parity agrees with that
phase, and spin~$1$ to~$k''$ terminals whose parity disagrees with that phase
then, letting
$$\hat p = {(\peq)}^{k'} {(1-\peq)}^{k-k'} {(\pneq)}^{k''} {(1-\pneq)}^{k-k''}
,$$
we have
\begin{align*}
 & \hat p - 2 k (c')^{-d} \\
 &\quad\leq
 \Pr(\sigmarv(\terminals)=\tau \mid
\mbox{ $\bsigmakd$ has the given phase and for all terminals $(x,0)$,
 $\calB_x(\bsigmakd) = B_x$})\\&\quad
 \leq  \hat p + 2k (c')^{-d}.
 \end{align*}
 
 So summing up, 
 $\Pr(\sigmarv(\terminals)=\tau \mid
\mbox{ $\bsigmakd$ has the given phase})$ is between
 $  \hat p - 2 k (c')^{-d}$ 
 and
 $ \hat p + 2k (c')^{-d}$
 so, since the phases are equally likely,
 $$ \mu_{k,d}(\tau)-2k (c')^{-d} \leq
 \Pr(\sigmarv(\terminals)=\tau
 \mid \mbox{ $\bsigmakd$ has a phase}) \leq \mu_{k,d}(\tau)+ 2k (c')^{-d}.$$
 
Finally, since the probability that $\sigmarv$ has no phase
is at most 
$2 |V(\gadget^*)|(c')^{-d/8}$, 
as we observed above,
  $$ \mu_{k,d}(\tau)-2(k+|V(\gadget^*)|) (c')^{-d/8} \leq
 \Pr(\sigmarv(\terminals)=\tau) \leq \mu_{k,d}(\tau)+ 2(k+|V(\gadget^*)|) (c')^{-d/8}.$$ 
The proposition follows by choosing~$c$ to be sufficiently small with respect to~$c'$.
\end{proof}

Ideally, one would like to extend Proposition~\ref{prop:gadget} to a range of
parameter values that reach right to the edge of the non-uniqueness region
(i.e., the region in which the model on the infinite 2-dimensional square
grid has multiple Gibbs measures).  Unfortunately, it is not known rigorously 
whether the hard-core (independent set) model exhibits a sharp phase transition, 
let alone where that transition is located.  So a sharp version of 
Proposition~\ref{prop:gadget} would be a tougher proposition.
   
   \section{ Proof of Theorem~\ref{thm:main}} 
   \label{sec:proofmain}
  
  \subsection{Efficiently approximable reals}

\begin{lemma}
\label{lem:approxpeq} 
Suppose that $\beta$, $\gamma$ and $\lambda$ are efficiently approximable
reals satisfying~\eqref{eq:spincond}. 
Then $\peq$ and  $\pneq$  are efficiently approximable reals. \end{lemma}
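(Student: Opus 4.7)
The plan is to exploit the exponential convergence $\peq(m)\to\peq$ (and $\pneq(m)\to\pneq$) already essentially contained in the proof of Lemma~\ref{lem:boundary_terminals}, together with a direct transfer-matrix computation of $\peq(m)$ and $\pneq(m)$ on the gadget $C_m$, whose $(2m)\times(m{+}1)$ structure admits such a computation in time $2^{O(m)}$. The main technical obstacle I anticipate is the bookkeeping in extracting a clean quantitative rate: the proof of Lemma~\ref{lem:boundary_terminals} is phrased in terms of conditional probabilities given a gadget phase and a fixed canonical boundary rather than in terms of the bare quantities $\peq(m)$ and $\pneq(m)$, so the contour argument has to be reassembled into a single self-contained bound.

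First, I would distil such a rate. By the FKG-based Lemma~\ref{lem:vertex_monotonicity}, $m\mapsto\peq(m)$ is weakly decreasing and $m\mapsto\pneq(m)$ is weakly increasing, so the $\limsup$s defining $\peq$ and $\pneq$ are genuine limits. The argument in the proof of Lemma~\ref{lem:boundary_terminals} bounds $\peq(d/4)\le\peq(w)+4|U_{0,d/4}|(c')^{-d/16}$ for every $w\ge 8d$, where $c'>1$ is the common constant from Lemmas~\ref{lem:longsimple_unlikelycond} and~\ref{lem:newcontour}. Letting $w\to\infty$ along the monotone sequence, and doing the analogous computation for $\pneq$, yields
$$0 \le \peq(m)-\peq \le O(m^2)(c')^{-m/4} \quad\text{and}\quad 0 \le \pneq-\pneq(m) \le O(m^2)(c')^{-m/4}$$
for all sufficiently large multiples $m$ of $16$. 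Hence, given tolerance $\epsilon$, taking $m=\Theta(\log(1/\epsilon))$ makes both of these at most $\epsilon/4$.

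Second, I would compute $\peq(m)$ and $\pneq(m)$ by a transfer-matrix calculation on $C_m$. A single column contains $m+1$ vertices, so the transfer matrix $M$ between consecutive columns has dimensions $2^{m+1}\times 2^{m+1}$, with entries that are monomials in $\beta,\gamma,\lambda$ of bounded degree, and the unconditioned partition function is $\operatorname{tr}(M^{2m})$. The conditioning event defining $\peq(m)$ fixes the spins on the keyhole $B_{0,m}$, which in $C_m$ consists of column $x=m$ (since $-m\equiv m\pmod{2m}$) together with the top row $y=m$; this is implemented by restricting the outer trace to a single diagonal entry (forcing the column $m$ configuration) and replacing $M$ by its restriction to rows and columns whose top-row spin matches the prescription (which alternates parity from column to column). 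The resulting ratio gives $\peq(m)$, computable in time polynomial in $2^m=(1/\epsilon)^{O(1)}$, and the treatment of $\pneq(m)$ is identical.

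Third, since $\beta,\gamma,\lambda$ are only efficiently approximable, I would invoke their FPRASes at a sub-tolerance $\epsilon''=\operatorname{poly}(\epsilon)$ to obtain rational $\hatbeta,\hatgamma,\hatlambda$ and run the transfer-matrix computation with these in place of the exact values. Each entry of the approximate matrix $\hatM$ agrees with that of $M$ up to a factor $1\pm O(\epsilon'')$, so after $O(m)=O(\log(1/\epsilon))$ matrix multiplications each trace is computed within relative error $(1+O(\epsilon''))^{O(m)} = 1+\epsilon/8$ for a suitably small $\epsilon''$. Dividing the two approximate traces and combining with the step-one error yields an additive approximation to $\peq$ (and $\pneq$) within $\epsilon$. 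The concrete lower bound on $\pneq$ given by the proof of Lemma~\ref{lem:little} (together with Lemma~\ref{lem:peqgreater}, which bounds $\peq>\tfrac12>\pneq$) bounds both quantities away from $0$ and $1$, so additive error $\epsilon$ converts into multiplicative error of the form $e^{\pm O(\epsilon)}$, which, after rescaling $\epsilon$ by a constant, meets the FPRAS requirement.
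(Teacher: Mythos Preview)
Your proposal is correct and takes essentially the same approach as the paper: extract the exponential rate $|\peq(m)-\peq|,\,|\pneq(m)-\pneq|\le c^{-\Theta(m)}$ from the monotonicity (Lemma~\ref{lem:vertex_monotonicity}) and contour analysis underlying Lemma~\ref{lem:boundary_terminals}, compute $\peq(m),\pneq(m)$ on $C_m$ in time $2^{O(m)}=\mathrm{poly}(1/\epsilon)$ using FPRAS-approximated parameters $\hatbeta,\hatgamma,\hatlambda$, and convert additive to multiplicative error via the bounds of Lemmas~\ref{lem:little} and~\ref{lem:peqgreater}. The only cosmetic difference is that the paper invokes the treewidth algorithm of~\cite{YZ12} (noting $C_m\setminus B_{0,m}$ has treewidth~$m$) in place of your explicit column transfer matrix, and cites Lemma~\ref{lem:boundary_terminals} directly for the rate rather than re-deriving it.
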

\begin{proof} 
Recall that $\pneq>0$ (Lemma \ref{lem:little}).
Let $q$ be a 
multiple of~$16$
greater than $(2+\log_2(1/\pneq))/\log(c)$ where
$c$ is the constant given by Lemma \ref{lem:boundary_terminals}.
Consider the following algorithm.
\begin{itemize}
\item Input an error parameter $0<\epsilon<1/2$.
\item Set $m=q\lceil \log(\epsilon^{-1})\rceil$.

\item Compute rational approximations $\hat{\beta},\hat{\gamma},\hat{\lambda}$ satisfying
\begin{align*}
\beta e^{-\epsilon/8|E(C_m)|}&\leq\hat{\beta}\leq \beta e^{\epsilon/8|E(C_m)|},\\
\gamma e^{-\epsilon/8|E(C_m)|}&\leq\hat{\gamma}\leq \gamma e^{\epsilon/8|E(C_m)|},\text{ and}\\
\lambda e^{-\epsilon/8|V(C_m)|}&\leq\hat{\lambda}\leq \lambda e^{\epsilon/8|V(C_m)|}.
\end{align*}

\item Using the algorithm of \cite[Theorem 2.8]{YZ12}, compute
\begin{align*}
Z &= \sum_{\sigma}\hat{\beta}^{b(\sigma)}\hat{\gamma}^{c(\sigma)}\hat{\lambda}^{\ell(\sigma)},\\
Z' &= \sum_{\sigma: \sigma(0,0)=1}\hat{\beta}^{b(\sigma)}\hat{\gamma}^{c(\sigma)}\hat{\lambda}^{\ell(\sigma)},\text{ and}\\
Z'' &= \sum_{\sigma: \sigma(1,0)=1}\hat{\beta}^{b(\sigma)}\hat{\gamma}^{c(\sigma)}\hat{\lambda}^{\ell(\sigma)},
\end{align*}
where the sums range over configurations $\sigma$ of $C_m$
such that $\sigma(B_{0,m})$ has parity-$0$ ones.
\item Output $Z'/Z$ as the approximation to $\peq$, and $Z''/Z$ as the
 approximation to $\pneq$.
\end{itemize}
For the computation of $Z$, $Z'$, and $Z''$ we use the fact that the grid
graph $C_m\setminus B_{0,m}$ has treewidth~$m$  \cite[Corollary 89]{Bodlaender}.
We also use the fact that  its tree decomposition is easy  to compute.
So
this algorithm runs in time bounded by a polynomial in $1/\epsilon$.
We will show that the algorithm is an FPRAS for $\peq$ and $\pneq$.
Define
\begin{align*}
W &= \sum_{\sigma}\beta^{b(\sigma)}\gamma^{c(\sigma)}\lambda^{\ell(\sigma)},\\
W' &= \sum_{\sigma: \sigma(0,0)=1}\beta^{b(\sigma)}\gamma^{c(\sigma)}\lambda^{\ell(\sigma)},\text{ and}\\
W'' &= \sum_{\sigma: \sigma(1,0)=1}\beta^{b(\sigma)}\gamma^{c(\sigma)}\lambda^{\ell(\sigma)},
\end{align*}
where the sums range over configurations $\sigma$ of $C_m$
such that $\sigma(B_{0,m})$ has parity-$0$ ones.

For any $\sigma$ we have
\[\beta^{b(\sigma)}\gamma^{c(\sigma)}\lambda^{\ell(\sigma)}e^{-\epsilon/4}
\leq
\hat{\beta}^{b(\sigma)}\hat{\gamma}^{c(\sigma)}\hat{\lambda}^{\ell(\sigma)}
\leq
\beta^{b(\sigma)}\gamma^{c(\sigma)}\lambda^{\ell(\sigma)}e^{\epsilon/4}.
\]
This implies $e^{-\epsilon/4}W\leq Z\leq e^{\epsilon/4}W$ and
similarly for $Z'$ and $Z''$, and therefore $e^{-\epsilon/2}W'/W\leq
Z'/Z\leq e^{\epsilon/2}W'/W$ and $e^{-\epsilon/2}W''/W\leq Z''/Z\leq
e^{\epsilon/2}W''/W$.  We will show
\begin{align}
\peq &\leq W'/W \leq \peq e^{\epsilon/2},\text{ and} \label{eq:peq_approx}\\
e^{-\epsilon/2}\pneq&\leq W''/W \leq \pneq.\label{eq:pneq_approx}
\end{align}
The quotients $W'/W$ and $W''/W$ are just the probabilities that an even or odd
terminal gets assigned $1$ in a random configuration of $C_m$,
conditioned on a certain $2m$-boundary.  By Lemma
\ref{lem:vertex_monotonicity} we have $\peq\leq W'/W$ and $W''/W\leq
\pneq$ for any $m$, establishing the first inequality in
\eqref{eq:peq_approx} and the second inequality in
\eqref{eq:pneq_approx}.

By  
Lemma \ref{lem:boundary_terminals}, there
exists $c>1$ such that  
$$ 
W'/W\leq \peq+c^{-q \log(\epsilon^{-1})}
= \peq(1+\epsilon^{q\log(c)}/\peq).$$
Since
$$\epsilon^{(q \log(c)-1)}
\leq (1/2)^{(q \log(c)-1)}
\leq \pneq/2,$$
which is  less than $\peq$ by Lemma~\ref{lem:peqgreater},
we have
$$W'/W \leq \peq (1+\epsilon) \leq e^\epsilon.$$
This establishes \eqref{eq:peq_approx}.  Similarly, by Lemma
\ref{lem:boundary_terminals},
\begin{align*}
W''/W&\geq \pneq-c^{-q \log(\epsilon^{-1})}\\
&\geq \pneq(1-\epsilon^{q\log(c)}/\pneq)\\
&\geq \pneq(1-\epsilon/2)\\
&\geq \pneq e^{-\epsilon}.
\end{align*}
This establishes \eqref{eq:pneq_approx}.
\end{proof}
   
Lemma~\ref{lem:approxpeq} gives us a way to obtain multiplicative approximations
$\hatpeq$ and $\hatpneq$ of the real numbers~$\peq$ and~$\pneq$.
When we use these approximations, we will need to know
that $1-\hatpeq$ and $1-\hatpneq$ are also good multiplicative approximations
to~$1-\peq$ and~$1-\pneq$, respectively. As we show below, this follows from the fact that $\peq$ and $\pneq$ are in $(0,1)$ (which follows from Lemma~\ref{lem:little} and Lemma~\ref{lem:peqgreater}).
The following lemma gives us what we need. 
The reason for introducing the rational~$p'$ in the statement of the lemma is
that, since it is rational, it can be hard-wired into any algorithms (whereas a real number can't be).
\begin{lemma}
\label{lem:technical}
Suppose that $p\in(0,1)$ is an efficiently approximable real number.
Let $p'$ be a positive rational with $p<p'<1$.
For any $\delta\in(0,1)$,
and any real number $\hatp$ satisfying 
$e^{- \delta(1-p')/2} \hatp \leq p \leq e^{\delta(1-p')/2} \hatp$, we have
$e^{- \delta} (1-p) \leq 1-\hatp \leq e^\delta (1- p)$. 
\end{lemma}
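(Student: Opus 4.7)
The plan is to reduce the claim to a single clean inequality in $p$, $p'$, and~$\delta$, and then verify that inequality by elementary calculus.

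I would start by introducing $\eta = \delta(1-p')/2 \in (0, 1/2)$ and rewriting the hypothesis in the symmetric form $pe^{-\eta} \leq \hat p \leq pe^\eta$. Since $e^\eta + e^{-\eta} \geq 2$ gives $1 - e^{-\eta} \leq e^\eta - 1$, both directions combine into the single bound $|p - \hat p| \leq p(e^\eta - 1)$. Writing $(1-\hat p)/(1-p) = 1 + (p - \hat p)/(1-p)$ and setting $\xi := p(e^\eta - 1)/(1-p)$, we have $(1-\hat p)/(1-p) \in [1 - \xi, 1 + \xi]$. So it suffices to prove $\xi \leq 1 - e^{-\delta}$: this gives the inclusion $[1-\xi, 1+\xi] \subseteq [e^{-\delta}, 2 - e^{-\delta}] \subseteq [e^{-\delta}, e^\delta]$, where the last step uses $e^\delta + e^{-\delta} \geq 2$ once more.

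To bound $\xi$, I would first replace $p$ by $p'$. Since $x \mapsto x/(1-x)$ is increasing on $(0,1)$ and $p < p'$, we have $p/(1-p) \leq p'/(1-p')$. Next, the standard inequality $e^\eta - 1 \leq \eta e^\eta$ (from $e^\eta - 1 = \int_0^\eta e^t\,dt \leq \eta e^\eta$) yields
\[
\xi \;\leq\; \frac{p'}{1-p'} \cdot \eta\,e^\eta \;=\; \frac{p'\delta}{2}\cdot e^{\delta(1-p')/2}.
\]
The key remaining step is to show $p' \, e^{\delta(1-p')/2} \leq 1$. Using $\delta \leq 1$ gives $e^{\delta(1-p')/2} \leq e^{(1-p')/2} \leq e^{1-p'}$. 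The classical inequality $\ln p' \leq p' - 1$ (concavity of $\ln$ at~$1$) rearranges to $e^{1-p'} \leq 1/p'$, so indeed $p' \, e^{\delta(1-p')/2} \leq 1$, giving $\xi \leq \delta/2$.

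Finally, I would verify $\delta/2 \leq 1 - e^{-\delta}$ for $\delta \in (0, 1]$. Truncating the alternating Taylor series gives $e^{-\delta} \leq 1 - \delta + \delta^2/2$; since $\delta \leq 1$ we have $\delta^2/2 \leq \delta/2$, and hence $e^{-\delta} \leq 1 - \delta/2$, as required. Chaining the bounds yields $\xi \leq \delta/2 \leq 1 - e^{-\delta}$, which completes the argument. There is no genuine obstacle here, just a calibration issue: one must pick intermediate inequalities tight enough that the $e^\eta$ factor introduced by $e^\eta - 1 \leq \eta e^\eta$ gets absorbed cleanly by $p' e^{1-p'} \leq 1$, rather than losing a constant factor that would spoil the bound near $\delta = 1$.
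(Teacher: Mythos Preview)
Your proof is correct and follows essentially the same skeleton as the paper's: convert the multiplicative hypothesis into an additive bound on $|p-\hat p|$, divide by $1-p$, exploit $1-p>1-p'$ to land at $(1-\hat p)/(1-p)\in[1-\delta/2,\,1+\delta/2]$, and finish with $1\pm\delta/2\in[e^{-\delta},e^{\delta}]$. The paper reaches the same intermediate bound more directly by observing $|p-\hat p|\le\delta'=\delta(1-p')/2$ straight from $e^{-\delta'}\ge 1-\delta'$ (using $p,\hat p\le 1$), avoiding your detour through $e^\eta-1\le\eta e^\eta$ and $p'e^{1-p'}\le 1$.
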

\begin{proof}
Let $\delta' = \delta(1-p')/2$.
Since $\hatp \geq e^{-\delta'} p \geq p(1-\delta') \geq p-\delta'$
and similarly $p\geq \hatp - \delta'$,
we have
$$(1-p)\left(1-\frac{\delta (1-p')}{2(1-p)}\right)=
(1-p)-\delta' \leq 1-\hatp \leq (1-p) + \delta'=(1-p)\left(1+\frac{\delta(1-p')}{2(1-p)}\right).
$$
Thus,
$$(1-p)(1- \delta/2) \leq 1-\hatp \leq (1-p)(1+\delta/2),$$ which suffices.
\end{proof}

The following problem is NP-complete~\cite{gjs}.
\prob{\PlanarCubicIS.}
{A   planar cubic graph $G$ and a positive integer~$h$.}
{``Yes'', if $G$ contains an independent set of size $h$, and ``No'', otherwise.}
Suppose that $\beta$, $\gamma$ and $\lambda$ are efficiently 
approximable reals satisfying \eqref{eq:spincond}. 
We will give a randomised polynomial-time algorithm for \PlanarCubicIS,
using as an oracle, an FPRAS  for $\PlanarTwoSpin(\beta,\gamma,\lambda)$.
The oracle will be used to  
approximate $Z_{1,\tilde\gamma,\tilde\lambda}(G)$,
for some suitably-defined parameters $\tilde\gamma$ and $\tilde\lambda$,
where $\tilde\gamma$ is exponentially small in $|V(G)|$ and $\tilde\lambda$ is
exponentially large.
From this, it will be easy to determine whether~$G$ has an independent set of size~$h$.   
\begin{lemma}
\label{lem:gadget}
Suppose that $\beta$, $\gamma$ and $\lambda$ are efficiently approximable reals satisfying~\eqref{eq:spincond}.
There is a polynomial-time randomised  algorithm that, given a planar cubic graph~$G$ with
$|V(G)|$ sufficiently large, 
outputs 
planar graphs~$J$ and~$J'$ with maximum degree at most~$4$
and randomised approximation schemes for 
positive reals $K$, $\tilde\gamma$ and $\tilde\lambda$.
The running time of each of these approximation schemes  is bounded
from above by a polynomial in~$|V(G)|$ and the desired accuracy parameter~$\epsilon$.
With probability at least $14/15$,  
the parameters satisfy
$\tilde\lambda \geq 4^{|V(G)|}$ and $\tilde\gamma\leq {\tilde\lambda}^{-|V(G)|}$
and
\begin{equation}
\label{eq:lem15}
e^{-1/4} Z_{1,\tilde\gamma,\tilde\lambda}(G)
\leq K \frac{Z_{\beta,\gamma,\lambda}(J')}{Z_{\beta,\gamma,\lambda}(J)}
\leq e^{1/4} Z_{1,\tilde\gamma,\tilde\lambda}(G).
\end{equation} 
\end{lemma}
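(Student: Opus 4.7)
The plan is to build $J$ and $J'$ by replacing each vertex $v$ of $G$ with a copy $\gadget_v$ of the gadget from the previous section, taking $k=3N$ (so each $\gadget_v$ has $3N$ parity-$0$ terminals) and $d$ a sufficiently large multiple of~$16$; here $N$ and $d$ will be chosen later as polynomials in $n=|V(G)|$. For each edge $uv\in E(G)$, I would add $N$ parallel connection edges pairing a designated block of $N$ parity-$0$ terminals of $\gadget_v$ with a corresponding block of parity-$0$ terminals of $\gadget_u$. Because $G$ is cubic and planar, the three blocks inside each $\gadget_v$ can be placed in the cyclic order of the neighbours of $v$ inherited from a planar embedding of $G$, so the resulting graph is planar; each parity-$0$ terminal has degree $3$ inside its gadget and gains exactly one connection edge, so the maximum degree stays at~$4$. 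Let $J$ be the disjoint union of $n$ copies of $\gadget$ (serving as a normalising reference), and let $J'$ be $J$ together with all the connection edges.

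To analyse $Z_{\beta,\gamma,\lambda}(J')$, group configurations by their phase assignment $\phi\colon V(G)\to\{0,1\}$ (the phase of each $\gadget_v$). By Proposition~\ref{prop:gadget}, together with the conditional near-independence of terminal spins given phase that is established in the course of its proof via Lemma~\ref{lem:boundary_terminals}, the contribution from configurations consistent with a given $\phi$ is approximately $\prod_v\bigl(\tfrac12 Z_{\beta,\gamma,\lambda}(\gadget)\bigr)\cdot\prod_{uv\in E(G)}F(\phi(u),\phi(v))^N$, where
\[
F(a,b)=p_ap_b\,\gamma+p_a(1-p_b)+(1-p_a)p_b+(1-p_a)(1-p_b)\,\beta,
\]
with $p_0=\peq$ and $p_1=\pneq$. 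Since $Z_{\beta,\gamma,\lambda}(J)=Z_{\beta,\gamma,\lambda}(\gadget)^n$, a standard rearrangement using $|E(G)|=3n/2$, $\ell(\phi)=|\phi^{-1}(1)|$, and $c(\phi)=$ the number of edges of $G$ with both endpoints in $\phi^{-1}(1)$ yields
\[
\frac{Z_{\beta,\gamma,\lambda}(J')}{Z_{\beta,\gamma,\lambda}(J)}\approx 2^{-n}F(0,0)^{N|E(G)|}\sum_{\phi}\left(\frac{F(0,1)}{F(0,0)}\right)^{3N\ell(\phi)}\left(\frac{F(0,0)F(1,1)}{F(0,1)^2}\right)^{Nc(\phi)}.
\]
Hence setting $\tilde\lambda=(F(0,1)/F(0,0))^{3N}$, $\tilde\gamma=(F(0,0)F(1,1)/F(0,1)^2)^N$, and $K=2^nF(0,0)^{-N|E(G)|}$ turns the right-hand side into $K^{-1}Z_{1,\tilde\gamma,\tilde\lambda}(G)$, which is exactly what \eqref{eq:lem15} demands up to the approximation error.

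Two inequalities make the parameter choice work. First, a direct calculation gives $F(0,1)-F(0,0)=(\peq-\pneq)\bigl(\peq(2-\gamma)+(1-\peq)\beta-1\bigr)$, which is positive because $\peq>\pneq$ by Lemma~\ref{lem:peqgreater} and because the convex combination exceeds $\min(2-\gamma,\beta)>1$ under \eqref{eq:spincond}; hence $F(0,1)/F(0,0)>1$. Second, writing $F$ as a $2{\times}2$ matrix $P^\top W P$, where the rows of $P$ are $(1-p_a,p_a)$ and $W$ has $\beta,\gamma$ on the diagonal and $1$s off-diagonal, gives $\det F=(\peq-\pneq)^2(\beta\gamma-1)<0$ from $\beta\gamma<1$ in \eqref{eq:spincond}; this is exactly $F(0,0)F(1,1)-F(0,1)^2<0$, so $F(0,0)F(1,1)/F(0,1)^2<1$. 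Consequently $N$ can be chosen as a polynomial in $n$ so that $\tilde\lambda\geq 4^n$ and $\tilde\gamma\leq\tilde\lambda^{-n}$ simultaneously, while $d$ is chosen as a further polynomial in $n$ so that the error $c^{-d}k^2$ from Proposition~\ref{prop:gadget}, when accumulated across the $n$ gadgets and $2^n$ phase assignments, perturbs the ratio $Z(J')/Z(J)$ by at most a factor of $e^{\pm1/4}$. Since $\peq$ and $\pneq$ are efficiently approximable by Lemma~\ref{lem:approxpeq}, and Lemma~\ref{lem:technical} lets us approximate $1-\peq$ and $1-\pneq$ with comparable quality, each of $F(0,0)$, $F(0,1)$, $F(1,1)$, and hence $K$, $\tilde\gamma$, $\tilde\lambda$, is efficiently approximable, with a union bound over the underlying sub-approximations absorbing the $1/15$ failure budget. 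The main obstacle will be exactly this error accounting: the pointwise bound of Proposition~\ref{prop:gadget} must be marshalled across $n$ gadgets and $2^n$ phase assignments without the $c^{-d}k^2$ term being swamped by the combinatorial factors, which is what pins down the required polynomial growth of $d$ relative to $k$ and $n$.
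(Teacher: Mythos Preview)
Your construction omits a crucial ingredient, and as a result the two parameter constraints $\tilde\lambda\geq 4^{n}$ and $\tilde\gamma\leq\tilde\lambda^{-n}$ cannot be met simultaneously. With your single knob $N$, you have
\[
\tilde\lambda=a^{3N},\qquad \tilde\gamma=b^{N},
\]
where $a=F(0,1)/F(0,0)>1$ and $b=F(0,0)F(1,1)/F(0,1)^2\in(0,1)$ are \emph{fixed constants} depending only on $\beta,\gamma,\lambda$. The requirement $\tilde\gamma\leq\tilde\lambda^{-n}$ reads $(b\cdot a^{3n})^{N}\leq 1$, and since $a>1$ is fixed and $b>0$ is fixed, for all sufficiently large~$n$ we have $b\cdot a^{3n}>1$, so no positive integer~$N$ can satisfy the inequality. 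The edge gadgetry alone couples $\tilde\gamma$ and $\tilde\lambda$ too rigidly.

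The paper resolves this by introducing a \emph{second} free parameter. In addition to the $k_2$-fold matchings between parity-$0$ terminals of adjacent gadgets (your $N$), it attaches $k_1$ pendant vertices (``bristles'') to parity-$1$ terminals of every gadget. Summing over bristle spins contributes a vertex-weight factor $W_{1\oplus s}^{k_1}$ per gadget in phase~$s$, where $W=P\bigl(\begin{smallmatrix}\beta&1\\1&\gamma\end{smallmatrix}\bigr)\bigl(\begin{smallmatrix}1\\\lambda\end{smallmatrix}\bigr)$. One then gets
\[
\tilde\gamma=\Bigl(\tfrac{M_{0,0}M_{1,1}}{M_{0,1}^2}\Bigr)^{k_2},\qquad
\tilde\lambda=\Bigl(\tfrac{W_1}{W_0}\Bigr)^{k_1}\Bigl(\tfrac{M_{0,1}}{M_{1,1}}\Bigr)^{3k_2},
\]
so $k_2$ is first chosen (polynomial in~$n$) to force $\tilde\gamma\leq 5^{-n^2-n}$, and then $k_1$ is chosen independently to place $\tilde\lambda$ in the window $[4^n,5^n]$. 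Without the bristle mechanism or some equivalent second degree of freedom, the argument cannot close. (A minor aside: your justification that $F(0,1)>F(0,0)$ via ``the convex combination exceeds $\min(2-\gamma,\beta)>1$'' fails when $\beta=1$, e.g.\ in the hard-core case; the positivity is still true, since $(\beta-1)(1-\peq)+(1-\gamma)\peq>0$, but needs the correct argument.)
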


\begin{proof}
 
Let $G=(V,E)$ be a planar cubic graph  and let $n$ denote $|V|$.

The algorithm for constructing~$J$ and~$J'$ uses a quantity $\delta\in(0,1)$.
It will be important for the proof that $\delta$ is sufficiently small.
Rather than giving a technical definition here, we introduce 
upper bounds on~$\delta$ in natural places throughout the proof. The reader
can verify that each of these upper bounds is at least the inverse of a polynomial in~$n$
(so the algorithm runs in polynomial time).
 
The first step is to use the given FPRASes for $\beta$, $\gamma$ and $\lambda$, and the FPRASes
for $\peq$ and $\pneq$ from Lemma~\ref{lem:approxpeq} to compute
values $\hatbeta$, $\hatgamma$, $\hatlambda$, $\hatpeq$ and $\hatpneq$ satisfying 
\begin{align}
\begin{split}
\label{goodapprox}
e^{-\delta/3} \beta &\leq \hatbeta \leq e^{\delta/3} \beta,\\
e^{-\delta/3} \gamma &\leq \hatgamma \leq e^{\delta/3} \gamma,\\
e^{-\delta/3} \lambda &\leq \hatlambda \leq e^{\delta/3} \lambda,\\
e^{-\delta/3} \peq &\leq \hatpeq \leq e^{\delta/3} \peq,\\
e^{-\delta/3} \pneq &\leq \hatpneq \leq e^{\delta/3} \pneq,\\
e^{-\delta/3} (1-\peq) &\leq  1-\hatpeq \leq e^{\delta/3} (1-\peq),\\
e^{-\delta/3} (1-\pneq) &\leq 1-\hatpneq \leq e^{\delta/3} (1-\pneq),\text{ and}\\
\hatbeta &\geq 1.\\ 
\end{split}
\end{align}
The first five  lines   in~\eqref{goodapprox} follow directly from the definition of FPRAS
in Section~\ref{sec:FPRAS}.  
The next two  lines follow from Lemma~\ref{lem:technical}, using the fact
that $\peq$ and $\pneq$ are in $(0,1)$, as argued just before Lemma~\ref{lem:technical}.
Since $\beta\geq 1$ by \eqref{eq:spincond},
we can ensure that $\hatbeta\geq 1$ 
by taking $\hatbeta$ to be the maximum of~$1$ and the output of the FPRAS.
For this step we adjust the failure probability of the FPRASes (as described in Section~\ref{sec:FPRAS}) so that the probability that  
Equation~\eqref{goodapprox} fails to hold is at most~$1/15$.  
Note that the running time of the FPRASes is polynomial in $1/\delta$ (even though the application
of Lemma~\ref{lem:technical} means that we have to call the FPRASes for $\peq$ and $\pneq$
with slightly smaller values~$\delta'$.).

We will show below how to 
use $G$ and these approximations to
define positive integers $k_1$, $k_2$ and $d$, which will be used in the construction of~$J$
and~$J'$.
These quantities will be bounded from above by a polynomial in~$n$.

We first show how to construct~$J$ and~$J'$, using $k_1$, $k_2$, $d$
and  
$k = \max(k_1, 3 k_2)$.  
The high-level construction is illustrated in Figure~\ref{fig:J}.
The idea is that $J$ encodes just the vertices of $G$, while $J'$
extends $J$ to account the edges of~$G$, and includes additional
structures called ``bristles''.  Note that Figure~\ref{fig:J} represents just
a fragment of~$G$, and is schematic only:  parity-0 and parity-1 
terminals are in reality interleaved, but here they are separated, for clarity. 

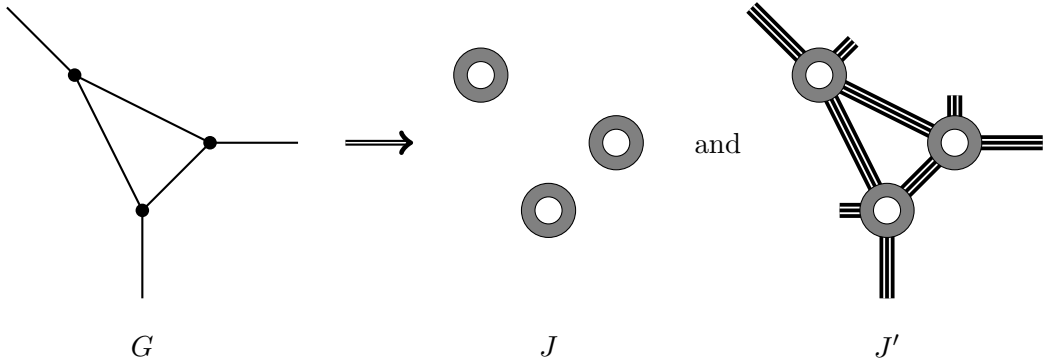
\begin{figure}[!h]
\begin{center}
\begin{tikzpicture}[scale=0.9]
\newcommand\gjexamplelines{(1,3) -- (0,4) (2,1) -- (2,-0.3) (3,2) -- (4.3,2) (1,3) -- (2,1) -- (3,2) -- (1,3)};
\newcommand\bristles{(1,3) -- (1.5,3.5) (2,1) -- (1.3,1) (3,2) -- (3,2.7)};  
\draw[thick] \gjexamplelines;
\foreach \x/\y in {1/3,2/1,3/2} \fill (\x,\y) circle (0.1cm);
\draw[double, thick, ->] (5,2) -- (6,2);
\draw[xshift=11cm, line width=0.2cm] \gjexamplelines \bristles;
\draw[white, xshift=11cm, line width=0.1cm] \gjexamplelines \bristles;
\draw[xshift=11cm, line width=0.05cm] \gjexamplelines \bristles;
\foreach \x/\y in {1/3,2/1,3/2} {
  \draw[fill=gray] (\x+6,\y) circle (0.4cm);
  \draw[fill=white] (\x+6,\y) circle (0.2cm);
  \draw[fill=gray] (\x+11,\y) circle (0.4cm);
  \draw[fill=white] (\x+11,\y) circle (0.2cm);
}
\draw (10.5,2) node {and};
\draw (2,-1) node {$G$}; \draw (8,-1) node {$J$};  \draw (13,-1) node {$J'$};
\end{tikzpicture}
\end{center}
\caption{An illustration of how $G$ is transformed into the graphs $J$
  and $J'$.  A fragment of $G$ is shown on the left. 
  The graph $J$ is a collection of copies of $\gadget$, one for each vertex of $G$.
  A fragment of $J$ is shown in the middle.
  The copies of $\gadget$ are
  shown as grey annuli.
  The corresponding fragment of $J'$ is shown on the right.
  The long stripes represent the
  sets of edges between copies of $\gadget$ in $J'$,  
  while the short stripes are edges to ``bristles'', described elsewhere.}
  \label{fig:J}
\end{figure} 

The construction of~$J$ is straightforward. 
Essentially, $J$ consists of $|V|$ copies of $\gadget$, with one copy
for every vertex in $V$. Thus,
the vertex set 
$V(J)$ is the set of ordered pairs $V(J) = V \times V(\gadget)$ and
the edge set $E(J)$ is given by $E(J)=V \times E(\gadget)$.
We will use $C[u]$ to denote the gadget corresponding to vertex $u\in V$.
Formally, $C[u]$ is the graph with vertex set 
$\{u\} \times V(\gadget)$ and edge set $\{u\} \times E(\gadget)$.
To simplify the notation, 
for $u\in V$ and $0\leq j \leq k-1$, 
let $\oddterminal[u,j]$ denote the $j$'th
 parity-$1$ terminal of~$C[u]$.
Formally, this is the vertex $(u,(4 j d + 1, 0))$ of~$J$.
Similarly, let $\eventerminal[u,j]$ denote the $j$'th  parity-$0$
terminal of~$C[u]$. Formally, this is the vertex $(u,(4 j d + 2 d, 0))$ of~$J$. Let~$T[u]$ be
the set of terminals of~$C[u]$. Let $\mu^0_u$, $\mu^1_u$ and $\mu_u$ be the distributions on
configurations $\sigma:\configs{T[u]}$ corresponding to the distributions
$\muall^0$, $\muall^1$ and $\muall$  
defined in Section~\ref{sec:terminals}.
 
We now define~$J'$.  Informally, we extend $J$ by adding, for each edge $(u,v)$ of~$G$, a
$k_2$-edge matching connecting parity-0 terminals in $C[u]$ and $C[v]$.
In addition, for each vertex~$v$ in~$G$ we add a $k_1$-edge matching from
parity-1 terminals in $C[v]$ to $k_1$ new vertices (the bristles mentioned earlier).
The rough idea behind the construction is as follows.  Each gadget $C[v]$ is in one 
of two phases, 0 or 1.  The matchings between gadgets are there to discourage adjacent 
gadgets from being both in phase~1.  The bristles are there to encourage each gadget
to be in phase~1.  By playing off these two tendencies (by adjusting $k_1$ and~$k_2$)
we effectively pick out maximum independent sets in~$G$.

To simplify the formal description of~$J'$, 
consider a planar embedding of~$G$, and divide each edge into two ``half-edges''.
Label the half edges incident at vertex~$u$ of~$G$ by
$u_0$, $u_1$ and $u_2$ in clockwise order in the plane (the choice of
starting half-edge $u_0$ being arbitrary).
The labelling just applied defines a matching $\matching$
on $\bigcup_{u\in V} \{u_0,u_1,u_2\}$ in the natural way:
for all $u,v\in V$ and $i,j\in\{0,1,2\}$, we match $u_i$ and 
$v_j$ (i.e., include $(u_i,v_j)$ in~$\matching$) iff $u_i$ and~$v_j$ 
are half-edges constituting a single edge of~$G$.
The vertex set $V(J')$ consists of $V(J)$, together with  
a set of $n k_1$ new vertices, called ``bristles''.
Formally, 
$V(J') = V(J) \cup \{ (u,j) \mid u\in V, 0\leq j \leq k_1-1\}$.
Finally, the edge set of $J'$ consists of $E(J)$, together with
new edges connecting the bristles to the  parity-$1$ terminals
of the gadgets, and new edges matching the  parity-$0$ terminals of
the gadgets (guided by the matching~$\matching$).
The edges connecting the bristles to  parity-$1$ terminals of the gadgets
are  those in the set
$$E_B = \{ ( (u,j), \oddterminal[u,j] ), u \in V, 0 \leq j \leq k_1-1 \}.$$
It is more complicated  to describe the edges matching the  parity-$0$ terminals of the gadgets.
The idea (see Figure~\ref{fig:Jprime})
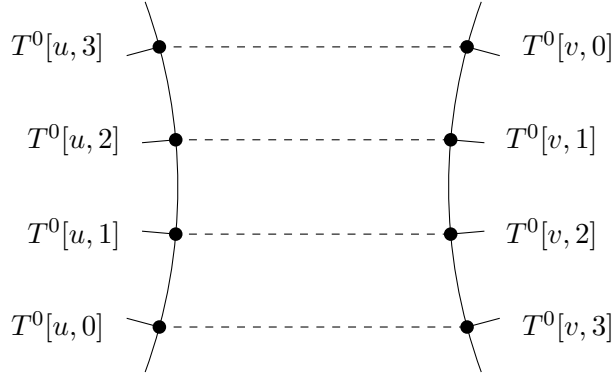
\begin{figure}[!h]
\begin{center}
\begin{tikzpicture}[scale = 0.9]
\useasboundingbox (5,-3) rectangle (15,3);
\draw (canvas polar cs:angle=-20,radius=8cm) arc (-20:20:8cm);
\draw (20,0) + (canvas polar cs:angle=200,radius=8cm) arc (200:160:8cm);
\foreach \t/\othert in {0/3,1/2,2/1,3/0} {
  \pgfmathparse{10*(\t-1.5)} \let\tangle\pgfmathresult
  \draw[dashed] (20,0) + (canvas polar cs:angle=180+\tangle,radius=8cm) --
    (canvas polar cs:angle=-\tangle,radius=8cm);
  \draw (20,0) ++ (canvas polar cs:angle=180+\tangle,radius=8cm) --
    + (canvas polar cs:angle=\tangle,radius=0.5cm);
  \draw (canvas polar cs:angle=-\tangle,radius=8cm) --
    + (canvas polar cs:angle=180-\tangle,radius=0.5cm);
  \draw (canvas polar cs:angle=-\tangle,radius=8cm) + (-1.5,0) node {$T^0[u,\othert$]};
  \draw (21.5,0) + (canvas polar cs:angle=180+\tangle,radius=8cm) node {$T^0[v,\t]$};
  \fill (canvas polar cs:angle=-\tangle,radius=8cm) circle (0.1cm);
  \fill (20,0) + (canvas polar cs:angle=180+\tangle,radius=8cm) circle (0.1cm);
}  

\end{tikzpicture}
\end{center}
\caption{Terminals of $u$ are matched to terminals in $v$, reversing
  the order.}
  \label{fig:Jprime}\end{figure} that if $u_a$ is matched to $v_b$ in $\matching$ 
(where $a\in \{0,1,2\}$ and $b \in \{0,1,2\}$) then
the  parity-$0$ terminals
$\eventerminal[u, a k_2 ],\ldots,\eventerminal[u, a k_2 + k_2-1]$
get matched to the  parity-$0$ terminals
$\eventerminal[u, b k_2],\ldots,\eventerminal[u, b k_2 + k_2-1]$.
However, there is a further complication: To ensure that $J'$ is planar
we must ensure that one of these sequences of terminals is matched in clockwise order, and the
other in anti-clockwise order. 
Thus, let 
$$E_\matching = \{ (\eventerminal[u, a k_2 + j], \eventerminal[v, b k_2 + k_2-1-j])
 \mid u<v, (u_a,v_b) \in \matching, 0\leq j \leq k_2 -1 \}.$$ 
Then $E(J') = E(J) \cup E_B \cup E_\matching$. Note that both~$J$ and~$J'$ are planar as required.

We next show how to define the positive integers~$k_1$, $k_2$ and~$d$.
This is made more complicated because the result in the statement of this Lemma (Lemma~\ref{lem:gadget}) is algorithmic. We have to show how to actually compute all of the numbers used in the reductions including~$k_1$ and~$k_2$ which are parameters of~$J'$. These numbers will be based on the absolute constants $\peq$ and $\pneq$ and on the parameters~$\beta$, $\gamma$ and $\lambda$, all of which we showed can be approximated sufficiently closely in \eqref{goodapprox}. 

Define 
\begin{align*}
P=
\begin{pmatrix}1- \peq& \peq \\1- \pneq & \pneq\end{pmatrix}
\text{, }\qquad
M=
P
\begin{pmatrix}\beta&1\\1&\gamma\end{pmatrix}
P^t
\text{, }\qquad
W = P
\begin{pmatrix}\beta&1\\1&\gamma\end{pmatrix}
\begin{pmatrix}1\\ \lambda\end{pmatrix},
\end{align*}
where $P^t$ denotes the transpose of the matrix $P$.  
Also, define the ``algorithmic versions''
\begin{align*}
\hatP=
\begin{pmatrix} 1 - \hatpeq& \hatpeq \\ 1 - \hatpneq & \hatpneq\end{pmatrix}
\text{, }\qquad
\hatM=
\hatP
\begin{pmatrix}\hatbeta&1\\1&\hatgamma\end{pmatrix}
\hatP^t
\text{, }\qquad
\hatW = \hatP
\begin{pmatrix}\hatbeta&1\\1&\hatgamma\end{pmatrix}
\begin{pmatrix}1\\ \hatlambda\end{pmatrix}.
\end{align*}
Note that if \eqref{goodapprox} holds
then, for any $s\in\{0,1\}$ and $s'\in\{0,1\}$,  \begin{align}
\begin{split}
\label{anothergoodapprox}
e^{-\delta} P_{s,s'} &\leq \hatP_{s,s'} \leq e^\delta P_{s,s'},\\
e^{-\delta} M_{s,s'} &\leq \hatM_{s,s'} \leq e^\delta M_{s,s'},\\
e^{-\delta} W_s &\leq \hatW_s \leq e^\delta W_s.
\end{split}
\end{align}
We will use  \eqref{anothergoodapprox} later to establish an upper bound \eqref{usethis}
on the variation distance between the
distribution that the   (algorithmically-constructed) gadget induces
on configurations on terminals and the idealised distribution. 
 
The matrix~$M$ has the following informal interpretation.
Suppose that two  parity-$0$ terminals $t$ and $t'$ are adjacent in~$J'$.
and that $\sigma\colon \configs{V(J')}$ is a configuration.
If these two terminals have  spins $\sigma(t)$ and $\sigma(t')$, respectively,
then the edge between them contributes a factor
$\left(\begin{smallmatrix}\beta&1\\1&\gamma\end{smallmatrix}\right)_{\sigma(t) ,\sigma(t')}$ 
to 
$w_{J'}(\sigma)$.  
We will show below that, 
if~$t$ is a terminal of~$C[u]$ and the spins of~$C[u]$ are chosen from the idealised
distribution $\mu_u^s$, 
then the probability that the spin of terminal~$t$ is~$j$ is $P_{s,j}$. Thus, informally,
$M_{s,s'}$ captures the expected contribution of this connection (in the
idealised distribution),  where $s'$ represents the  phase of the gadget of terminal~$t'$.

The informal interpretation of $W$ is as follows.
Given any configuration~$\sigma:\configs{V(J)}$, 
each parity-$1$ terminal~$t$ that is connected to a bristle
will contribute a factor 
$\left[(\begin{smallmatrix}\beta&1\\1&\gamma\end{smallmatrix})
(\begin{smallmatrix}1\\ \lambda\end{smallmatrix})\right]_{\sigma(t)}$  
to the sum
$\sum_{\sigma'} w_{J'}(\sigma')$,
where the index of summation ranges over all configurations $\sigma':\configs{V(J')}$
that agree with~$\sigma$ on $V(J)$. This informal description is just
to provide intuition --- the technical details are given below.
The main idea is that, if the spins of the terminals of the gadgets
are chosen from the ``idealised'' distribution then,
if the gadget of~$t$ has  phase~$s$, then 
the terminal~$t$ will contribute a factor of $W_{1\oplus s}$
to the expected contribution from this bristle. 

We now introduce some calculation which will be needed to 
describe the algorithm's computation of~$k_1$, $k_2$, and~$d$
and also to give the definitions of the real numbers 
$\tilde\gamma$ and $\tilde\lambda$.
The first step is deriving some tedious but necessary bounds on the various 
quantities defined above.
In particular, we will define 
positive rational numbers~$\Delta^-$ and~$\Delta^+$ and
a rational number~$\xi\in(0,1)$ 
(independent of~$\delta$ and~$n$, but depending on $\beta$, $\gamma$ and $\lambda$).
These will be hard-wired into the algorithm.
We will  prove that, provided that $\delta$ is sufficiently small, each $\hatM_{s,s'}$ 
and $\hatW_s$  satisfies $\Delta^- \leq \hatM_{s,s'} \leq \Delta^+$ 
and $\Delta^- \leq \hatW_s \leq \Delta^+$.
Also, each of $\hatpeq-\hatpneq$, $1-\hatgamma$, $\hatpneq$ and $\hatlambda$ is
at least~$\xi$. 
We will also prove that $\hatpneq \leq 1$ (and, from~\eqref{goodapprox},
we have $\hatbeta \geq 1$.) 
Finally, we prove $\hatbeta\hatgamma \leq  1-\xi$.
Here are
the details (which the reader may skip).

\medskip
\hrule
\smallskip
start of optional algorithmic technical details
\smallskip
\hrule
\medskip

\begin{itemize}
\item
By Lemmas~\ref{lem:little} and~\ref{lem:peqgreater}, 
we can define positive rational numbers~$p^-$ and~$p^+$ such that
every element $P_{s,s'}$ of matrix~$P$ satisfies 
$p^-\leq P_{s,s'} \leq p^+$.
Then 
$$ {(p^-)}^2(2 + \beta + \gamma)\leq M_{s,s'} \leq {(p^+)}^2(2 + \beta + \gamma),$$
so, since $\delta<1$,
$$ e^{-1}{(p^-)}^2(2 + \beta + \gamma)\leq \hatM_{s,s'} \leq e{(p^+)}^2(2 + \beta + \gamma),$$
so to get the required bounds, we take any $\Delta^- < e^{-1}{(p^-)}^2(2 + \beta + \gamma)$
and any $\Delta^+> e{(p^+)}^2(2 + \beta + \gamma)$. The bounds on $\hatW_s$ are similar.
\item To ensure $\hatpeq - \hatpneq \geq \xi$,
choose rational numbers~$\rho_1$, $\rho_2$, $\rho_3$, and $\rho_4$
such that $\pneq < \rho_1 < \rho_2 < \rho_3 < \rho_4 < \peq$.
These exist by Lemma~\ref{lem:peqgreater}, which guarantees that $\pneq < \peq$.
Then, if $\delta \leq \rho_4 - \rho_3$,
Equation~\eqref{goodapprox} guarantees
$\hatpeq \geq \peq e^{-\delta} \geq \peq(1-\delta) \geq \peq - \delta \geq \peq - (\rho_4 - \rho_3)  \geq \rho_3$.
(Note that the calculation used $\peq \leq 1$.)
Similarly, if $\delta \leq (\rho_2 - \rho_1)/2$,
Equation~\eqref{goodapprox} guarantees
$\hatpneq \leq  e^{\delta} \pneq \leq \pneq(1+2\delta) \leq \pneq + 2\delta \leq \pneq + (\rho_2 - \rho_1) \leq \rho_2$. (Again, we used $\pneq \leq 1$.)
It suffices to take any $\xi \leq \rho_3-\rho_2$.
\item We can similarly  establish $1-\hatgamma\geq \xi$ 
by considering a sequence of rational numbers between $\gamma$ and $1$
(using the fact that $\gamma<1$ by \eqref{eq:spincond})
and we can establish $\hatpneq \geq \xi$ by considering a sequence of rational numbers between~$0$
and~$\pneq$ (using the fact that $\pneq>0$ by Lemma~\ref{lem:little}).
\item Then, by \eqref{eq:spincond}, $\lambda>0$, so 
taking any $\xi < \lambda$,
we can choose a rational number $\xi'$  with $0 <  \xi <  \xi' < \lambda$.
Then choosing $\delta \leq \log(\xi'/ \xi)$ ensures
$e^{-\delta} \xi' \geq \xi$ 
so $\hatlambda \geq e^{-\delta} \lambda \geq e^{-\delta} \xi' \geq \xi$. 
\item Note that the second bullet point already establishes $\hatpneq \leq 1$.
\item Finally, \eqref{eq:spincond} guarantees $\beta\gamma<1$, so
choose rationals $\beta'\geq\beta$ and $\gamma'\geq\gamma$ with $\beta' \gamma'<1$.
Choose~$\xi$ sufficiently small that $\beta' \gamma' \leq e^{-3\xi}$.
Then choose $\delta\leq \xi/2$ to ensure
$$\hatbeta\hatgamma \leq e^{2 \delta} \beta \gamma 
\leq e^{\xi} \beta' \gamma' \leq e^{-2\xi} \leq 1-\xi.$$
 \end{itemize}
 
 \medskip
 \hrule
 \smallskip
 end of optional algorithmic technical details 
 \smallskip
 \hrule
 \medskip
    
We can make the following conclusions.
\begin{align*}
\hatM_{1,1}-\hatM_{0,1}&=( \hatpeq- \hatpneq)((\hatbeta-1)(1- \hatpneq)+(1-\hatgamma) \hatpneq) \geq \xi^3 \\
\hatM_{0,1}-\hatM_{0,0}&=(\hatpeq- \hatpneq)((\hatbeta-1)(1- \hatpeq)+(1-\hatgamma) \hatpeq)  \geq \xi^3\\
\hatW_1-\hatW_0&=( \hatpeq- \hatpneq)((\hatbeta - 1)+(1-\hatgamma)\hatlambda) \geq \xi^3.
\end{align*}

We  can now define~$k_2$.
Since 
$$\hatM_{0,0}\hatM_{1,1}-{\hatM_{0,1}}^2 =\det(\hatM) = 
\det(\hatP)^2 (\hatbeta \hatgamma -1) 
= {(\hatpeq - \hatpneq)}^2 (\hatbeta \hatgamma - 1)
\leq -\xi^3, \mbox{we have}$$

$$
\frac{\hatM_{0,0}\hatM_{1,1} }{\hatM_{0,1}^2}
= \frac{ \hatM_{0,1}^2 + (\hatM_{0,0}\hatM_{1,1}- \hatM_{0,1}^2) }{\hatM_{0,1}^2}
\leq \frac{ \hatM_{0,1}^2 - \xi^3 }{\hatM_{0,1}^2}
= 1- \frac{\xi^3}{\hatM_{0,1}^2}
\leq 1 - \frac{\xi^3}{(\Delta^+)^2} \leq e^{-\xi^3/(\Delta^+)^2}.$$
Then let
$$k_2 = \left\lceil 
\frac{(n^2+n)2 \log(5) {(\Delta^+)}^2}{\xi^3}
\right\rceil.$$
Then,
if we ensure that $\delta <  (\xi^3/(\Delta^+)^2)/8$, 
we have 
\begin{align*} \left( e^{4 \delta}\frac{\hatM_{0,0}\hatM_{1,1}}{ \hatM_{0,1}^2}\right)^{k_2}
& \leq e^{-k_2 \xi^3/(2 (\Delta^+)^2 )}
\\
&\leq   5^{-n^2-n}.
\end{align*}

Then define
$$\tilde \gamma = \left(\frac{M_{0,0}M_{1,1}}{ M_{0,1}^2}\right)^{k_2}.$$
By~\eqref{anothergoodapprox},
$$\tilde\gamma \leq e^{4 \delta k_2} 
  \left(\frac{\hatM_{0,0}\hatM_{1,1}}{ \hatM_{0,1}^2}\right)^{k_2} \leq 5^{-n^2-n}.$$
Also,  there is a randomised approximation scheme for~$\tilde \gamma$
whose running time is at most a polynomial in~$n$ and in the desired accuracy parameter~$\epsilon$. 
   
Next, we will define~$k_1$.
 Recall that $\hatW_1 > \hatW_0$ and $\hatM_{1,1} > \hatM_{0,1}$.
 If $n$ is sufficiently large with respect to
 $\log(\Delta^+/\Delta^-) \geq \log(\hatW_1/\hatW_0)$
 then there is a positive integer~$k_1$ (which the algorithm can compute) satisfying
$$
\frac{3 k_2 \log(\hatM_{1,1}/\hatM_{0,1})}{\log(\hatW_1/\hatW_0)} 
+ \frac{\log(4.1) n}{\log(\hatW_1/\hatW_0)}
\leq k_1 \leq 
\frac{3 k_2 \log(\hatM_{1,1}/\hatM_{0,1})}{\log(\hatW_1/\hatW_0)} 
+ \frac{\log(4.9) n}{\log(\hatW_1/\hatW_0)}.$$ 
Note that $k_1 = O(n^2)$. Also,
 $$
  (4.1)^n \leq \left( \frac{\hatW_1}{\hatW_0}\right)^{k_1}\left( \frac{\hatM_{0,1}}{\hatM_{1,1}}\right)^{3k_2}$$ 
and
$$\left( \frac{\hatW_1}{\hatW_0}\right)^{k_1}\left( \frac{\hatM_{0,1}}{\hatM_{1,1}}\right)^{3k_2} \leq (4.9)^n.$$

Now if we ensure 
 $\delta \leq n^{-2.5}$
then, for sufficiently large~$n$,
$\delta \leq n \log(4.1/4) /(2k_1 + 6 k_2)$
and $\delta \leq n \log(5/4.9)/(2k_1+6k_2)$, so
$$
4^n \leq \left(e^{-2\delta}\frac{\hatW_1}{\hatW_0}\right)^{k_1}\left(e^{-2\delta}\frac{\hatM_{0,1}}{\hatM_{1,1}}\right)^{3k_2}$$ 
and
$$\left(e^{2\delta}\frac{\hatW_1}{\hatW_0}\right)^{k_1}\left(e^{2\delta}\frac{\hatM_{0,1}}{\hatM_{1,1}}\right)^{3k_2} \leq 5^n.$$
Note that 
$$  5^n \leq  \frac{1}{\tilde\gamma^{1/n}}.$$
Then define 
$$ 
 \tilde\lambda=\left(\frac{W_1}{W_0}\right)^{k_1}\left(\frac{M_{0,1}}{M_{1,1}}\right)^{3k_2}.$$ 
Note that there is a randomised approximation scheme for $\tilde\lambda$
whose running time is bounded from above by a polynomial in~$n$ and
the desired accuracy parameter~$\epsilon$. Also,  $\tilde \lambda \leq \frac{1}{\tilde\gamma^{1/n}}$ so 
$\tilde\gamma\leq {\tilde\lambda}^{-n}$
and $\tilde \lambda \geq 4^n$,
as required.

Now let $k= \max(k_1,3 k_2)$.
Finally, the gadget will use a parameter~$d$.
By Proposition~\ref{prop:gadget}, 
there is a $c>1$ (not depending on $k$) such that, for all sufficiently large $d$ which are
multiples of~$16$, and all configurations $\tau:\configs{\terminals}$,
$$|\Pr(\sigmarv(\terminals)=\tau)-\muall(\tau)| \leq c^{-d} k^2.$$
The algorithm will choose 
$d$ to be a multiple of~$16$ such that $d = O(n^3)$  and
$$ c^{-d} k^2 < \frac{{(e^{-\delta} \hatW_0)}^{k_1 n} {{(e^{-\delta} \hatM_{1,1})}}^{k_2 |E|}}
{({e^{\delta}(\hatbeta + \hatlambda))}^{k_1 n} {(e^{\delta}\hatbeta)}^{k_2 |E|} 2^{2 k n} 2^n n}   .$$
This can be done, since $|E| = O(n)$.
We will use below the fact that
 \begin{equation}
\label{usethis}
\max_{\tau:\configs{\terminals}} 
|\Pr(\sigmarv(\terminals)=\tau)-\muall(\tau)|  <
\frac{W_0^{k_1 n} M_{1,1}^{k_2 |E|}}
{{(\beta + \lambda)}^{k_1 n} \beta^{k_2 |E|} 2^{2 k n}2^n n},
\end{equation}
which follows from Equations~\eqref{goodapprox} and~\eqref{anothergoodapprox}.
  
Let $K$ be the  positive real
given by 
$$K = \frac{2^n}{W_0^{k_1n}M_{1,1}^{k_2|E|}}.$$
Note that there is a randomised approximation scheme for~$K$ whose running time
is at most a polynomial in~$n$ and in the desired accuracy parameter, $\epsilon$.

All that remains is to  establish~\eqref{eq:lem15}, which we do in the remainder of the proof.
Let $T =  \cup_{u\in V} T[u]  $ be the set of terminals in $V(J)$.
For every configuration $\tau: \configs{T}$, let   
$\wt(\tau) = \sum_{\sigma\in  \configs{V(J)}: \sigma(T)=\tau} w_J(\sigma)$.
The quantity~$\wt(\tau)$ is the contribution to $Z_{\beta,\gamma,\lambda}(J)$ from
configurations $\sigma$ with $\sigma(T)=\tau$.
Similarly, let
$\wt'(\tau)= \sum_{\sigma'\in \configs{V(J')}: \sigma'(T)=\tau} w_{J'}(\sigma')$ 
be the contribution to $Z_{\beta,\gamma,\lambda}(J')$ from
the corresponding configurations on~$J'$.  
Let $F(\tau)$ denote
${\wt'(\tau)}/{\wt(\tau)}$.  Then
\begin{align}
\frac
{Z_{\beta,\gamma,\lambda}(J')}
{Z_{\beta,\gamma,\lambda}(J)}
=
\frac{\sum_{\tau:\configs{T}}\wt'(\tau)}
{ Z_{\beta,\gamma,\lambda}(J)}
=
\frac{\sum_{\tau:\configs{T}}\wt(\tau) F(\tau)}
{ Z_{\beta,\gamma,\lambda}(J)}
=
\ex[F(\bsigma_{J}(T))]\label{eq:F}.
\end{align}

We can write $F(\tau)$ in terms of $\beta$, $\gamma$, and~$\lambda$:
\begin{align}
F(\tau)&=\left[\prod_{u\in V}\prod_{j=0}^{k_1-1}
\left(
\begin{pmatrix}\beta&1\\1&\gamma\end{pmatrix}
\begin{pmatrix}1\\ \lambda\end{pmatrix}
\right)_{\tau( \oddterminal[u, j])}\right]\notag\\
&\qquad\qquad\times\left[\prod_{(u_a,v_b)\in \matching,u<v}\prod_{j=0}^{k_2-1}
\begin{pmatrix}\beta&1\\1&\gamma\end{pmatrix}_{
\tau(\eventerminal[u, a k_2+j]), \tau(\eventerminal[v, b k_2 + k_2-1-j])}\right].
\label{eq:F_terminals}
\end{align}
 
We now define an ``idealised'' distribution on configurations assigning spins to the
terminals.
First, let $\bsigmatilde$
be a random variable which is drawn uniformly from $\configs{V}$.
Each realisation $\sigmatilde$ of $\bsigmatilde$ can be thought of 
as specifying, for every vertex $u\in V$, a phase $\sigmatilde(u)\in\{0,1\}$ 
for the gadget~$C[u]$.  
Conditioned on the realisation $\bsigmatilde=\sigmatilde$,
the random variable $\bsigmahat:\configs{T}$ is distributed as follows:
for each $u\in V$,
$\bsigmahat(T[u])$ is chosen independently from the distribution~$\mu_u^{\sigmatilde(u)}$.
Note that the (unconditioned) random variable~$\bsigmahat$ has the property 
that $\bsigmahat(T[u])$ is distributed as~$\mu_u$, independently of all $\bsigmahat(T[u'])$ 
for $u'\not=u$.

We wish to estimate
$\ex[F(\bsigma_{J}(T))]$, but the distribution of~$\bsigma_J(T)$ is somewhat complicated.
Instead, we will  first estimate
$\ex[F(\bsigmahat)]$, and we will later use Proposition~\ref{prop:gadget} to show that
these two quantities are close.  From the
definition of~$\bsigmahat$, we have
$$\ex[F(\bsigmahat)] = \frac{1}{2^n}\sum_{\sigmatilde:\configs{V}} \ex[F(\bsigmahat)\mid \bsigmatilde=\sigmatilde].$$
Now we will argue that
if $\sigmatilde(u)=s$ then, for every  parity-$0$ terminal~$t$ of~$C[u]$, it is the case that
$\Pr(\bsigmahat(t)=j\mid \bsigmatilde=\sigmatilde) = P_{s, j}$.
(To see this, consider the possible cases. If $s=0$ then,
from the definition of $\muall^0$, the probability that $\bsigmahat(t)=1$
is $\peq$, which is $P_{0,1}$, but the probability that $\bsigmahat(t)=0$ is
$1-\peq = P_{0,0}$. The situation is similar if $s=1$.)
On the other hand, similar reasoning shows that, for every  parity-$1$ terminal~$t$ of~$C[u]$,
it is the case that
$\Pr(\bsigmahat(t)=j\mid \bsigmatilde=\sigmatilde) = P_{1\oplus s,j}$.
Thus,   we have 
$$ \ex[F(\bsigmahat)] = \frac{1}{2^n}\sum_{\sigmatilde: \configs{V}}
\sum_{\tau:\configs{T}} F(\tau)
\prod_{u\in V}
\prod_{j=1}^{k} P_{\sigmatilde(u),\tau(\eventerminal[u,j])}
   P_{1\oplus \sigmatilde(u),\tau(\oddterminal[u,j])} .$$

Plugging in \eqref{eq:F_terminals},
the contribution  of each $\tilde\sigma$ to the right-hand-side of the above equality is
$2^{-n}$ multiplied by the product of the following terms: 
\begin{align*}
&\prod_{u \in V} \prod_{j=0}^{k_1-1} \sum_{s\in\{0,1\}} P_{1\oplus\tilde\sigma(u),s}
\left(
\begin{pmatrix}\beta&1\\1&\gamma\end{pmatrix}
\begin{pmatrix}1\\ \lambda\end{pmatrix}
\right)_{s},\\
&\prod_{u\in V} \prod_{j=k_1}^{k-1} \sum_{s\in\{0,1\}} P_{1\oplus\tilde\sigma(u),s},\\
&\prod_{(u,v)\in E} \prod_{j=0}^{k_2-1}
\sum_{s\in\{0,1\}} \sum_{s'\in\{0,1\}} P_{\tilde\sigma(u),s} P_{\tilde\sigma(u),s'} 
\begin{pmatrix}\beta&1\\1&\gamma\end{pmatrix}_{s,s'},\\
&\prod_{u\in V} \prod_{j= 3 k_2}^{k-1} \sum_{s\in\{0,1\}} P_{\tilde\sigma(u),s}.\\
\end{align*}
The second and fourth of these terms are equal to~$1$, and the first and third simplify using the
matrices that we defined earlier, so we get
$$
\ex[F(\bsigmahat)] = \frac{1}{2^n}\sum_{\tilde\sigma: \configs{V}}
\prod_{u \in V} W_{1\oplus\tilde\sigma(u)}^{k_1}
\prod_{(u,v)\in E} {M_{\tilde\sigma(u),\tilde\sigma(v)}}^{k_2}.$$

Then, plugging in our notation from earlier, we have
\begin{align*}
\ex[F(\bsigmahat)]
&=
\frac{1}{2^n}
\sum_{\tilde\sigma:\configs{V}}
W_1^{k_1(n-\ell(\tilde\sigma))}
W_0^{k_1\ell(\tilde\sigma)}
M_{0,0}^{k_2b(\tilde\sigma)}
M_{0,1}^{k_2(|E|-b(\tilde\sigma)-c(\tilde\sigma))}
M_{1,1}^{k_2c(\tilde\sigma)}\\
&=
\frac{W_0^{k_1n}M_{0,1}^{k_2|E|}}{2^n}
\sum_{\tilde\sigma:\configs{V}}
\left(\frac{M_{0,0}}{M_{0,1}}\right)^{k_2b(\tilde\sigma)}
\left(\frac{M_{1,1}}{M_{0,1}}\right)^{k_2c(\tilde\sigma)}
\left(\frac{W_1}{W_0}\right)^{k_1(n-\ell(\tilde\sigma))}.
\end{align*}
Replacing $\tilde\sigma(u)$ with $1\oplus\tilde\sigma(u)$, we get
$$\ex[F(\bsigmahat)] =
\frac{W_0^{k_1n}M_{0,1}^{k_2|E|}}{2^n}
\sum_{\tilde\sigma:\configs{V}}
{\left(\frac{M_{1,1}}{M_{0,1}}\right)^{k_2b(\tilde\sigma)}}
{\left(\frac{M_{0,0}}{M_{0,1}}\right)^{k_2c(\tilde\sigma)}}
{\left(\frac{W_1}{W_0}\right)^{k_1\ell(\tilde\sigma)}}.$$

Since $G$ is cubic, we can count the pairs $(v,(u,v))\in V\times E$
with $\tilde\sigma(v)=1$ in two ways to get
$3\ell(\tilde\sigma)=2c(\tilde\sigma)+(|E|-b(\tilde\sigma)-c(\tilde\sigma))$.
So $b(\tilde\sigma)=c(\tilde\sigma)+|E|-3\ell(\tilde\sigma)$,
which implies that 
\begin{align}\nonumber
\ex[F(\bsigmahat)] &=
  \frac{W_0^{k_1n}M_{0,1}^{k_2|E|}}{2^n}
\left(\frac{M_{1,1}}{M_{0,1}}\right)^{k_2|E|} 
 \sum_{\sigma:
\configs{V}}
\left[\left(\frac{M_{0,0}M_{1,1}}{M_{0,1}^2}\right)^{k_2}\right]^{c(\sigma)}
\left[\left(\frac{W_1}{W_0}\right)^{k_1}\left(\frac{M_{0,1}}{M_{1,1}}\right)^{3k_2}\right]^{\ell(\sigma)} 
\\ &=K^{-1}Z_{(1,\tilde\gamma,\tilde\lambda)}(G).\label{eq:F_ideal}
\end{align}

Plugging~\eqref{eq:F} and~\eqref{eq:F_ideal} into~\eqref{eq:lem15},
it remains to prove
\begin{equation}
\label{eq:whatsleft}
e^{-1/4}  \ex[F(\bsigmahat)] \leq \ex[F(\bsigma_{J}(T))]
\leq e^{1/4} \ex[F(\bsigmahat)].
\end{equation}
Let
$\psi=|\ex[F(\bsigma_{J}(T))]-\ex[F(\bsigmahat)]|$. 
Now
 $$
\psi\leq \left(\max_{\tau:\configs{T} }F(\tau)\right) \sum_{\tau:\configs{T}}
\left| \Pr(\bsigma_{J}(T)=\tau)-\Pr(\bsigmahat=\tau)\right|.
$$
To emphasise that summation over $\tau:\configs{T}$ can be
broken into summation over each restriction $\tau(T[u])$,
we will write the summation index as $\forall u, \tau(T[u]):\configs{T[u]}$.
By~\eqref{eq:F_terminals}, 
$F(\tau) \leq 
{(\beta+\lambda)}^{k_1 n}
\beta^{k_2 |E|}$, so we can write
\begin{align*}
\psi &\leq  {(\beta+\lambda)}^{k_1 n}
\beta^{k_2 |E|}
 \sum_{ \forall u, \tau(T[u]):\configs{T[u]}}
\left| \prod_{u\in V}
\Pr(\bsigma_{J}(T[u])= \tau(T[u]))-
\prod_{u\in V}
\Pr(\bsigmahat(T[u])=\tau(T[u]))\right|\\
&= {(\beta+\lambda)}^{k_1 n}
\beta^{k_2 |E|}
 \sum_{\forall u, \tau(T[u]):\configs{T[u]}
   }
\left| 
\prod_{u\in V}
\Pr(\bsigma_{J}(T[u])= \tau(T[u]))
  - 
\prod_{u \in V}
\mu(\tau(T[u])) \right|.
\end{align*}

 Using the inequality 
 $| \prod a_u-\prod b_u |\leq \sum | a_u - b_u| $ valid for values $0\leq a_u,b_u\leq 1$, as in the proof of Proposition~\ref{prop:gadget}, we have
$$
\psi \leq 
{(\beta+\lambda)}^{k_1 n}
\beta^{k_2 |E|}
 2^{2 k n } n  
 \max_{\tau:\configs{T_{k,d}}}
 \left|
 \Pr(\bsigmakd=\tau)
 - \muall(\tau) \right|.$$ 
Applying  \eqref{usethis},
$$\psi \leq
{(\beta+\lambda)}^{k_1 n}
\beta^{k_2 |E|}
 2^{2 k n } n 
 \left(
 \frac{W_0^{k_1 n} M_{1,1}^{k_2 |E|}}
{{(\beta + \lambda)}^{k_1 n} \beta^{k_2 |E|} 2^{2 k n}2^n n} 
 \right)  =
 \frac{W_0^{k_1n}M_{1,1}^{k_2|E|}}{2^n} = K^{-1}.$$
So to establish~\eqref{eq:whatsleft},
we note
that
$$
1-\frac{1}{Z_{1,\tilde\gamma,\tilde\lambda}(G)}
=
1-\frac{1}{K \ex[F(\bsigmahat)]} \leq
\frac{\ex[F(\bsigma_{J}(T))]}{\ex[F(\bsigmahat)]}
\leq 1 + \frac{1}{K \ex[F(\bsigmahat)]}
= 1 + \frac{1}{Z_{1,\tilde\gamma,\tilde\lambda}(G)}.$$
The result follows from the extremely crude bound $Z_{1,\tilde\gamma,\tilde\lambda}(G)\geq 8$.
\end{proof} 

We can now prove our main theorem.

\begin{mainthm}
Let $\beta$, $\gamma$ and $\lambda$ be efficiently-approximable reals
satisfying \eqref{eq:spincond}.
There is no FPRAS for $\PlanarTwoSpin(\beta,\gamma,\lambda)$
unless $\NP=\RP$.
\end{mainthm}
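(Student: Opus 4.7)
The plan is to reduce the NP-complete problem $\PlanarCubicIS$ \cite{gjs} to approximate evaluation of $Z_{\beta,\gamma,\lambda}(\cdot)$, so that a hypothetical FPRAS for $\PlanarTwoSpin(\beta,\gamma,\lambda)$ yields a $\BPP$ algorithm for an NP-complete problem, forcing $\NP=\RP$ by standard self-reducibility. Given a $\PlanarCubicIS$ instance $(G,h)$ with $n=|V(G)|$ sufficiently large, the first step is to invoke Lemma~\ref{lem:gadget} to construct planar degree-four graphs $J,J'$ together with randomised approximation schemes for $K,\tilde\gamma,\tilde\lambda$ satisfying $\tilde\lambda\geq 4^n$ and $\tilde\gamma\leq\tilde\lambda^{-n}$. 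Then, assuming an FPRAS for $\PlanarTwoSpin(\beta,\gamma,\lambda)$, run it on $J$ and $J'$ with a small constant relative error, and combine the outputs with approximations $\hat K,\widehat{\tilde\lambda}$ via \eqref{eq:lem15} to obtain a constant-factor approximation $\hat Z$ of $Z:=Z_{1,\tilde\gamma,\tilde\lambda}(G)$.

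The central observation is that these extreme parameter values force $Z$ to encode the independence number $\alpha:=\alpha(G)$. For any configuration $\sigma$ on $G$ with $c(\sigma)\geq 1$ we have $\tilde\gamma^{c(\sigma)}\tilde\lambda^{\ell(\sigma)}\leq\tilde\lambda^{\ell(\sigma)-nc(\sigma)}\leq 1$, so such ``non-independent-set'' configurations contribute at most $2^n$ in total. Each independent set of size $k$ contributes exactly $\tilde\lambda^k$, and there are at most $2^n$ of any given size, yielding
\begin{equation*}
\tilde\lambda^\alpha \;\leq\; Z \;\leq\; 2^n\tilde\lambda^\alpha + 2^n\tilde\lambda^{\alpha-1} + 2^n \;\leq\; 2^{n+1}\tilde\lambda^\alpha.
\end{equation*}
Taking logarithms and dividing by $\log\tilde\lambda\geq 2n\log 2$ gives $\log Z/\log\tilde\lambda\in[\alpha,\alpha+(n+1)/(2n)]$, so the floor operation recovers $\alpha$ exactly for all $n\geq 3$. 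A constant-factor approximation $\hat Z$ together with a constant-factor approximation $\widehat{\tilde\lambda}$ perturbs $\log\hat Z/\log\widehat{\tilde\lambda}$ additively by $O(1/n)$, which for large $n$ leaves the floor operation undisturbed. Having thereby extracted $\alpha$, the algorithm answers \textsc{Yes} iff $\alpha\geq h$; small instances with bounded $n$ are handled by brute force.

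The main bookkeeping challenge is to control the total failure probability. Lemma~\ref{lem:gadget} succeeds with probability at least $14/15$; each of the constantly many calls to a randomised approximation scheme (for $K$, $\tilde\gamma$, $\tilde\lambda$ and for $Z_{\beta,\gamma,\lambda}(J)$, $Z_{\beta,\gamma,\lambda}(J')$) can be driven to failure probability at most $1/100$ by median amplification at polynomial cost. A union bound then leaves the overall success probability bounded strictly above $1/2$, giving the sought $\BPP$ algorithm for $\PlanarCubicIS$. The real subtlety to verify carefully is ensuring that the compounded multiplicative errors on $K$, $Z_{\beta,\gamma,\lambda}(J)$, $Z_{\beta,\gamma,\lambda}(J')$ and $\tilde\lambda$ combine to leave $\log\hat Z/\log\widehat{\tilde\lambda}$ and $\log Z/\log\tilde\lambda$ in the same unit interval; this is essentially routine but governs the precision parameters fed to each scheme.
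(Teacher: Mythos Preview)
Your overall strategy matches the paper's: invoke Lemma~\ref{lem:gadget}, use the hypothetical FPRAS on $J$ and $J'$ together with the approximation schemes for $K$ and $\tilde\lambda$ to approximate $Z_{1,\tilde\gamma,\tilde\lambda}(G)$, and argue that the parameter bounds $\tilde\lambda\geq 4^n$, $\tilde\gamma\leq\tilde\lambda^{-n}$ force this partition function to encode the independent-set problem. The final step, that a $\BPP$ algorithm for an $\NP$-complete problem gives $\NP=\RP$, is also the same (the paper cites~\cite{Ko}).

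Where you differ is in the endgame: you try to \emph{extract $\alpha(G)$ exactly} via $\lfloor \log\hat Z/\log\widehat{\tilde\lambda}\rfloor$, whereas the paper simply compares $\hat Z$ to the threshold $\widehat{\tilde\lambda}^{\,h}$ and uses the $2^n$ gap between the ``yes'' and ``no'' cases. Your route has two soft spots. First, the claim that constant-factor approximations perturb $\log\hat Z/\log\widehat{\tilde\lambda}$ by $O(1/n)$ is not right: an additive error $c$ in $\log\widehat{\tilde\lambda}$ produces an error of order $\alpha\cdot c/\log\tilde\lambda=\Theta(\alpha/n)$ in the ratio, and $\alpha$ can be $\Theta(n)$, so this term is $\Theta(1)$, not $O(1/n)$. (It can be made a small constant by choosing $c$ small, or driven to $O(1/n)$ by calling the $\tilde\lambda$-scheme with accuracy $1/n$; the paper uses accuracy $1/(2h)$.) Second, and more delicately, your lower bound $\log Z/\log\tilde\lambda\geq\alpha$ is tight: if $G$ has a unique maximum independent set then $\log Z/\log\tilde\lambda=\alpha+o(1)$ with an exponentially small $o(1)$, so any negative perturbation, however tiny, sends the floor to $\alpha-1$. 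The floor is therefore not robust on the left edge of the interval.

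Both issues disappear if, instead of flooring, you test whether $\log\hat Z/\log\widehat{\tilde\lambda}$ exceeds $h-\tfrac14$ (equivalently, whether $\hat Z\gtrsim\widehat{\tilde\lambda}^{\,h}$), which is exactly what the paper does; the $2^n$ separation between the cases $\alpha\geq h$ and $\alpha\leq h-1$ then absorbs all the constant-factor slop with room to spare.
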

 
\begin{proof}
We will give a randomised algorithm for $\PlanarCubicIS$, using
an FPRAS for $\PlanarTwoSpin(\beta,\gamma,\lambda)$ as an oracle
(and also using the given FPRASes for $\beta$, $\gamma$ and $\lambda$).

After receiving an instance~$G$ and~$h$,
our algorithm uses Lemma~\ref{lem:gadget}  
which provides planar graphs $J$ and~$J'$ with maximum degree at most~$4$
and also some approximation schemes for the reals $K$, $\tilde \gamma$ 
and $\tilde \lambda$.
With probability at least~$1-1/15$,
these satisfy  
$\tilde\lambda \geq 4^{|V(G)|}$ and $\tilde\gamma\leq {\tilde\lambda}^{-|V(G)|} < 1$
and  
Equation~\eqref{eq:lem15}. 
The algorithm then makes four calls to 
approximation schemes, 
suitably powered so that each call fails with probability at most~$1/15$.
Thus, with probability at least $2/3$, 
Equation~\eqref{eq:lem15} is satisfied and 
all calls to the approximation schemes succeed.
In that case, we will show how to determine
(from the outputs of the 
approximation schemes) whether or not $G$ has an independent set of size~$h$.
 
Let $n=|V(G)|$.   
By Equation~\eqref{eq:lem15}, we have
\begin{equation*}
 e^{-1/4} 
K \frac{Z_{\beta,\gamma,\lambda}(J')}{Z_{\beta,\gamma,\lambda}(J)}
\leq  Z_{1,\tilde\gamma,\tilde\lambda}(G)
\leq e^{1/4}
K \frac{Z_{\beta,\gamma,\lambda}(J')}{Z_{\beta,\gamma,\lambda}(J)}.
 \end{equation*}

Using the given approximation schemes for $Z_{\beta,\gamma,\lambda}(J')$,
$Z_{\beta,\gamma,\lambda}(J)$ and $K$,
each with accuracy parameter~$1/12$ and 
failure probability~$1/15$, 
we can compute a value $\hat{Z}$ which, 
with probability at least $1-3/15$,
satisfies
$$e^{-1/4} \hat{Z} \leq  K \frac{Z_{\beta,\gamma,\lambda}(J')}{Z_{\beta,\gamma,\lambda}(J)}
\leq e^{1/4} \hat{Z},$$
so
\begin{equation}
\label{blahone}
e^{-1/2} \hat{Z} \leq  Z_{1,\tilde\gamma,\tilde\lambda}(G)
\leq e^{1/2} \hat{Z}.
\end{equation}

Using the given  approximation scheme for $\tilde\lambda$ with accuracy parameter~$1/2h$ and failure probability~$1/15$,
we can
compute a value $\hat\lambda$
which, with probability at least $1-1/15$, satisfies
$e^{- 1/2h} \hat\lambda \leq \tilde \lambda \leq e^{ 1/2h} \hat\lambda$
so 
\begin{equation}
\label{blahtwo}
e^{- 1/2} {\hat\lambda}^h \leq {\tilde \lambda}^h \leq e^{ 1/2} {\hat\lambda}^h.
\end{equation}

Suppose that all four calls to the  approximation schemes succeed so that \eqref{blahone} and \eqref{blahtwo} hold. 
Recall that 
$$ Z_{1,\tilde\gamma,\tilde\lambda}(G)
= \sum_{\sigma:  \configs{V(G)} } 
  {\tilde\gamma}^{c(\sigma)} {\tilde\lambda}^{\ell(\sigma)}.$$

If $G$ has an independent set of size~$h$ then
$Z_{1,\tilde\gamma,\tilde\lambda}(G)
\geq   {\tilde\lambda}^h$ so, plugging in \eqref{blahone} and \eqref{blahtwo},
 $$
 \hat Z \geq e^{- 1/2}
 Z_{1,\tilde\gamma,\tilde\lambda}(G)
\geq e^{- 1/2} {\tilde\lambda}^h
\geq e^{- 1} {\hat\lambda}^h
.$$
Also, if $G$ has no independent set of size~$h$, then
$Z_{1,\tilde\gamma,\tilde\lambda}(G)
 \leq
2^n \max( {\tilde\lambda}^{h-1}, {\tilde\lambda}^{n} \tilde \gamma )
\leq 2^n  {\tilde\lambda}^{h-1}
 $. So, plugging in \eqref{blahone} and \eqref{blahtwo} and our lower bound for $\tilde \lambda$,
 $$ \hat Z \leq e^{ 1/2}  Z_{1,\tilde\gamma,\tilde\lambda}(G) \leq
e^{1/2} 2^n {\tilde\lambda}^{h-1} \leq 
e^{1/2} \frac{2^n {\tilde \lambda}^h}{4^n} 
 = e^{1/2} 2^{-n} {\tilde \lambda}^h 
 \leq e 2^{-n} {\hat \lambda}^h.$$
 
 As long as $n\geq 3$,  
 $e^{-1} > e 2^{-n}$, so
 it is possible to determine from $\hat Z$ and $\hat \lambda$ whether
 or not $G$ has an independent set of size $h$.

The reduction described above provides a randomised algorithm for 
$\PlanarCubicIS$ with 2-sided error. In the event that 
$\PlanarTwoSpin(\beta,\gamma,\lambda)$ has an FPRAS, the reduction would
place $\PlanarCubicIS$ in BPP\null.
However, the inclusion $\NP\subseteq \BPP$  would imply $\NP=\RP$ \cite[Theorem 2]{Ko}.
So, for any fixed $\beta$, $\gamma$ and $\lambda$ satisfying \eqref{eq:spincond},
there is no FPRAS for $\PlanarTwoSpin(\beta,\gamma,\lambda)$   unless
$\NP=\RP$.
\end{proof}

\section{Approximating the  log-partition function}\label{sec:twospin_log_fpras}
 
We start with a preliminary lemma, which will help us to show that
our approximation is sufficiently accurate.
 
\begin{lemma}\label{lem:logbound}
Suppose that $\beta$, $\gamma$ and $\lambda$ are
real numbers satisfying $\beta\geq 1> \gamma \geq 0$ and $\lambda\geq 1$.
Then, for every planar graph~$G$,
$Z_{\beta,\gamma,\lambda}(G)  \geq
(1+\lambda)^{|V(G)|/4}$.
\end{lemma}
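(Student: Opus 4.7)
The plan is to exploit three ingredients: a restriction to independent-set configurations, the hypotheses $\beta \geq 1$ and $\gamma < 1$, and the four-colour theorem for planar graphs.

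First I would drop negative contributions (there are none, actually, but I would drop inconvenient ones) by restricting the sum defining $Z_{\beta,\gamma,\lambda}(G)$ to those configurations $\sigma$ for which the set $S(\sigma) = \{u : \sigma(u) = 1\}$ is an independent set in $G$. For such $\sigma$ we have $c(\sigma) = 0$, so $\gamma^{c(\sigma)} = 1$, and since $\beta \geq 1$ we have $\beta^{b(\sigma)} \geq 1$. Thus the weight $w_G(\sigma) \geq \lambda^{\ell(\sigma)} = \lambda^{|S(\sigma)|}$, and we obtain
\[
Z_{\beta,\gamma,\lambda}(G) \;\geq\; \sum_{S \text{ independent in } G} \lambda^{|S|}.
\]

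Next I would use the four-colour theorem to partition $V(G)$ into four independent sets $V_1, V_2, V_3, V_4$. By pigeonhole, some $V_i$ has $|V_i| \geq |V(G)|/4$; fix such a set $I$. Every subset of $I$ is independent in $G$, so summing just over subsets of $I$ gives the further lower bound
\[
Z_{\beta,\gamma,\lambda}(G) \;\geq\; \sum_{S \subseteq I} \lambda^{|S|} \;=\; (1+\lambda)^{|I|} \;\geq\; (1+\lambda)^{|V(G)|/4},
\]
as claimed.

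There is no real obstacle here; the only substantive input is the four-colour theorem, and even the weaker (and elementary) statement that planar graphs are $5$-colourable would suffice to prove the analogous bound $(1+\lambda)^{|V(G)|/5}$, which would be enough for the PRAS application in the proof of Theorem~\ref{thm:twospin_log_pras}. A minor point worth noting is that the hypothesis $\lambda \geq 1$ is not actually used in this argument; the bound holds for all $\lambda \geq 0$ under $\beta \geq 1 > \gamma \geq 0$.
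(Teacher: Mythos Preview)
Your proof is correct and follows essentially the same approach as the paper: use the four-colour theorem to find an independent set $I$ of size at least $|V(G)|/4$, then lower-bound $Z_{\beta,\gamma,\lambda}(G)$ by summing the weights of configurations whose $1$-spins lie in~$I$, giving $(1+\lambda)^{|I|}$. Your additional remarks---that five-colourability would suffice for the application and that the hypothesis $\lambda\geq 1$ is not used---are both correct.
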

\begin{proof}
Let~$I$ be the largest colour class in a proper 4-colouring of $G$.
Then~$I$ is an independent set of~$G$ of size at least~$|V(G)|/4$.
For every configuration $\sigma\colon\configs{V(G)}$  
which assigns spin~$0$ to every vertex in $V(G)\setminus I$,
$w_{G}(\sigma) \geq \lambda^{\ell(\sigma)}$.  
Thus
$Z_{\beta,\gamma,\lambda}(G)\geq (1+\lambda)^{|I|}$.
\end{proof}

Our approximation algorithm is inspired by Baker's approximation schemes for
optimisation problems on planar graphs~\cite{Baker}. For a good explanation of her
technique (which we use in our exposition here), see Borradailes's notes~\cite{BakerNotes}.  
We will use the following notation (from~\cite{BakerNotes}) 
to decompose a planar graph~$G=(V,E)$ 
which is embedded in the plane.
We first define the \emph{level} of each vertex. Vertices on the boundary of the embedding
have level~$0$.
Then, for $i\in \{0,\ldots,n\}$,
the vertices with level~$i$ are those that are on the boundary
on the graph formed from $G$ by deleting all vertices whose level is less than~$i$.
For a fixed parameter~$k$, and for every $i\in \{0,\ldots,k-1\}$,
let $V_i = \{ v\in V \mid 
\mbox{The level of vertex $v$ is equal to $i$ modulo~$k$}\}$.
Let ${G}_i$ be the graph $G-V_i$.
By construction, ${G}_i$ is $(k-1)$-\emph{outerplanar}.
Also, 
Bodlaender~\cite{Bodlaender} had shown
that every  $k$-outerplanar  graph has treewidth at
most $3k-1$.  Also, this tree decomposition is easy to compute.
Using a data structure of Lipton and Tarjan~\cite{LT},
Baker shows that the levels of vertices can be computed in $O(|V|)$ time.

We can now prove Theorem~\ref{thm:twospin_log_pras}.
  
\begin{thmthree}
Suppose that $\beta$, $\gamma$ and $\lambda$ are efficiently
approximable reals satisfying  
$\beta\geq 1> \gamma\geq 0$ and $\lambda\geq 1$.
There is a PRAS for $\PlanarLogTwoSpin(\beta,\gamma,\lambda)$.
\end{thmthree}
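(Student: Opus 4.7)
\bigskip

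\textbf{Proposal.}  The plan is to carry out Baker's decomposition strategy using the bounded-treewidth partition-function algorithm of Yin--Zhang~\cite{YZ12} on each piece, and then average.  Fix an accuracy parameter $\epsilon$ and a positive integer $k$ to be chosen.  Using the decomposition described just before the theorem statement, I compute a planar embedding, assign levels, and for each $i\in\{0,\dots,k-1\}$ form $G_i=G-V_i$.  Each $G_i$ is $(k-1)$-outerplanar and hence has treewidth at most $3k-4$, and a tree decomposition of this width can be found in polynomial time.  The algorithm of \cite[Theorem 2.8]{YZ12} then computes $Z_{\hatbeta,\hatgamma,\hatlambda}(G_i)$ exactly for each $i$, where $\hatbeta,\hatgamma,\hatlambda$ are rational approximations obtained from the given FPRASes with the small extra precision required to keep the multiplicative distortion in $Z$ below $e^{\epsilon/2}$.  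The output is $\frac{1}{k}\sum_{i=0}^{k-1}\log Z_{\hatbeta,\hatgamma,\hatlambda}(G_i)$ (plus a deterministic correction term, see below).

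The heart of the analysis is a two-sided comparison between $Z_{\beta,\gamma,\lambda}(G)$ and $Z_{\beta,\gamma,\lambda}(G_i)$.  For the lower bound, given any configuration $\sigma'$ on $V(G_i)$, extend it to $\sigma$ on $V(G)$ by assigning spin~$0$ to every vertex of $V_i$.  Since $\beta\geq 1$, $\gamma<1$, and $\lambda\geq 1$, this can only create additional $0\,$--$\,0$ edges (and no new $1\,$--$\,1$ edges or $1$-vertices), so $w_G(\sigma)\geq w_{G_i}(\sigma')$ and hence $Z(G)\geq Z(G_i)$.  For the upper bound, group the sum defining $Z(G)$ by the restriction $\sigma'=\sigma|_{V\setminus V_i}$; the inner sum over extensions $\sigma_i:V_i\to\{0,1\}$ has $2^{|V_i|}$ terms, each a product of at most $|V_i|$ vertex-factors (each $\leq\lambda$) and at most $|E_i|$ edge-factors (each $\leq\beta$, since $\max(\beta,1,\gamma)=\beta$), where $E_i$ is the set of edges of $G$ with at least one endpoint in $V_i$.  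This gives $Z(G)\leq Z(G_i)\cdot(2\lambda)^{|V_i|}\beta^{|E_i|}$, and combining with the lower bound yields
\begin{equation*}
0\;\leq\;\log Z(G)-\log Z(G_i)\;\leq\;|V_i|\log(2\lambda)+|E_i|\log\beta.
\end{equation*}

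Now I sum over $i$.  Because $V_0,\dots,V_{k-1}$ partition $V$, $\sum_i|V_i|=|V|$; and because each edge of $G$ has at most two endpoints, each edge lies in at most two of the sets $E_i$, so $\sum_i|E_i|\leq 2|E|\leq 6|V|$ by planarity.  Dividing by $k$,
\begin{equation*}
0\;\leq\;\log Z(G)\;-\;\frac{1}{k}\sum_{i=0}^{k-1}\log Z(G_i)\;\leq\;\frac{|V|}{k}\bigl(\log(2\lambda)+6\log\beta\bigr).
\end{equation*}
Combined with Lemma~\ref{lem:logbound}, which gives $\log Z(G)\geq(|V|/4)\log(1+\lambda)\geq(|V|/4)\log 2$, the relative error of the midpoint estimator $\frac{1}{k}\sum_i\log Z(G_i)+\frac{|V|}{2k}(\log(2\lambda)+6\log\beta)$ against $\log Z(G)$ is at most a constant (depending only on $\beta,\gamma,\lambda$) divided by $k$.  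Choosing $k=\lceil C(\beta,\gamma,\lambda)/\epsilon\rceil$ therefore makes the deterministic discretisation error at most $\epsilon/2$ relative to $\log Z(G)$, and the propagated error from using $\hatbeta,\hatgamma,\hatlambda$ in place of $\beta,\gamma,\lambda$ contributes at most another $\epsilon/2$, giving total relative error $\epsilon$ with probability at least $3/4$.  The running time is dominated by the $k$ calls to the bounded-treewidth algorithm, each of cost $|V|^{O(k)}=|V|^{O(1/\epsilon)}$, which is polynomial in $|V|$ for any fixed $\epsilon$ and hence a PRAS.

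The main obstacle to watch for is ensuring that the discretisation error really is $o(\log Z(G))$ rather than $o(|V|)$: Lemma~\ref{lem:logbound} gives the needed extensivity $\log Z(G)=\Omega(|V|)$, so the additive $O(|V|/k)$ bound converts to a genuine relative approximation; the other routine point is tracking the precision of $\hatbeta,\hatgamma,\hatlambda$ (say within a factor $e^{\pm\epsilon/(4|E|\log)}$ of $\beta,\gamma,\lambda$) so that the substitution perturbs $\log Z(G_i)$ by at most an additive $\epsilon/2$ times $\log Z(G)$, which again uses Lemma~\ref{lem:logbound} to convert an $O(|V|)$ perturbation into a relative one.
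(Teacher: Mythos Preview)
Your proposal is correct and follows the same overall strategy as the paper: Baker's level decomposition to obtain $(k-1)$-outerplanar pieces of treewidth $O(k)$, exact evaluation on each piece via Yin--Zhang~\cite{YZ12}, the sandwich $Z(G_i)\leq Z(G)\leq Z(G_i)\cdot(2\lambda)^{|V_i|}\beta^{|E_i|}$ obtained by extending configurations with all-$0$ on $V_i$, and Lemma~\ref{lem:logbound} to turn the additive $O(n/k)$ error into a relative one.

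The one genuine methodological difference is how you aggregate across the indices~$i$. The paper selects a \emph{single} index by a double pigeonhole argument (first restricting to $i$ with $|V_i|\leq 2n/k$, then among those choosing one with at most $4m/k$ incident edges), and outputs $\log Z(G_i)$ for that~$i$. You instead compute all $k$ partition functions and output the average $\tfrac1k\sum_i\log Z(G_i)$ plus a centring term, using $\sum_i|V_i|=n$ and $\sum_i|E_i|\leq 2|E|\leq 6n$ directly. Your averaging route is arguably cleaner (no pigeonhole bookkeeping) at the cost of $k$ calls to the treewidth solver rather than one; since each call is already $n^{O(k)}$, this does not affect the PRAS claim. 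Both arguments rely on Lemma~\ref{lem:logbound} in exactly the same way.
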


\begin{proof}

Consider input $G=(V,E)$ with at least~$3$ vertices and an accuracy parameter $\epsilon\in (0,1)$.
Let $n=|V|$ 
and 
$m=|E| \leq 3n$.
Let $\beta^+$, $\beta^-$, $\lambda^+$, $\lambda^-$, 
$\gamma^+$ and $\gamma^-$
be rational numbers (built into the algorithm) such that
$\beta^+ \geq \beta \geq \beta^-\geq 1$,
$1>\gamma^+ \geq \gamma \geq \gamma^-\geq 0$, and
$\lambda^+ \geq \lambda \geq \lambda^- \geq 1$.
Let $k$ be any integer satisfying
$$k \geq \frac
{32 \log (2 \lambda^+) + 96 \log(\beta^+)}
{\epsilon \log(1+\lambda^-)}.$$ 
Then let  
$$\delta = \frac{2 n \log(2 \lambda^+)}{k(n+m)}.$$

Using the given FPRASes for $\beta$, $\gamma$ and $\lambda$,
compute $\hatbeta$, $\hatgamma$ and $\hatlambda$
satisfying
$e^{-\delta} \beta \leq \hatbeta \leq e^{\delta} \beta$,
$e^{-\delta} \lambda \leq \hatlambda \leq e^\delta \lambda$ and
$e^{-\delta} \gamma \leq \hatgamma \leq e^\delta \gamma$.
As in the proof of  Lemma~\ref{lem:gadget}, 
adjust the output of the FPRASes to ensure 
$\beta^+ \geq \hatbeta \geq \beta^-$,
$\gamma^+ \geq \hatgamma \geq \gamma^-$, and
$\lambda^+ \geq \hatlambda \geq \lambda^-$.

The first step is to compute a value $\hatZ$
satisfying \begin{align}
\label{eq:bakerbound} 
\hatZ \leq Z_{\hatbeta,\hatgamma,\hatlambda}(G) \leq  {(2 \lambda^+)}^{2n/k} {(\beta^+)}^{12n/k}   \hatZ.
\end{align}
This step is accomplished  as follows.
\begin{enumerate}
\item Using Baker's algorithm, construct the graphs
${G}_i$  for $i\in \{0,\ldots,k-1\}$. Each of these has treewidth at most $3(k-1)-1$.
\item Choose $i\in \{0,\ldots,k-1\}$ as follows.
Let $\mathcal{I} = \{ i \mid\, |V_i| \leq 2n/k \}$.
Note that $|\mathcal{I}| \geq k/2$.
Now consider the $2m$ endpoints of edges in $E$.
Choose $i\in \mathcal{I}$
so that $V_i$ contains at most $(2m)/|\mathcal{I}|$ of these.
Note that $|V_i| \leq 2n/k$ and 
the number of edges with endpoints in $V_i$ is at most $4m/k\leq 12n/k$.
\item Use the algorithm of Yin and Zhang~\cite[Theorem 2.8]{YZ12}
to compute $\hatZ=Z_{\hatbeta,\hatgamma,\hatlambda}({G}_i)$.
The running time of Yin and Zhang's algorithm is at most 
the product of a polynomial in~$n$ and an exponential function in the
treewidth of~$G_i$. In order to apply the algorithm, we first express
the partition function $Z_{\hatbeta,\hatgamma,\hatlambda}(G_i)$ as
the solution to a Holant problem Holant$(\mathcal{G},\mathcal{F})$ with regular symmetric~$\mathcal{F}$.
The partition function of any $2$-spin system can 
be expressed as the partition function of a Holant problem --- see \cite{YZ12} for definitions and details.
\item Equation~\eqref{eq:bakerbound} now follows  
by noting that
$$Z_{\hatbeta,\hatgamma,\hatlambda}(G) = \sum_{\tau\colon\configs{V(G_i)}} w_{G_i}(\tau)
\sum_{\tau':\configs{V_i}} \hatlambda^{\ell(\tau')} {\hatbeta}^{b(\tau,\tau')} {\hatgamma}^{c(\tau,\tau')},$$
where $\ell(\tau')$ is the number of vertices $u\in V_i$ with $\tau'(u)=1$ and
$b(\tau,\tau')$ is  
the sum of the number of edges $(u,v)$ with $u\in V(G_i)$ and $v\in V_i$ and
  $\tau(u)=\tau'(v)=0$
and the number of edges $(u,v)$
with $u\in V_i$ and $v\in V_i$ and  $\tau'(u)=\tau'(v)=0$
and $c(\tau,\tau')$ is defined similarly (with spin~$1$). 
Then $\sum_{\tau':\configs{V_i}} \hatlambda^{\ell(\tau')} {\hatbeta}^{b(\tau,\tau')} {\hatgamma}^{c(\tau,\tau')}$
is at least~$1$ (since $\tau'$ can assign spin~$0$ to every vertex in~$V_i$)
and it is at most $ {2^{2n/k} (\hatlambda)}^{2n/k} {\hatbeta\,}^{12n/k}$.
\end{enumerate}

To finish, note that
$$e^{-\delta(n+m)} Z_{\beta,\gamma,\lambda}(G) \leq Z_{\hatbeta,\hatgamma,\hatlambda}(G) \leq
e^{\delta(n+m)} Z_{\beta,\gamma,\lambda}(G),$$
so
since 
$$\delta (n+m) \leq \frac{2n}{k} \log(2 \lambda^+) $$
and 
(from Lemma~\ref{lem:logbound})
$$\log Z_{\beta,\gamma,\lambda}(G) \geq (n/4) (1+\lambda^-),$$
and
$$(\epsilon/2) (n/4) \log(1+\lambda^-) \geq \frac{4n}{k} \log(2 \lambda^+)
+ \frac{12n}{k} \log(\beta^+),$$ 

\begin{align*}
e^{-\epsilon} \log(Z_{\beta,\gamma,\lambda}(G)) 
&\leq \log(Z_{\beta,\gamma,\lambda}(G))(1-\epsilon/2) \\
&\leq \log Z_{\beta,\gamma,\lambda}(G) - \delta(n+m) 
- \frac{2n}{k} \log(2 \lambda^+)
- \frac{12n}{k} \log(\beta^+)\\
& \leq \log Z_{\hatbeta,\hatgamma,\hatlambda}(G) - \frac{2n}{k} \log(2 \lambda^+)
- \frac{12n}{k} \log(\beta^+)\\
& \leq \log \hatZ.
\end{align*}

Similarly,
\begin{align*}
\log \hatZ &\leq \log Z_{\hatbeta,\hatgamma,\hatlambda}(G)\\
&\leq \delta(n+m) + \log Z_{\beta,\gamma,\lambda}(G)\\
& \leq \frac{2n}{k} \log(2 \lambda^+) + \log Z_{\beta,\gamma,\lambda}(G)\\
& \leq (1+\epsilon/2) \log Z_{\beta,\gamma,\lambda}(G)\\
& \leq e^\epsilon \log Z_{\beta,\gamma,\lambda}(G).
\end{align*}
\end{proof}

\bibliographystyle{plain}
\bibliography{\jobname}
 
\end{document}